\documentclass[pra,a4paper,preprintnumbers,reprint,nofootinbib,superscriptaddress]{revtex4-2}
\pdfoutput=1 

\usepackage[utf8]{inputenc}  
\usepackage{amsmath, amssymb, amsfonts, amsthm}  
\usepackage{mathtools}  

\usepackage{dsfont}  
\usepackage{physics}  
\usepackage{graphicx}  
\usepackage{tikz}  
\usetikzlibrary{quantikz2}  
\usepackage{pifont} 

\usepackage[T1]{fontenc}
\usepackage[british]{babel}
\usepackage[sort&compress]{natbib}

\usepackage[babel]{microtype}  
\usepackage{xcolor}  
\usepackage[dvipsnames,x11names]{xcolor}  
\usepackage[colorlinks=true, citecolor=OliveGreen, linkcolor=Purple, urlcolor=Magenta]{hyperref}  
\usepackage{comment}  

\usepackage{thmtools}
\usepackage{thm-restate}
\newtheorem{remark}{Remark}

\newtheorem{theorem}{Theorem}

\newtheorem{lemma}{Lemma}

\DeclarePairedDelimiter{\of}{\lparen}{\rparen}

\makeatletter 
\newcommand{\doublewidetilde}[1]{{%
  \mathpalette\double@widetilde{#1}%
}}
\newcommand{\double@widetilde}[2]{%
  \sbox\z@{$\m@th#1\map{#2}$}%
  \ht\z@=.9\ht\z@
  \map{\box\z@}%
}

\newcommand{\dket}[1]{\vert#1\rangle\hspace{-.8mm}\rangle}
\newcommand{\dbra}[1]{\langle\hspace{-.8mm}\langle #1\vert}
\newcommand{\dketbra}[2]{\vert #1 \rangle\!\rangle \langle\!\langle #2 \vert}

\newcommand{\map}[1]{\mathcal{#1}}

\renewcommand{\mathbb}{\mathds}  
\newcommand{\id}{\mathds{1}}  
 
\definecolor{LavenderBlue}{RGB}{95, 0, 255} 

\usepackage{soul,xcolor}

\newcommand{\C}{\mathbb{C}}
\renewcommand{\S}{\mathrm{S}}

\newcommand{\I}{\textup{I}}
\renewcommand{\O}{\textup{O}}

\begin{document}

\title{Optimal pure state cloning and transposition are complementary channels}

\author{Vanessa Brzi\'c}
\affiliation{Sorbonne Universit\'{e}, CNRS, LIP6, F-75005 Paris, France}
\author{Dmitry Grinko}
\affiliation{QuSoft, Amsterdam, The Netherlands}
\affiliation{Institute for Logic, Language and Computation, and Korteweg-de Vries Institute for Mathematics, University of Amsterdam, The Netherlands}
\author{Micha{\l} Studzi\'{n}ski}
\affiliation{International Centre for Theory of Quantum Technologies, 
University of Gda\'{n}sk, Prof. Marii Janion~4, 80-309~Gda\'{n}sk, Poland}
\author{Marco Túlio Quintino}
\affiliation{Sorbonne Universit\'{e}, CNRS, LIP6, F-75005 Paris, France}

\begin{abstract}
State cloning and state transposition are fundamental transformations which, despite being desirable, cannot be perfectly realised due to two conceptually distinct constraints of quantum theory: cloning is forbidden by linearity, while transposition is ruled out by complete positivity. In this work, we show that, despite these different constraints, the best physically allowed realisation of both transformations arises from a single physical process described by an isometry, which simultaneously implements their best possible approximations. We first determine the optimal fidelity for transforming $N$ qudits into $K$ copies of their transposition and show that, for pure input states, it is achieved by an estimation strategy, which is the unique optimal strategy under the worst-case fidelity figure of merit. We further prove that the corresponding $N\! \to \!K$ transposition map is the complementary channel of the optimal universal symmetric $N \!\to\! N + K$ quantum cloning machine on pure states. We then present an explicit quantum circuit that realises $N \!\to \!K$ transposition and $N \!\to\! N + K$ cloning in parallel and analyse its gate efficiency. Finally, we investigate mixed-state $N \!\to \!1$ qudit transposition and determine its maximal performance in terms of white-noise visibility, yielding the structural physical approximation of transposition in the multicopy regime.
\end{abstract}

\maketitle

The fundamental principles of quantum theory, such as linearity and complete positivity, give rise to a number of no-go theorems that challenge the expectations inherited from classical information processing. While copying
information is a basic primitive of classical computation, the perfect cloning of an unknown quantum state is forbidden by the no-cloning theorem, as a direct consequence of the linear structure of quantum mechanics \cite{1982Wotters_NCl, 1982Dieks}. These limitations, however, constitute defining features of quantum information processing. The no-cloning theorem forms a cornerstone of security proofs in quantum cryptography, ensuring that information encoded in quantum systems cannot be copied without introducing detectable disturbances \cite{Bennett_2014}. 
Likewise, the transposition map plays an important role in entanglement theory, through the positive partial transpose (PPT) criterion for entanglement detection \cite{Horodecki1996PPT,Horodecki2000direct} , and is known to allow better performance in estimation tasks~\cite{Gisin1999Antiparallel, Demkowicz2005Estm, Miyazaki_2020}. However, transposition of a quantum state is a paradigmatic example of a positive but not completely positive map, and therefore cannot be physically realised as a quantum channel \cite{Horodecki1996PPT,1996Peres_Sep}. 

The problem of circumventing such no-go theorems by identifying the best physically admissible approximations to unphysical transformations has been extensively studied in quantum information theory \cite{1996Buzek_qubit12_cl, 1997Gisin_qubitNM_cl, Werner1998_cloning, 1999Keyl_cl,Scarani2005Clon,Dong2019positive, Fan2014Cloning, Horodecki2003multicopy, Bae2017,Buzek1999UniNOT,Rungta2001NOT,1998Bruss,Bruss1998state_dependent}.
In the case of pure-state inputs, optimal universal $N\!\to\!N+K$ quantum cloning machines have been characterised for arbitrary finite dimension $d$~\cite{Werner1998_cloning,1999Keyl_cl}. On the other hand, the optimal performance for pure-state transposition is only known in the qubit $N\to1$ scenario, a result that follows directly from the unitary equivalence between transposition and the universal NOT \cite{Buzek1999UniNOT}.
When considering mixed states, the quantum cloning problem is often referred to as quantum broadcasting~\cite{Barnum1996Broadcast}, and most of the research is restricted to qubits~\cite{DAriano2005Superbroadcasting,Buscemi2006Superbroadcasting,Chiribella_2007Superbroad}. 
In particular, Ref.~\cite{Chiribella_2007Superbroad} analysed the optimal qubit $N\!\to\!K$ broadcast NOT, which is unitarily equivalent to mixed-state $N\!\to\!K$ transposition for qubits, while qudit $N\!\to\!1$ mixed-state transposition was studied in Ref.~\cite{Dong2019positive}.

In this work, we address the problem of optimal approximation of quantum state transposition and its relationship with quantum cloning. We consider quantum channels that approximate the transformation of $N$ copies of an arbitrary $d$-dimensional quantum state into $K$ copies of its transposed state, thereby defining an $N\!\to\!K$ transposition task. For pure input states, we identify the optimal $N\!\to\!K$ transposition channel and show that it is the complementary channel of the optimal universal $N\!\to\!N\!+\!K$ quantum cloning machine. That is, there exists a single isometric channel that simultaneously realises optimal state transposition and optimal state cloning. This extends previous results relating $1\!\to\!2$ cloning and $1\!\to\!1$ transposition \cite{1996Buzek_qubit12_cl,2003Buscemi_T}, where transposition was identified as the \textit{anticlone}.
Then, inspired by~\cite{Mancinska2025classification}, we present an explicit circuit that simultaneously attains optimal transposition and symmetric cloning and discuss its gate efficiency.
Lastly, we address the problem of optimal approximate $N\!\to\!1$ transposition for mixed input states and determine the optimal performance in terms of the maximal visibility corresponding to the structural physical approximation. This result is obtained by combining tools from representation theory, such as Jucys–Murphy elements \cite{jucysSymmetricPolynomialsCenter1974,murphyNewConstructionYoungs1981}, with results on $N$-copy completely positive extensions from Ref.~\cite{Dong2019positive}. 

{\it Optimal $N\!\to\!K$  pure state transposition ---}%
The transposition map $\map{T}:\mathcal{L}(\mathbb{C}^d)\to\mathcal{L}(\mathbb{C}^d)$ is defined by $\map{T}(\rho)=\rho^\mathsf{T}$, where the transpose is taken in the computational basis, i.e., $\map{T}(\ketbra{i}{j})=\ketbra{j}{i}$.
Since transposition is not completely positive (CP)~\cite{Horodecki1996PPT,1996Peres_Sep}, it cannot be implemented as a quantum channel, i.e., as a completely positive and trace-preserving (CPTP) map. Our goal is then to find a quantum channel that approximates the transposition map in the multi-copy regime. More precisely, we are interested in a CPTP map $\map{C}:\mathcal{L}(\mathbb{C}^d)^{\otimes N}\to\mathcal{L}(\mathbb{C}^d)^{\otimes K}$  that approximately converts $N$ copies of an arbitrary pure state into $K$ copies of its transposition, i.e., $\map{C}(\ketbra{\psi}^{\otimes N}) \approx ({\ketbra{\psi}^\mathsf{T}})^{\otimes K}$. We quantify the quality of the approximation via state fidelity, where the fidelity between an arbitrary density matrix $\rho$ and a pure state $\ketbra{\psi}$ is given by $F(\rho,\ketbra{\psi}) := \Tr(\rho \ketbra{\psi})$. More precisely, we consider the average fidelity over all pure states under the uniform distribution induced by the Haar measure. Finally, since quantum states are self-adjoint operators, it holds that $\rho^\mathsf{T}=\overline{\rho}$, where $\overline{\rho}$ is the complex conjugation of $\rho$ in the computational basis%
\footnote{In particular, if $A=\sum_{ij} \gamma_{ij}\ketbra{i}{j}$, then its complex conjugation is given by $\overline{A}= \sum_{ij} \overline{\gamma_{ij}}\ketbra{i}{j}$.}.
In other words, an operation that implements transposition also implements complex conjugation and vice versa.%

Our first main result characterises the optimal average fidelity achievable by completely positive and trace-preserving maps from $N$ input copies to $K$ output copies. The optimal approximation is achieved by an entanglement-breaking channel~\cite{Horodecki_2003}, corresponding to an estimation strategy. This result generalizes the optimal universal NOT construction~\cite{Buzek1999UniNOT}, recovering the known fidelity $(N+1)/(N+d)$ for $d=2$ and $K=1$.

\begin{restatable}{theorem}{TransNK}\label{thm:TransNK}
The maximum average fidelity for approximating the transpose map from $N$ input copies to $K$ output copies on pure states is given by
\begin{equation}\label{eq:fid_trans}
    \max_{\map{C}
    \in\mathrm{CPTP}}
    \int \! \mathrm{d}\psi
    F\!\Big(\mathcal{C}\left(\ketbra{\psi}^{\otimes N}\right),\,{\ketbra{\psi}^\mathsf{T}}^{\otimes K}\Big)\!=\!\frac{d_S^{N}}{d_S^{N+K}}
\end{equation}
where $d_S^N:=\dim(\mathrm{sym}^N(\mathbb{C}^d))=\binom{N+d-1}{d-1}$. 

The optimal performance is attainable by an estimation strategy, which acts on the symmetric subspace as ${\map{T}_\textup{CP}: \mathcal{L}\left(\mathrm{sym}^N\left(\mathbb{C}^d\right)\right)\to\mathcal{L}(\mathbb{C}^{d})^{\otimes K}}$
\begin{align} \label{eq:T_CP}
    \mathcal{T}_\textup{CP}(\rho)
    :=& d_S^{N}
    \int \mathrm{d}\psi
    \Tr\!\left[\rho \ketbra{\psi}^{\otimes N}\right]\,{\ketbra{\psi}^{\mathsf{T}}}^{\otimes K} \\
    =&  \frac{d_S^{N}}{d_S^{N+K}} \Tr_{1\ldots N}\left[ \Pi^{(N+K)}_\textup{sym} \, \of*{\rho^\mathsf{T} \otimes \id_d^{\otimes K} }\right],
\end{align}
where $\Pi^{(N+K)}_{\mathrm{sym}}$ is the projector onto the $N+K$ symmetric subspace, $\id_d$ is the identity in $\mathbb{C}^d$, and $\Tr_{1\ldots N}$ is the partial trace on the first $N$ $d$-dimensional subsystems.
\end{restatable}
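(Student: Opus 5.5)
We write the average fidelity as a linear functional of the Choi operator of $\map{C}$ and bound it by an operator-norm argument; a matching estimation strategy then shows the bound is tight. Using $\ketbra{\psi}^{\mathsf{T}}=\overline{\ketbra{\psi}}$, the figure of merit is
\begin{equation*}
\int\!\mathrm{d}\psi\,\Tr\!\left[\map{C}(\ketbra{\psi}^{\otimes N})\,\overline{\ketbra{\psi}}^{\otimes K}\right]=\Tr\!\left[J_{\map{C}}\,\Omega\right],
\end{equation*}
where $J_{\map{C}}$ is the Choi operator (input transposed) and
\begin{equation*}
\Omega=\int\!\mathrm{d}\psi\,\overline{\ketbra{\psi}}^{\otimes N}\otimes\overline{\ketbra{\psi}}^{\otimes K}=\overline{\int\!\mathrm{d}\psi\,\ketbra{\psi}^{\otimes (N+K)}}=\frac{\Pi^{(N+K)}_{\mathrm{sym}}}{d_S^{N+K}},
\end{equation*}
since the projector is real. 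This is the key point: transposing the input in the Choi representation turns the conjugated outputs into an ordinary $(N+K)$-fold symmetric twirl. It is exactly the same $\Omega$ that appears in the analysis of universal $N\to N+K$ cloning, which anticipates the complementarity result.

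\textbf{Upper bound.} Because the input is supported on $\mathrm{sym}^N$, we may restrict $J_{\map{C}}$ to $\mathrm{sym}^N\otimes(\C^d)^{\otimes K}$. Trace preservation gives $\Tr J_{\map{C}}=d_S^N$ (after compressing to the symmetric input space). Hence
\begin{equation*}
\Tr[J_{\map{C}}\Omega]\le \|\Omega\|_\infty\,\Tr J_{\map{C}}=\frac{d_S^N}{d_S^{N+K}}.
\end{equation*}
The only thing to check here is that the normalization is handled correctly: the average $\int \ketbra{\psi}^{\otimes N}=\Pi^{(N)}_{\mathrm{sym}}/d_S^N$, so only the symmetric block of the input matters, and on that block trace preservation fixes the trace of $J$ to $d_S^N$.

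\textbf{Achievability.} For $\map{T}_\textup{CP}$ as defined, first check that it is CPTP. It is a measure-and-prepare channel built from the POVM $\{d_S^N\ketbra{\psi}^{\otimes N}\mathrm{d}\psi\}$, which sums to $\Pi^{(N)}_{\mathrm{sym}}$, so complete positivity and trace preservation on the symmetric subspace follow. Its fidelity is
\begin{equation*}
d_S^N\int\!\!\int \mathrm{d}\psi\,\mathrm{d}\phi\,|\langle\psi|\phi\rangle|^{2(N+K)}=\frac{d_S^N}{d_S^{N+K}},
\end{equation*}
using $\int\mathrm{d}\phi\,|\langle\psi|\phi\rangle|^{2M}=1/d_S^M$.

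\textbf{Second expression.} To obtain the second form of $\map{T}_\textup{CP}$, insert
\begin{equation*}
\int\!\mathrm{d}\psi\,\ketbra{\psi}^{\otimes N}\otimes\overline{\ketbra{\psi}}^{\otimes K}
\end{equation*}
and use $\Tr[\rho\ketbra{\psi}^{\otimes N}]=\Tr[\rho^{\mathsf{T}}\,\overline{\ketbra{\psi}}^{\otimes N}]$. This gives
\begin{equation*}
d_S^N\,\Tr_{1\ldots N}\!\left[(\rho^{\mathsf{T}}\otimes\id^{\otimes K})\int\overline{\ketbra{\psi}}^{\otimes(N+K)}\right],
\end{equation*}
and $\int\overline{\ketbra{\psi}}^{\otimes(N+K)}=\Pi^{(N+K)}_{\mathrm{sym}}/d_S^{N+K}$. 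Finally, $\rho^{\mathsf{T}}$ and $\Pi$ may be reordered inside the partial trace by cyclicity on the traced systems.
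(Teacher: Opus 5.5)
Your proof is correct, and for the upper bound it takes a more elementary route than the paper.

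\textbf{Upper bound.} The paper writes the problem as an SDP, derives the dual via a Lagrangian, and certifies optimality with the dual point $X=\Pi^{(N)}_{\mathrm{sym}}/d_S^{N+K}$. It proves the feasibility condition $\Pi^{(N)}_{\mathrm{sym}}\otimes\id_d^{\otimes K}\ge\Pi^{(N+K)}_{\mathrm{sym}}$ using the Pieri rule for matrix units. You instead compress with $P:=\Pi^{(N)}_{\mathrm{sym}}\otimes\id_d^{\otimes K}$. Since $P\Omega P=\Omega$, you get $\Tr(J\Omega)=\Tr(PJP\,\Omega)\le\|\Omega\|_\infty\Tr(PJP)$, and $\Tr(PJP)=\Tr(\Pi^{(N)}_{\mathrm{sym}}\Tr_{\O}(J))=d_S^N$.

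This is the same certificate in disguise, because $\|\Omega\|_\infty P=X\otimes\id_d^{\otimes K}$. Your version, however, needs neither duality nor representation theory. It uses only the inclusion $\mathrm{sym}^{N+K}(\C^d)\subseteq\mathrm{sym}^N(\C^d)\otimes(\C^d)^{\otimes K}$. That inclusion is also all the paper's operator inequality actually requires, since both sides are projectors. The equality case of your H\"older step also shows at once that every optimal channel has $PJP$ supported on $\mathrm{sym}^{N+K}(\C^d)$, which plays the role of complementary slackness.

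\textbf{Achievability.} Here the two proofs run in opposite directions. The paper posits the Choi operator $\frac{d_S^N}{d_S^{N+K}}\Pi^{(N+K)}_{\mathrm{sym}}+(\id-\Pi^{(N)}_{\mathrm{sym}})\otimes\sigma$. It checks $\Tr_{\O}(J)=\id$ via $\Tr_{\O}(\Pi^{(N+K)}_{\mathrm{sym}})=\frac{d_S^{N+K}}{d_S^N}\Pi^{(N)}_{\mathrm{sym}}$, and only afterwards recognises the Hayashi measure-and-prepare form. You start from the POVM and evaluate the fidelity directly with $\int\mathrm{d}\phi\,|\langle\psi|\phi\rangle|^{2M}=1/d_S^M$. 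The paper's direction has one advantage: that explicit Choi operator is exactly what it later dilates to build the joint cloning/transposition isometry.

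\textbf{One small fix.} The maximum is over CPTP maps on all of $\mathcal{L}(\C^d)^{\otimes N}$, but your POVM only resolves $\Pi^{(N)}_{\mathrm{sym}}$. You should complete it with the element $\id-\Pi^{(N)}_{\mathrm{sym}}$ and prepare a fixed state on that outcome, as in the paper's $\map{T}^{\mathrm{ext}}_{\mathrm{CP}}$. This outcome never occurs for inputs $\ketbra{\psi}^{\otimes N}$, so the fidelity is unchanged.
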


The proof of Theorem~\ref{thm:TransNK} is given in Appendix~\ref{app:NKtrans}, and it relies on the fact that the maximisation problem in Eq.~\eqref{eq:fid_trans} can be formulated as a semidefinite program with strong duality.  Also in Appendix~\ref{app:NKtrans}, we prove that, without loss in performance, the optimisation can be restricted to covariant quantum channels, namely CPTP maps $\map{C}:\mathcal{L}(\mathbb{C}^{d})^{\otimes N}\to\mathcal{L}(\mathbb{C}^{d})^{\otimes K}$ satisfying 
\begin{align} \label{eq:cov1}
\mathcal{C}\!\left(U^{\otimes N} \rho (U^{\otimes N})^{\dagger}\right)
&= \overline{U}^{\otimes K}\, \mathcal{C}(\rho)\, (U^\mathsf{T})^{\otimes K},
\end{align}
for every unitary $U\in \mathrm{SU}(d)$ and every $\rho\in\mathcal{L}(\mathbb{C}^d)^{\otimes N}$, as well as
\begin{align} \label{eq:cov2}
\mathcal{C}\!\left(V_\pi \rho V_\pi^\dagger\right)
&= V_\sigma\, \mathcal{C}(\rho)\, V_\sigma^\dagger,
\end{align}
for all permutations $\pi\in S_N$ and $\sigma\in S_K$. Here, $V_\pi$ and $V_\sigma$ denote the unitary representations of the symmetric groups acting on $(\mathbb{C}^d)^{\otimes N}$ and $(\mathbb{C}^d)^{\otimes K}$, respectively.

\begin{remark}
For quantum channels respecting the covariance relations of Eq.~\eqref{eq:cov1}, the fidelity
\( F \, \!\big(\mathcal{C}(\ketbra{\psi}^{\otimes N}),\,(\ketbra{\psi}^\mathsf{T})^{\otimes K}\big) \)
does not depend on the state $\ket{\psi}$. 
Hence, for such a class of problems, the optimal average-case fidelity coincides with the optimal worst-case fidelity, a property noted in various problems with analogous symmetry properties~\cite{holevoBook,chiribella2005estimation,Werner1998_cloning,Mancinska2025classification,grinko2024linear,Grinko2023,Quintino2022deterministic,brzic2025HOQC_KN,ishizaka2009quantum,IshizakaHiroshima,studzinski2017port,mozrzymas2018optimal,studzinski2021degradation,studzinski2020efficient,kopszak2020multiport,mozrzymas2021optimal}. 

Also, as proven in the Appendix~\ref{app:NKtrans}, when restricted to covariant channels acting on the symmetric subspace, the estimation channel of Eq.~\eqref{eq:T_CP} is the unique solution to the optimisation problem in Eq.~\eqref{eq:fid_trans}. In the Appendix~\ref{app:NKtrans}, we also prove that, if we set worst-case fidelity as figure of merit, the estimation channel of Eq.~\eqref{eq:T_CP} is the unique channel that attains optimal performance even if we do not impose covariance.
\end{remark}

As mentioned earlier, the map $\map{T}_\textup{CP}$ presented in Eq.~\eqref{eq:T_CP} is an entanglement breaking channel~\cite{Horodecki_2003}, attained via an estimation strategy, where one performs the ``Hayashi measurement''~\cite{Hayashi1998estimation,2013Harrow_church}, which has POVM elements given by $M_{\psi}:=d_S^{N} \ketbra{\psi}^{\otimes N}\mathrm{d}\psi$, and when the outcome corresponding to $\ket{\psi}$ is obtained, we prepare the state \small 
$\of[\large]{\ketbra{\psi}^\mathsf{T}}^{\otimes K}\!$.
\normalsize
Also, while the map $\map{T}_\textup{CP}$  is only defined when its domain is restricted to  the symmetric subspace $ \mathcal{L}\left(\mathrm{sym}^N\left(\mathbb{C}^d\right)\right)$, its domain can be extended in a CPTP manner to arbitrary states.
One such CPTP extension $\map{T}_\textup{CP}^\textup{ext}:\mathcal{L}(\mathbb{C}^d)^{\otimes N}\to\mathcal{L}(\mathbb{C}^d)^{\otimes K}$ is given by
\begin{align} \label{eq:T_CP^ext}
\mathcal{T}^\textup{ext}_\textup{CP}(\rho)
:=\;&
\frac{d_S^{N}}{d_S^{N+K}} \Tr_{1\ldots N}\!\left( \Pi^{(N+K)}_\textup{sym}\, (\rho^\mathsf{T} \otimes \id_d^{\otimes K}) \right) \\
&\quad + \Tr\!\left[\left(\id_d^{\otimes N} - \Pi^{(N)}_\textup{sym}\right)\, \rho \right] \sigma,
\end{align}
where $\sigma\in\mathcal{L}(\mathbb{C}^d)^{\otimes K}$ is an arbitrary quantum state.

To connect the multicopy transposition protocol with single-copy approximations of transposition and structural physical approximations (SPA)~\cite{Horodecki2003multicopy,Bae2017}, we examine the reduced action of the optimal transposition channel $\mathcal{T}_\textup{CP}$ defined in Eq.~\eqref{eq:T_CP}. In particular, the reduced state of any single output system takes the form
\begin{equation} \label{eq:shrinking}
\Tr_{\overline{O_i}}\!\big(
\mathcal{T}_\textup{CP}(\ketbra{\psi}^{\otimes N})
\big)
=
\eta\,\ketbra{\psi}^\mathsf{T}
+
(1-\eta)\frac{\mathbb{1}}{d}.
\end{equation}
Here, $\Tr_{\overline{O_i}}$ denotes the partial trace on all output systems except the $i$-th one. Owing to permutation covariance, the reduced state in Eq.~\eqref{eq:shrinking} is independent of the choice of output system $i$. In Appendix~\ref{app:NKtrans}, we show that $\eta = \frac{N}{N+d}$, which is independent of the number of output copies $K$.
Furthermore, under the covariance assumptions, the channel $\mathcal{T}_\textup{CP}$ maximises the visibility $\eta$ appearing in Eq.~\eqref{eq:shrinking}. 
We notice that the coefficient $\eta = \frac{N}{N+d}$ is then the $N\! \to \! K$ transposition analogue of the so-called ``black cow factor'' for optimal $N\! \to \! N+K $ cloning~\cite{Werner1998_cloning,1998Bruss} discussed in the following. 

{\it Complementarity between transposition and cloning ---}%
While arbitrary pure states cannot be perfectly cloned~\cite{1982Wotters_NCl}, optimal approximate cloning devices are known~\cite{1996Buzek_qubit12_cl,1997Gisin_qubitNM_cl,Werner1998_cloning,1999Keyl_cl,1998Bruss,Scarani2005Clon,Buzek1998cl}. In particular, the optimal universal symmetric cloning map that transforms $N$ copies of a qudit state into $N+K$ copies was derived by Werner in Ref.~\cite{Werner1998_cloning}. When restricted to operators acting on the symmetric subspace, its action is described by the completely positive map $
\map{W}_\textup{CP}: \mathcal{L}\!\left(\mathrm{sym}^N(\mathbb{C}^d)\right)\to \mathcal{L}\!\left(\mathbb{C}^d \right)^{\otimes (N+K)},$
defined as
\begin{align}\label{eq:cl_map}
\mathcal{W}_{\textup{CP}}(\rho)
&:=
\frac{d_S^{N}}{d_S^{N+K}}\,
\Pi^{(N+K)}_{\mathrm{sym},\,\mathrm I'\mathrm O'}\,
\big(\rho\otimes \mathbb{1}_d^{\otimes K}\big)\,
\Pi^{(N+K)}_{\mathrm{sym},\,\mathrm I'\mathrm O'},
\end{align}
where, $\Pi^{(N+K)}_{\mathrm{sym}}$ is the projector onto the $(N+K)$ symmetric subspace. As shown in Refs.~\cite{Werner1998_cloning,1999Keyl_cl}, the channel in Eq.~\eqref{eq:cl_map} is the solution of the optimisation problem
\begin{equation}\label{eq:fid_clone}
    \max_{\map{C}\in\mathrm{CPTP}}
    \int \! \mathrm{d}\psi\,
    F \of[\big]{\mathcal{C} \of[\Large]{\ketbra{\psi}^{\otimes N}},\ketbra{\psi}^{\otimes (N+K)}},
\end{equation}
when the action of $\mathcal{C}$ is restricted to operators on the symmetric subspace. The optimal average fidelity attained by this channel is $\frac{d_S^{N}}{d_S^{N+K}}$, independently of the input pure state $\ketbra{\psi}$.

Our second main result establishes that optimal universal symmetric pure-state cloning from $N$ to $N+K$ copies and optimal pure-state transposition from $N$ to $K$ copies are complementary channels~\cite{Holevo2005ComplChannels,DevetakShor2005Capacity} for any dimension $d$ and any $N$ and $K$. This shows that approximate cloning and approximate transposition are not merely related tasks, but can be simultaneously realised within a single physical process.

\begin{restatable}{theorem}{ComplTC}\label{thm:ComplTC}
For any dimension $d$, the optimal universal symmetric pure state cloning from $N$ to $N+K$ copies is the complementary channel to the optimal pure-state transposition channel from $N$ to $K$ copies.

More precisely, let 
$\mathcal{H}_\I\cong \mathrm{sym}^N\left(\mathbb{C}^d\right)$, $\mathcal{H}_{\I'} \cong  (\mathbb{C}^{d})^{\otimes N}$, and $\mathcal{H}_\O\cong \mathcal{H}_{\O'} \cong (\mathbb{C}^{d})^{\otimes K}$ be linear spaces. There exists an isometry
$V:\mathcal{H}_{\mathrm I}\to
\mathcal{H}_{\mathrm I'}\otimes\mathcal{H}_{\mathrm O'}\otimes\mathcal{H}_{\mathrm O}$, such that, for any operator $\rho\in \mathcal{L}(\mathcal{H}_\I)$
\begin{align}
   \map{T}_\textup{CP}(\rho) =& \Tr_{\mathrm O'\mathrm I'}(V\rho V^\dagger)\\
   \map{W}_\textup{CP}(\rho) =& \Tr_{\mathrm O}(V\rho V^\dagger)
\end{align}
where $ \map{T}_\textup{CP}$ is defined in Eq.~\eqref{eq:T_CP} and $\map{W}_\textup{CP}$ is defined in Eq.~\eqref{eq:cl_map}.
\end{restatable}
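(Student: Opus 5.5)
The plan is to write down an explicit Stinespring isometry. Its $\mathrm O$-part purifies the $K$ blank inputs of Werner's cloner through maximally entangled pairs, so that the transpose appears automatically when $\mathrm O$ is kept instead of $\mathrm I'\mathrm O'$.

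\textbf{Construction.} Let $\ket{\phi^+}:=\sum_{i=1}^d \ket{ii}$ be the unnormalised maximally entangled vector on $\mathbb{C}^d\otimes\mathbb{C}^d$, and set $c:=\sqrt{d_S^{N}/d_S^{N+K}}$. Regard $\mathcal{H}_\I=\mathrm{sym}^N(\mathbb{C}^d)\subset\mathcal{H}_{\I'}$. Define
\begin{equation*}
V\ket{\psi}:=c\,\big(\Pi^{(N+K)}_{\mathrm{sym},\I'\O'}\otimes\id_{\O}\big)\big(\ket{\psi}_{\I'}\otimes\ket{\phi^+}^{\otimes K}_{\O'\O}\big),
\end{equation*}
where the $j$-th copy of $\ket{\phi^+}$ pairs the $j$-th qudit of $\O'$ with the $j$-th qudit of $\O$. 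Write $\Phi:=(\ketbra{\phi^+})^{\otimes K}$; since $\Tr_{\O}\Phi=\id_d^{\otimes K}$ on $\O'$, we have $V\rho V^\dagger=c^2\,\Pi\,(\rho\otimes\Phi)\,\Pi$.

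\textbf{Step 1 (cloning side and isometry).} Tracing out $\O$ gives
\begin{equation*}
\Tr_\O(V\rho V^\dagger)=c^2\,\Pi\,(\rho\otimes\id_d^{\otimes K})\,\Pi=\map{W}_\textup{CP}(\rho),
\end{equation*}
as in Eq.~\eqref{eq:cl_map}. To show that $V$ is an isometry, it suffices that $\map{W}_\textup{CP}$ is trace preserving on $\mathcal{L}(\mathrm{sym}^N(\mathbb{C}^d))$. Indeed, then $\Tr(V^\dagger V\rho)=\Tr\rho$ for all $\rho$ on $\mathcal{H}_\I$, which forces $V^\dagger V=\id_{\mathcal{H}_\I}$. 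Trace preservation reduces to the standard identity
\begin{equation*}
\Tr_{\O'}\Pi^{(N+K)}_{\mathrm{sym}}=\frac{d_S^{N+K}}{d_S^N}\,\Pi^{(N)}_{\mathrm{sym}}.
\end{equation*}
This holds by Schur's lemma: the left side is $U^{\otimes N}$- and permutation-invariant and is supported on $\mathrm{sym}^N$, so it is a multiple of $\Pi^{(N)}_{\mathrm{sym}}$, and the constant is fixed by taking the full trace. This is also Werner's normalisation, so it may simply be quoted.

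\textbf{Step 2 (transposition side).} Tracing out $\I'\O'$, the projector $\Pi$ acts only on the traced systems, so cyclicity of the partial trace gives
\begin{equation*}
\Tr_{\I'\O'}(V\rho V^\dagger)=c^2\,\Tr_{\I'\O'}\!\big[\Pi(\rho\otimes\id)\,\Phi\big].
\end{equation*}
Next apply the transpose trick $\Tr_{\O'}[X_{\O'}\Phi_{\O'\O}]=X^{\mathsf T}$, relabelled from $\O'$ to $\O$, with $X=\Tr_{\I'}[\Pi(\rho\otimes\id)]$. This yields $c^2\big(\Tr_{\I'}[\Pi(\rho\otimes\id)]\big)^{\mathsf T}$. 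Transposition commutes with the partial trace, and $\Pi^{\mathsf T}=\Pi$ because the permutation operators are real in the computational basis. Hence the expression equals $c^2\Tr_{\I'}[(\rho^{\mathsf T}\otimes\id)\Pi]$. Since $\rho^{\mathsf T}\otimes\id$ acts nontrivially only on the traced factor $\I'$, cyclicity gives $c^2\Tr_{\I'}[\Pi(\rho^{\mathsf T}\otimes\id)]$. This is exactly the second line of Eq.~\eqref{eq:T_CP}, which Theorem~\ref{thm:TransNK} identifies with $\map{T}_\textup{CP}$.

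\textbf{Expected obstacle.} The main risk is the bookkeeping in Step 2: tracking which factors the transpose and the two cyclicity moves act on, including the ordering and pairing of the $K$ maximally entangled pairs between $\O'$ and $\O$. The remaining checks are routine.
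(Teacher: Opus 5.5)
Your proof is correct and is essentially the paper's argument. The paper defines the isometry through its Choi vector $\sqrt{d_S^N/d_S^{N+K}}\,(\id\otimes\Pi^{(N+K)}_{\mathrm{sym},\I\O})\dket{1}$, which becomes exactly your $V$ once the real symmetric projector is moved across the maximally entangled vector; the only differences are that the paper runs the argument in the opposite direction (purifying the transposition Choi operator, then deriving Werner's map), whereas you start from Werner's dilation, derive the transposition side, and also verify $V^\dagger V=\id$ explicitly, which the paper leaves implicit.
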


The proof of the above theorem is contained in Appendix~\ref{appendix:C}.
Theorem~\ref{thm:ComplTC} generalises the idea of clones and anti-clones presented in Refs.~\cite{1996Buzek_qubit12_cl,Buzek1998cl}, which observed that the auxiliary system required by the unitary dilation realising the qubit $1$-to-$2$ cloning map contains a noisy version of a state that is orthogonal to the state. This property was generalised to qudits in Ref.~\cite{2003Buscemi_T}, which shows that, for any dimension $d$, optimal $N=1$ to $K=1$ state transposition and optimal $1$-to-$2$ universal symmetric cloning are complementary channels. Moreover, Thm.~\ref{thm:ComplTC} provides additional insight into the problem of $N\to N+K$ telecloning~\cite{Murao1999telecloning,Dur2000telecloning}, as it can be used to show that the leftover state on Alice’s side necessarily corresponds to the output of the optimal $N\! \to\! K$ transposition channel.

Since transposition is a basis dependent map, one may wonder what would change if we consider transposition in another basis. The transposition in an arbitrary basis $\{U\ket{i}\}_i$ is defined as $\map{T}_U(\rho):=U\,\map{T}(U^\dagger \rho U)\,U^\dagger$. Using the covariance property of the transposition map, one finds
\begin{align}
\map{T}_U(\rho)
&= U U^\mathsf{T}\,\map{T}(\rho)\,\overline{U}U^\dagger .
\end{align}
Defining $A:=UU^\mathsf{T}\in\mathrm{SU}(d)$, this can be written as $\map{T}_U(\rho)=A\,\map{T}(\rho)\,A^\dagger$, hence the transposition in another basis is just the transposition on the computational basis followed by a unitary operation.

Lastly, we recall that a Stinespring dilation is unique only up to an isometry acting on the auxiliary system. Hence, it follows from Thm.~\ref{thm:ComplTC} that a quantum channel is complementary to  $\map{W}_\textup{CP}$ iff it can be written as  $H\,\map{T}_\textup{CP}(\rho)\,H^\dagger$ where $H$ is an arbitrary isometry. Analogous remark can be done for the complementary channel of optimal transposition.

{\it Quantum circuit implementation of optimal transposition and cloning ---} %
The recent work \cite{Mancinska2025classification} studied circuit constructions of general unitary equivariant and permutation invariant channels, including the optimal cloning and transposition channels presented here. Inspired by~\cite{Mancinska2025classification}, in Fig.~\ref{fig:circuit}, we describe a quantum circuit that simultaneously realises optimal pure state cloning and transposition.
\begin{figure}[t!]
\centering
\includegraphics[width=\columnwidth]{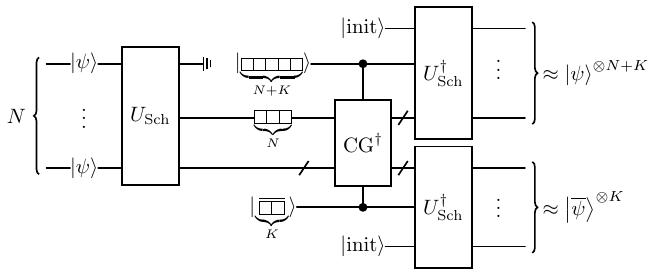}
\caption{ A circuit realisation for simultaneous optimal pure state cloning and transposition/conjugation. As detailed in the main text, $U_\text{Sch}$ is the unitary Schur transformation and CG stands for the Clebsch--Gordan transform. The top $N+K$ output wires of the output correspond to cloning (when other bottom wires are traced out), and bottom $K$ wires of the output correspond to transposition/conjugation (when other top wires are traced out).} 
\label{fig:circuit}
\end{figure}

In the following, we assume knowledge of basic representation theory, a more detailed discussion is presented in the Appendix~\ref{app:rep-theory}.
Fig.~\ref{fig:circuit} depicts an explicit Stinespring isometry that simultaneously realises the optimal universal pure-state \(N\!\to\!N{+}K\) cloner and its complementary channel, which is the optimal universal transposition map.
The left block \(U_{\mathrm{Sch}}\) applies the Schur transform, decomposing the input \(\ket{\psi}^{\otimes N}\) into a direct sum of \(U(d)\times S_N\) irreducible representations, where $U(d)$ is the unitary group and $S_N$ the permutation group. The \(S_N\) irrep register is traced out while other registers represent Young-diagram, carrying label \(\lambda\), and the corresponding \(U(d)\) irrep. For input states  \(\ket{\psi}^{\otimes N}\), Young-diagram register carries only the symmetric sector.
The middle gate \(\mathrm{CG}^\dagger\) is the inverse Clebsch--Gordan transform for \(U(d)\), implementing the unique \(U(d)\)-covariant intertwiner that yields the optimal cloning isometry. Concretely, $\mathrm{CG}^\dagger$ decomposes $(N)$ into $\overline{(K)}\otimes (N{+}K)$, where $(N)$ and $(K)$ denote the totally symmetric irreducible representations of $U(d)$,
and $\overline{(K)}$ denotes the dual irrep of $(K)$.
Finally, the two \(U_{\mathrm{Sch}}^\dagger\) blocks, with fixed symmetric-group ancillary states \(\ket{\mathrm{init}}\), map back to the computational basis, producing the \(N{+}K\) clone outputs (top) and the \(K\) copies of conjugated states (bottom). Tracing out the bottom wires gives the optimal cloning channel, while tracing out the top wires gives the complementary channel realising optimal universal conjugation/transposition.

We remark that, while \cite{Mancinska2025classification} discusses how to implement the cloning channel efficiently, we use here an explicit CG transform, which is not efficient but is conceptually clearer.
The implementation of \cite{Mancinska2025classification} has gate complexity $\mathrm{poly}(N,K,d)$, but we expect that using recently introduced techniques from \cite{hdkrovi25burchardt} we can lower the complexity to $\mathrm{poly}(N,K,\mathrm{log}(d))$.

{\it Optimal $N\to 1$ transposition on mixed states ---}%
We now turn to mixed input states and consider the problem of $N\!\to\!1$ transposition. In the pure-state setting, performance was quantified by averaging the fidelity over pure states with respect to the Haar-induced measure. Due to the covariance of the transposition map, the average fidelity coincides with the worst-case fidelity, and the resulting optimisation problem can be solved via semidefinite programming. For mixed states, this no longer holds. In general, worst-case and average-case fidelities are not equivalent, and there is no canonical measure for averaging over density operators. For instance, the Hilbert–Schmidt and Bures measures are inequivalent and have different relevant properties~\cite{2001Zyczkowski_geometry,2001Zyczkowski_meas,Osipov2010}. Moreover, the fidelity between two density operators, $F(\rho,\sigma):=\Tr\!\left(\sqrt{\sqrt{\rho}\,\sigma\,\sqrt{\rho}}\right)^2$, is no longer bilinear function, which makes its analysis significantly more challenging.

We therefore adopt the visibility with respect to white noise as the figure of merit in the mixed-state setting, which is the standard performance quantifier in the structural physical approximation (SPA) approach~\cite{Bae2017,Horodecki2003multicopy} to non-CP maps. This also allows us to employ the methods of $N$-copy CP maps introduced in Ref.~\cite{Dong2019positive}. 
 
Our third main result characterises the optimal $N\to 1$ approximation of density-matrix transposition under this figure of merit.

\begin{restatable}{theorem}{TransMix}\label{thm:TransMix}
There exists a CPTP linear map  ${\map{C}_N : \mathcal{L} (\mathbb{C}{^d})^{\otimes N} \to \mathcal{L}\left(\mathbb{C}{^{d}}\right)}$ such that 
\begin{equation} \label{eq:eta_mixed}
\map{C}_N (\rho^{\otimes N})
=\eta\,\rho^\mathsf{T}+(1-\eta)\frac{\mathbb{1}}{d}, \quad \forall \rho\in\mathcal L(\mathbb{C}^d)
\end{equation}
if and only if $-\tfrac{1}{d-1} \leq \eta\leq \eta_{\max}$, where
\begin{equation} \label{eq:eta_max}
\eta_{\max}=
\begin{cases}
\frac{1}{d+1} & \textup{if } N\le d-1,\\[4pt]
\frac{N}{d(d-1)+N} & \textup{if } N\ge d-1.
\end{cases}
\end{equation}
\end{restatable}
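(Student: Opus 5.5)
We reduce the existence of $\map{C}_N$ to a positivity condition on an $N$-copy extension. Whenever $\map{C}_N$ exists, it can be twirled over $U\mapsto U^{\otimes N}$ (input) and $\overline U$ (output) and over $S_N$ without changing Eq.~\eqref{eq:eta_mixed}, because the target map $\rho\mapsto \eta\rho^\mathsf{T}+(1-\eta)\id/d$ is itself covariant. So we may assume $\map{C}_N$ is $U\otimes\cdots\otimes U\to\overline U$ covariant and permutation invariant.

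By linearity on the span of $\{\rho^{\otimes N}\}$, which is the symmetric subspace of operators, $\map{C}_N$ must agree there with the $N$-linear map
\[
\Phi_\eta(\rho_1\otimes\cdots\otimes\rho_N)=\frac{1}{N}\sum_{i}\Big(\eta\,\rho_i^\mathsf{T}\prod_{j\neq i}\Tr\rho_j\Big)+(1-\eta)\frac{\id}{d}\prod_j\Tr\rho_j ,
\]
symmetrised. The freedom lies only on the complement of the symmetric operators. Following Ref.~\cite{Dong2019positive}, the question becomes: does there exist an operator $X$ that vanishes on symmetric operators such that the Choi operator of $\Phi_\eta+X$ is positive? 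Covariance lets us take the Choi operator $J$ on $(\C^d)^{\otimes N}\otimes\C^d$ to commute with $U^{\otimes N}\otimes U$, where the conjugation is absorbed by the transpose in the Choi isomorphism. Hence $J$ lies in the commutant, i.e.\ the span of permutations $V_\pi$, $\pi\in S_{N+1}$, further invariant under conjugation by $S_N$.

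We first compute the fixed part. The Choi operator of $\rho\mapsto\rho_i^\mathsf{T}$ (tensor trace on the others) is the swap $V_{(i,N+1)}$. After averaging over $i$, the constraint on symmetric operators pins down the coefficient of $J_N:=\sum_{i=1}^N V_{(i,N+1)}$, the Jucys–Murphy element of $S_{N+1}$, up to operators orthogonal to $\Pi_\mathrm{sym}^{(N)}\otimes\id$ in the appropriate sense. Concretely,
\[
J=\frac{\eta}{N}J_N+\frac{1-\eta}{d}\id+X ,
\]
where $X$ lies in the $S_N$-invariant part of the algebra of $S_{N+1}$ and satisfies $\Tr_{N+1}$-type constraints (trace preservation) together with vanishing against symmetric inputs. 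Trace preservation imposes $\Tr_{\mathrm{out}}J=\id$.

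The eigenvalues of $J_N$ on the Schur–Weyl component labelled by $(\lambda\vdash N,\mu\vdash N+1)$, with $\mu=\lambda+\square$, are the contents $c(\mu/\lambda)$ of the added box. These range from $-(\min(N,d-1))$ (a box added at the bottom of a column of height at most $d$) up to $N$. Without $X$, positivity requires $\frac{\eta}{N}c+\frac{1-\eta}{d}\ge0$ for every allowed content.

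The main obstacle is optimising over $X$. The key observation is that $X$ can be chosen supported on non-symmetric $\lambda$ components. On those components we may replace the whole block by something positive, e.g.\ $\frac{1}{d}\id$ or the best covariant channel. This gives more freedom than $X=0$ but leaves the $\lambda=(N)$ block fixed. So the binding constraint comes from $\lambda=(N)$: the added box has content $N$ (giving $\mu=(N+1)$) or content $-1$ (giving $\mu=(N,1)$), and it yields $\eta\ge -\tfrac{1}{d-1}$ from the symmetric branch. The upper bound, however, must come from the interplay through trace preservation: $\Tr_{N+1}$ of the blocks couples the coefficient on $(N,1)$ to the other sectors.

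My plan is to write the dual SDP. We maximise $\eta$ subject to $J\ge0$, $\Tr_{\mathrm{out}}J=\id$, and the linear constraints from agreement on symmetric operators. We then guess a dual witness built from Young projectors, with two regimes. For $N\le d-1$, the answer $\eta=1/(d+1)$ is the single-copy SPA, attained by discarding $N-1$ copies; the witness lives on the antisymmetric-type column $\mu$. For $N\ge d-1$, the column height saturates at $d$ and the content bound $-(d-1)$ gives $\frac{\eta}{N}(-(d-1))+\frac{1-\eta}{d}\ge0$, i.e.\ $\eta\le N/(d(d-1)+N)$, which is achieved by $J$ itself being positive with $X$ fixed appropriately.

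Feasibility for the primal is then explicit, via the lowest-content computation combined with a mixture with the discard strategy. Optimality follows from the dual witness. Convexity of the feasible set in $\eta$, together with the identity channel of depolarising type at $\eta=0$ and the endpoint $-\tfrac{1}{d-1}$ (the Werner–Holevo channel $(\id-\rho^\mathsf{T})/(d-1)$), gives the full interval.
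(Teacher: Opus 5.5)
There is a genuine gap, at exactly the point you call the main obstacle. For mixed inputs, $\mathrm{span}\{\rho^{\otimes N}\}$ is the full algebra of $S_N$-invariant operators, i.e.\ all $Y$ with $V_\pi Y V_\pi^\dagger=Y$. It is not $\mathcal{L}(\mathrm{sym}^N(\mathbb{C}^d))$, and it contains every isotypic projector $\Pi_\lambda$ (already $\rho=\id/d$ puts weight on all $\lambda$). So your claim that $X$ can be supported on the $\lambda\neq(N)$ blocks, which you then overwrite with $\id/d$ or ``the best covariant channel'', imports the pure-state freedom and is false here.

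In fact, after the $S_N$-twirl of your first step there is no residual freedom at all. If $X$ commutes with $V_\pi\otimes\id$, then
$\Tr_{\mathrm{in}}[(Y^\mathsf{T}\otimes\id)X]=\Tr_{\mathrm{in}}[(\map{S}(Y)^\mathsf{T}\otimes\id)X]$,
where $\map{S}(Y):=\frac{1}{N!}\sum_\pi V_\pi Y V_\pi^\dagger$ lies in $\mathrm{span}\{\rho^{\otimes N}\}$. Vanishing on those inputs therefore makes $X$ induce the zero map, so $X=0$. Equivalently, and this is the $N$-copy characterisation from Ref.~\cite{Dong2019positive} that the paper uses: twirling the input of any valid $\map{C}_N$ over $S_N$ yields the Choi operator $\frac{\eta}{N}\sum_i F_{0i}\otimes\id_{\overline{i}}+\frac{1-\eta}{d}\id$. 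Hence a CPTP $\map{C}_N$ exists iff this single operator is positive semidefinite; trace preservation holds automatically. Your own ``without $X$'' condition, $\frac{\eta}{N}c+\frac{1-\eta}{d}\ge0$ for all contents $c\in[-\min(N,d-1),N]$, then gives both bounds at once. That is precisely the paper's proof via the Jucys--Murphy spectrum.

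As written, the necessity direction $\eta\le\eta_{\max}$ is not proved. No dual witness is exhibited, and the suggested mechanism (trace preservation coupling the $(N,1)$ sector to the others) is not an argument: $\Tr_{\mathrm{out}}$ of the symmetrised Choi operator equals $\id$ for every $\eta$. Worse, your premise contradicts the statement you are proving. If the $\lambda\neq(N)$ blocks were free, the only binding constraints would come from $\mu\in\{(N+1),(N,1)\}$, with contents $N$ and $-1$. That gives $\eta\le N/(N+d)$, the pure-state value of Eq.~\eqref{eq:eta_pure}, which strictly exceeds $\eta_{\max}$ whenever $d>2$ and $N>1$. Your sufficiency argument is fine: positivity of the symmetrised operator at the endpoints, the Werner--Holevo channel at $\eta=-\frac{1}{d-1}$, and convexity of the feasible set.
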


Theorem~\ref{thm:TransMix} extends Theorem~10 of Ref.~\cite{Dong2019positive} by showing that one of the bounds identified there is tight and determines the exact parameter range in which the transposition map admits a structural physical approximation in the multicopy regime. The proof presented in the Appendix~\ref{appendix:D} combines representation-theoretic tools based on Jucys-Murphy elements with Theorem~2 of Ref.~\cite{Dong2019positive}, which provides a minimal-eigenvalue characterisation of $N$-copy completely positive maps.

It is instructive to compare the maximal visibility for arbitrary density matrices obtained in Theorem~\ref{thm:TransMix} with the maximal visibility achievable when the approximation is restricted to pure states, as in Eq.~\eqref{eq:shrinking}. Due to the covariance properties of the transposition map, the same techniques used in Theorem~\ref{thm:TransNK} show that there exists a CPTP map $\map{C}_N:\mathcal{L}(\mathbb{C}^{d})^{\otimes N}\to\mathcal{L}(\mathbb{C}^{d})$ such that
\begin{equation} \label{eq:eta_pure}
\map{C}_N\of[\big]{\ketbra{\psi}^{\otimes N}}
=\eta\,\ketbra{\psi}^\mathsf{T}+(1-\eta)\frac{\mathbb{1}}{d}, \quad \forall\,\ket{\psi}\in\mathbb{C}^d,
\end{equation}
if and only if $-\tfrac{1}{d-1}\leq\eta\leq \tfrac{N}{N+d}$.

For $d=2$, the critical visibility for pure states coincides with the maximal visibility $\eta_{\max}$ for arbitrary density matrices given in Eq.~\eqref{eq:eta_max}. This follows from the fact that every density matrix $\rho\in\mathcal{L}(\mathbb{C}^2)$ can be written as $\rho = v\,\ketbra{\psi} + (1-v)\,\mathbb{1}/2$, so that the visibility constraints for pure and mixed states are equivalent. 

By contrast, for any $N>1$ and $d>2$, the critical visibility for pure states is strictly larger than $\eta_{\max}$. This difference reflects that for $N>1$, the linear space spanned by $\rho^{\otimes N}$ is strictly larger than the space spanned by $\ketbra{\psi}^{\otimes N}$. Consequently, satisfying the visibility constraint for pure states in Eq.~\eqref{eq:eta_pure} is less restrictive than satisfying the corresponding constraint for mixed states in Eq.~\eqref{eq:eta_mixed}.

{\it Discussions ---}%
We have characterised the optimal channel that approximates the transformation of $N$ pure qudit states into $K$ copies of the transposed states in terms of fidelity. We showed that this optimal $N$-to-$K$ pure-state transposition is the complementary channel to optimal pure (universal and symmetric) cloning from $N$ to $N+K$ copies, hence, both channels can be simultaneously implemented by a single isometry. This generalises the concept of clones and anticlones~\cite{1996Buzek_qubit12_cl,2003Buscemi_T} to arbitrary dimension and number of copies.
We then presented an explicit quantum circuit that attains simultaneous optimal cloning and optimal transposition and discussed its gate efficiency.
Finally, we analyse approximate transposition in the case $N$ to $K=1$ for arbitrary $d$-dimensional density matrices and provide the range of visibilities for which the transposition map can be implemented using $N$ copies of an arbitrary density matrix, thereby identifying its optimal SPA in the multicopy scenario.

The task of cloning mixed states, often referred to as quantum broadcasting, has been studied for qubit states, where the length of the Bloch-vector was taken as figure of merit~\cite{DAriano2005Superbroadcasting,Buscemi2006Superbroadcasting}. Using a similar figure of merit, it was shown that optimal qubit universal NOT broadcasting (unitarily equivalent as qubit transposition) is attained via estimation-based strategies~\cite{Chiribella_2007Superbroad}. In light of these results, it is natural to ask whether the relation between cloning and transposition established here extends to mixed-state scenarios beyond qubits. As discussed in the main text, when considering mixed states, different relevant figures of merits are not equivalent, fact that may lead to non-equivalent notions of optimality. A future direction is then to analyse whether a form of complementarity between broadcasting and transposition broadcasting persists in a more general setting. Also, since our results for pure states provide insights into telecloning~\cite{Murao1999telecloning,Dur2000telecloning}, extending the analysis to mixed states could yield a to a mixed state generalisation of the telecloning protocol.  Additionally,  we expect that interesting and relevant channel complementarity relations may hold for general unitary-equivariant and permutation-invariant channels. For such generalisations, the classification and circuit approach  of~\cite{Mancinska2025classification} should provide the framework for extending our results beyond transposition and cloning.

Finally, a natural related direction is to analyse the possible relationship between unitary/channel cloning~\cite{Chiribella2008cloning,Sekatski2025cloning} and unitary/channel complex conjugation~\cite{Miyazaki_2020,Ebler2022conjugation}. We believe some of the methods used here apply for analysing this unitary generalisation of the pure state problem.

{\it Acknowledgements ---}%
We acknowledge Dagmar Bruss, Adán Cabello, Nicolas Gisin, Barbara Kraus, Elias Theil, and Mio Murao, for useful discussions. VB and MTQ acknowledge support by QuantEdu France, a State aid managed by the French National Research Agency for France 2030 with the reference ANR-22-CMAS-0001.  MTQ is supported by the French Agence Nationale de la Recherche (ANR) under grant JCJC HOQO-KS. DG acknowledges support by NWO grant NGF.1623.23.025 (“Qudits in theory and experiment”). M.S. acknowledges that the project is co-financed by the Polish National Agency for Academic
Exchange (NAWA) under the Polonium Programme (grant no.~BPN/BFR/2024/1/00026/U/00001),
which supported his research mobility. 
M.S. also acknowledges support from the Polish National Science Centre (NCN) through the Sonata Bis (grant no.~UMO-2024/54/E/ST2/00316), which supported the research presented in this work.

\clearpage

\onecolumngrid

\appendix

\let\oldaddcontentsline\addcontentsline
\renewcommand{\addcontentsline}[3]{} 
\section*{Appendix}
\let\addcontentsline\oldaddcontentsline 

\tableofcontents

\section{Preliminaries}

Throughout the appendix, we will make use of the Choi--Jamiołkowski isomorphism. For a linear map ${\map{C}:\mathcal{L}(\mathcal{H}_{\I})\to \mathcal{L}(\mathcal{H}_{\O})}$, 
its Choi operator is the operator $J_{\I\O}\in \mathcal{L}(\mathcal{H}_{\I}\otimes \mathcal{H}_{\O})$ defined by
\begin{equation}
J_{\I\O}
:=
\sum_{i,j}
\ketbra{i}{j}_{\I}\otimes \map{C}\!\left(\ketbra{i}{j}_{\I}\right),
\end{equation}
where $\{\ket{i}\}_{i}$ is the computational basis of $\mathcal{H}_{\I}$. Furthermore, the Choi vector of a linear operator $A:\mathcal{H}_\I \to \mathcal{H}_\O$ is defined as
\begin{equation}
\dket{A}_{\I\O}
:= \sum_i \ket{i}_\I \otimes (A\ket{i})_\O.
\end{equation}
Hence, we have that  $\dket{\mathbb{1}}_{\I\I}\in \mathcal{H}_{\I}\otimes\mathcal{H}_{\I}$ is the (unnormalised) maximally entangled vector
$
\dket{\mathbb{1}}_{\I\I}
=
\sum_i
\ket{i}_{\I}\otimes\ket{i}_{\I},
$ 
 and that 
\begin{equation}
\dket{A}_{\I\O}
=
(\mathbb{1}_{\I}\otimes A\,)\dket{\mathbb{1}}_{\I\I}.
\end{equation}

It is also useful to note the identity
\begin{equation}
\dket{A}_{\I\O}
=
(\mathbb{1}\otimes A)\dket{\mathbb{1}}_{\I\I}
= \sqrt{\frac{d_\O}{d_\I}}
(A^\mathsf{T}\otimes \mathbb{1})\dket{\mathbb{1}}_{\O\O},
\end{equation}
where the transpose is taken with respect to the computational basis $\{\ket{i}\}_i$.

The action of the map $\map{C}$ on an arbitrary operator $\rho\in \mathcal{L}(\mathcal{H}_{\I})$ can then be written as
\begin{equation}
\map{C}(\rho)
=
\Tr_{\I}\!\left[
\left(\rho^\mathsf{T}\otimes \mathbb{1}_{\O}\right)J_{\I\O}
\right],
\end{equation}
where the transpose is taken in the same basis used in the definition of the Choi operator, and $\Tr_{\I}$ denotes the partial trace over the input space $\mathcal{H}_{\I}$. For the $N\!\to\!K$ pure-state transposition problem, the input and output Hilbert spaces are 
$\mathcal{H}_{\I}:=(\mathbb{C}^{d})^{\otimes N}$ and $\mathcal{H}_{\O}:=(\mathbb{C}^{d})^{\otimes K}$.
For the $N\!\to\!N+K$ cloning problem, the output space is instead 
$\mathcal{H}_{\O}:=(\mathbb{C}^{d})^{\otimes (N+K)}$.

We will also use standard notation for the symmetric subspace. We denote by $\Pi_{\mathrm{sym}}^{(M)}$
the projector onto the symmetric subspace $\mathrm{sym}^{M}(\mathbb{C}^{d})\subseteq (\mathbb{C}^{d})^{\otimes M},$
and by
\begin{equation}
d_{S}^{M}:=\dim \bigl(\mathrm{sym}^{M}(\mathbb{C}^{d})\bigr)
= \tr(\Pi_{\mathrm{sym}}^{(M)}) 
= \binom{M+d-1}{d-1}
\end{equation}
its dimension.

Finally, we will repeatedly use the identity
\begin{equation}
\int \mathrm{d}\psi \;
\ketbra{\psi}^{\otimes M}
=
\frac{\Pi_{\mathrm{sym}}^{(M)}}{d_{S}^{M}},
\end{equation}
where $\mathrm{d}\psi$ denotes the Haar measure over pure states  ${\ket{\psi}\in\mathbb{C}^d}$.

\section{Optimal approximation of $N\! \to\! K$ pure state transposition}\label{app:NKtrans}

\subsection{An SDP formulation of the problem}

\begin{lemma}\label{ref:dual} The maximum average fidelity for approximating the transpose map from $N$ input copies to $K$ output copies on pure states, given by \begin{equation}\label{eq:fid_trans_appendix} \max_{\mathcal{C}\in\mathrm{CPTP}} \int \! \mathrm{d}\psi F\!\Big( \mathcal{C}(\ketbra{\psi}^{\otimes N}), {\ketbra{\psi}^\mathsf{T}}^{\otimes K} \Big), \end{equation} can be expressed as the following semidefinite program \begin{align} \max_{J \in \mathcal{L}(\mathcal{H}_\I\otimes \mathcal{H}_\O )}\quad & \Tr\!\left( J \frac{\Pi^{(N+K)}_{\mathrm{sym}}}{d_S^{N+K}} \right) \label{eq:primal} \\ \textup{s.t.}\quad J& \ge 0, \\ \Tr_{\O}(J)&=\mathbb{1}_{\I}, \end{align} where $J\in \mathcal L(\mathcal H_{\I}) \otimes \mathcal L(\mathcal H_{\O})$ is the Choi operator of $\mathcal C$, where $\mathcal L(\mathcal H_{\I}) \cong \mathcal L(\mathbb{C}^d)^{\otimes N}$ and $\mathcal L(\mathcal H_{\O}) \cong \mathcal L(\mathbb{C}^d)^{\otimes K}$, $\Tr_O$ is the partial trace over the space $\mathcal L (\mathcal{H}_\O)$. 

The dual SDP corresponding to~\eqref{eq:SDP_primal} is \begin{align} \min_{X \in \mathcal L(\mathbb{C}^d)^{\otimes N}} \; & \Tr(X) \label{eq:dual} \\ \textup{s.t.}\quad & X \otimes \mathbb{1}_{d^K} \ge \frac{\Pi^{(N+K)}_{\mathrm{sym}}}{d_S^{N+K}}. 
\end{align} 

\end{lemma}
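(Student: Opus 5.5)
The plan is to rewrite the average fidelity as a linear functional of the Choi operator $J$ of $\mathcal{C}$. Then the CPTP constraint becomes $J\ge 0$ together with $\Tr_\O(J)=\mathbb{1}_\I$, and the dual follows from standard Lagrangian duality.

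\textbf{Step 1: the objective in Choi form.} Using $\mathcal{C}(\rho)=\Tr_\I[(\rho^\mathsf{T}\otimes\mathbb{1}_\O)J]$, the fidelity for a fixed $\psi$ is
\begin{equation*}
\Tr\!\left[J\left(\left(\ketbra{\psi}^{\mathsf{T}}\right)^{\otimes N}\otimes\left(\ketbra{\psi}^{\mathsf{T}}\right)^{\otimes K}\right)\right]=\Tr\!\left[J\left(\ketbra{\overline{\psi}}\right)^{\otimes (N+K)}\right].
\end{equation*}
Two facts make the output factor come out right. First, $\ketbra{\psi}^\mathsf{T}=\ketbra{\overline\psi}$, so the output projector is already $(\ketbra{\overline\psi})^{\otimes K}$. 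Second, the input factor is $(\ketbra{\psi}^{\otimes N})^\mathsf{T}=(\ketbra{\overline\psi})^{\otimes N}$. This pairing of the input transpose with the transposed target is the point of the problem: the two conjugations align, rather than producing a mixed $\psi/\overline\psi$ operator as in cloning.

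\textbf{Step 2: averaging.} The Haar measure is invariant under $\psi\mapsto\overline\psi$, since $\overline{U}$ is Haar distributed whenever $U$ is. Integrating and applying the identity $\int \mathrm{d}\psi\,\ketbra{\psi}^{\otimes M}=\Pi^{(M)}_{\mathrm{sym}}/d_S^M$ with $M=N+K$ yields the objective $\Tr(J\,\Pi^{(N+K)}_{\mathrm{sym}})/d_S^{N+K}$. The constraints are the usual Choi characterisation of CPTP maps, namely complete positivity iff $J\ge 0$ and trace preservation iff $\Tr_\O J=\mathbb{1}_\I$. This establishes the primal in Eq.~\eqref{eq:primal}.

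\textbf{Step 3: the dual.} Introduce a Hermitian multiplier $X$ on $\mathcal{H}_\I$ for the equality constraint. The Lagrangian is
\begin{equation*}
\Tr(J\Omega)+\Tr\!\left[X(\mathbb{1}_\I-\Tr_\O J)\right]=\Tr X+\Tr\!\left[J\left(\Omega-X\otimes\mathbb{1}_\O\right)\right],
\end{equation*}
where $\Omega=\Pi^{(N+K)}_{\mathrm{sym}}/d_S^{N+K}$. The supremum over $J\ge0$ is finite exactly when $X\otimes\mathbb{1}\ge\Omega$, and it then equals $\Tr X$. This gives the stated dual, and weak duality is immediate from the same computation.

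\textbf{Step 4: strong duality (the main check).} I would verify Slater's condition through strict feasibility on both sides. On the primal side, $J=\mathbb{1}_{\I\O}/d^K$ is strictly positive and satisfies the trace constraint. On the dual side, $X=\mathbb{1}$ gives $X\otimes\mathbb{1}-\Omega>0$, because $\|\Omega\|\le 1/d_S^{N+K}<1$. Together these give equal optimal values with attainment.

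The main obstacle is bookkeeping rather than difficulty. One has to track transposes and conjugates carefully so that the input and output factors combine into a single symmetric projector. Since Haar invariance under conjugation is what absorbs the bars, I would state that invariance explicitly.
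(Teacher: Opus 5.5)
Your proposal is correct and follows essentially the same route as the paper. You rewrite the average fidelity as $\Tr(J\Omega)$ via the Choi representation and the Haar moment identity, and derive the dual by the standard Lagrangian argument; the paper absorbs the transposes through the transpose-invariance of $\Pi^{(N+K)}_{\mathrm{sym}}$, which is equivalent to your conjugation-invariance of the Haar measure. Your Step 4 (Slater) goes beyond what the lemma asserts and is valid for $d\ge 2$; the paper does not need it, since in Theorem~\ref{thm:TransNK} it exhibits primal and dual feasible points with the same value $d_S^N/d_S^{N+K}$.
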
 

\begin{proof}

Starting from Eq.~\eqref{eq:fid_trans_appendix} we rewrite,
\begin{align}
\int \! \mathrm{d}\psi
F\!\Big(
\mathcal{C}(\ketbra{\psi}^{\otimes N}),
{\ketbra{\psi}^\mathsf{T}}^{\otimes K}
\Big)
&= \int \! \mathrm{d}\psi
\Tr\!\Big(
\Tr_\I\!\big(({{\ketbra{\psi}^\mathsf{T}}^{\otimes N}} \otimes \id^{\otimes K})J\big)
{\ketbra{\psi}^\mathsf{T}}_{\O}^{\otimes K}
\Big)\\
&= \int \! \mathrm{d}\psi
\Tr\!\Big(
J \; {{\ketbra{\psi}^\mathsf{T}}^{\otimes N}}  \otimes 
{\ketbra{\psi}^\mathsf{T}}_{\O}^{\otimes K}
\Big)\\
&=\Tr\!\Big(J_{\I\O}\int \! \mathrm{d}\psi 
({{\ketbra{\psi}^\mathsf{T}}^{\otimes N+K}})\Big)\\
&=\Tr\!\Big(J_{\I\O}\frac{\Pi^{(N+K)}_{\mathrm{sym}}}{d_S^{N+K}}\Big)
\end{align}
where $\mathcal{L}(H_{\I}) \cong \mathcal L(\mathcal H^{\otimes N})$ and $\mathcal{L}(H_{\O}) \cong \mathcal L(\mathcal H^{\otimes K})$. 
In the last equality we used that
$\int \mathrm d\psi\, \ketbra{\psi}^{\otimes (N+K)}=
\frac{\Pi^{(N+K)}_{\mathrm{sym}}}{d_S^{N+K}}$ and that the symmetric projector is invariant under transpose. Now, let us introduce 
$\Omega
= \frac{\Pi^{(N+K)}_{\mathrm{sym}}}{d_S^{N+K}}$
to rewrite Eq.~\eqref{eq:SDP_primal} as,
\begin{align} 
\max_{J}\quad & \Tr(J\Omega) \label{eq:SDP_primal}\\
\textup{s.t.}\quad & J \geq 0,\qquad \Tr_{\O}(J)=\mathbb{1}_{\I},
\end{align}

In order to obtain the dual SDP, we may follow the standard Lagrangian approach~\cite{Skrzypczyk2023SDP}. For that, we introduce one dual variable for each constraint, namely $X$ for the equality constraint and $Y$ for the inequality constraint and obtain the following Lagrangian,
\begin{align}
    \mathcal{L} &= \Tr(J \Omega) + \Tr(X_{\I} (\mathbb{1}_{\I} - \Tr_{\O}(J))) + \Tr(JY)\\
    &= \Tr(J (\Omega - X_{\I}\otimes \mathbb{1}_{\O} + Y)) + \Tr(X_{\I}),
\end{align}
where we used that the adjoint of the partial trace satisfies 
$\Tr(X_{\I}\Tr_{\O}(J)) = \Tr((X_{\I}\otimes \mathbb{1}_{\O})J).$

The dual objective function is obtained by taking the supremum of the Lagrangian over $J \ge 0$, while the dual feasibility constraints ensure that this supremum is finite. The dual problem therefore reads
\begin{align}
\min_{X} \quad & \Tr(X) \\
\textup{s.t.}\quad & X_{\I} \otimes \mathbb{1}_{\O} \ge \Omega_{\I \O}.
\end{align}

\end{proof}

\subsection{Covariance properties of the problem}

\begin{theorem} \label{lemma:covariance}
Let $\map{C}:\mathcal{L}(\mathbb{C}^{d})^{\otimes N}\to\mathcal{L}(\mathbb{C}^{d})^{\otimes K}$ 
be a quantum channel that attains an average fidelity $F_\mathcal{C}$ in the $N\! \to\! K$ state transposition task. That is,
\begin{equation}\label{eq:fid_trans_ifx}
F_{\map{C}}:=
\int \! \mathrm{d}\psi
F\!\Big(\mathcal{C}\left(\ketbra{\psi}^{\otimes N}\right),\,{\ketbra{\psi}^\mathsf{T}}^{\otimes K}\Big).
\end{equation}

There exists another quantum channel 
$\map{C}':\mathcal{L}(\mathbb{C}^{d})^{\otimes N}\to\mathcal{L}(\mathbb{C}^{d})^{\otimes K}$ 
that attains the same performance $F_\mathcal{C}$ and satisfies the unitary covariance relation
\begin{align} \label{eq:cov1_SM}
\mathcal{C}'\!\left(U^{\otimes N} \rho (U^{\otimes N})^{\dagger}\right)
&= \overline{U}^{\otimes K}\, \mathcal{C}'(\rho)\, (U^\mathsf{T})^{\otimes K}, 
\quad \forall U\in \mathrm{SU}(d), \quad \forall \rho\in\mathcal{L}(\mathbb{C}^d)^{\otimes N},
\end{align}
as well as the permutation covariance relation
\begin{align} \label{eq:cov2_SM}
\mathcal{C}'\!\left(V_\pi \rho V_\pi^\dagger\right)
&= V_\sigma\, \mathcal{C}'(\rho)\, V_\sigma^\dagger, 
\quad  \forall \pi\in S_N, \quad \sigma\in S_K,
\end{align}
where $V_\pi$ and $V_\sigma$ denote the unitary representations of the symmetric groups acting on $(\mathbb{C}^d)^{\otimes N}$ and $(\mathbb{C}^d)^{\otimes K}$, respectively.
\end{theorem}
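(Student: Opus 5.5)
We construct $\map{C}'$ by twirling $\map{C}$ over the joint symmetry group of the problem and then check that twirling preserves complete positivity, trace preservation, the figure of merit, and produces the two covariance relations. Concretely, we set
\begin{equation*}
\map{C}'(\rho):=\frac{1}{N!\,K!}\sum_{\pi\in S_N}\sum_{\sigma\in S_K}\int \mathrm{d}U\;
V_\sigma^\dagger\,(U^{\mathsf{T}})^{\otimes K}\,\map{C}\!\left(U^{\otimes N}V_\pi\,\rho\,V_\pi^\dagger (U^{\otimes N})^\dagger\right)\overline{U}^{\otimes K}\,V_\sigma,
\end{equation*}
where $\mathrm{d}U$ is the normalised Haar measure on $\mathrm{SU}(d)$.

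\textbf{Step 1: $\map{C}'$ is a channel.} Each summand is a composition of unitary channels with $\map{C}$, so it is CPTP. Convex combinations and Haar integrals of CPTP maps remain CPTP.

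\textbf{Step 2: $\map{C}'$ is covariant.}
\begin{itemize}
\item Unitary covariance: we replace $\rho$ by $W^{\otimes N}\rho (W^{\otimes N})^\dagger$ and substitute $U\mapsto UW^{-1}$. Since $U^{\otimes N}$ commutes with $V_\pi$, we can reorder to $U^{\otimes N}V_\pi W^{\otimes N}=(UW)^{\otimes N}V_\pi$. Haar invariance, together with $(UW^{-1})^{\mathsf T}=\overline{W}\,U^{\mathsf T}$ for unitary $W$, then yields the factors $\overline{W}^{\otimes K}\cdots (W^{\mathsf T})^{\otimes K}$, exactly as required by Eq.~\eqref{eq:cov1_SM}.
\item Input-permutation covariance: we reindex $\pi\mapsto\pi\pi'^{-1}$, using that $V_\sigma$ commutes with $\overline{U}^{\otimes K}$.
\item Output-permutation covariance: we reindex $\sigma\mapsto\sigma\sigma'$ instead.
\end{itemize}
Averaging over both groups actually gives $\map{C}'(V_\pi\rho V_\pi^\dagger)=\map{C}'(\rho)=V_\sigma\map{C}'(\rho)V_\sigma^\dagger$. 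This is stronger than Eq.~\eqref{eq:cov2_SM} and implies it.

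\textbf{Step 3: the fidelity is unchanged.} We expand $F_{\map{C}'}$. The input argument becomes $U^{\otimes N}V_\pi\ketbra{\psi}^{\otimes N}V_\pi^\dagger (U^{\otimes N})^\dagger=\ketbra{U\psi}^{\otimes N}$. Moving the output conjugations onto the target state gives
\begin{equation*}
\Tr\!\Big[\map{C}(\ketbra{U\psi}^{\otimes N})\,V_\sigma\overline{U}^{\otimes K}(\ketbra{\psi}^{\mathsf T})^{\otimes K}(U^{\mathsf T})^{\otimes K}V_\sigma^\dagger\Big].
\end{equation*}
Now $\overline{U}\,\overline{\ketbra{\psi}}\,U^{\mathsf T}=\overline{\ketbra{U\psi}}=\ketbra{U\psi}^{\mathsf T}$, and $V_\sigma$ fixes product states. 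So the integrand equals $F\big(\map{C}(\ketbra{U\psi}^{\otimes N}),(\ketbra{U\psi}^{\mathsf T})^{\otimes K}\big)$. Unitary invariance of $\mathrm{d}\psi$ then gives $F_{\map{C}'}=F_{\map{C}}$. Alternatively, the same conclusion follows from Lemma~\ref{ref:dual}: the objective is $\Tr(J\Omega)$, and $\Omega\propto\Pi^{(N+K)}_{\mathrm{sym}}$ is invariant under the corresponding action $\overline{U}^{\otimes N}\otimes\overline{U}^{\otimes K}$ on the Choi operator and under all permutations.

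\textbf{Main obstacle.} The only delicate point is the bookkeeping of conjugations and transposes: both the Haar substitution in Step 2 and the identification of the twirled target state in Step 3 must produce exactly $\overline{U}(\cdot)U^{\mathsf T}$, which is the relation expressing that $\mathcal T$ intertwines $U$ with $\overline{U}$. We will check this carefully using $\overline{U}=(U^\dagger)^{\mathsf T}$.
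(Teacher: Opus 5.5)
Your proposal is correct and follows essentially the same route as the paper. The paper performs the same unitary-then-permutation twirl, but on the Choi operator: it conjugates $J$ by $U^{\otimes N}\otimes U^{\otimes K}$ and then by $V_{\pi}\otimes V_{\sigma}$, and checks $J\ge 0$, $\Tr_{\O}(J)=\id_{\I}$ and invariance of $\Tr(J\Omega)$. This is equivalent to your channel-level twirl after the Haar reparametrisation $U\mapsto U^{\mathsf T}$, and your direct fidelity check via $\overline{U}\,\overline{\ketbra{\psi}}\,U^{\mathsf T}=\ketbra{U\psi}^{\mathsf T}$ is sound.
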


\begin{proof}

We first show that, for the $N\! \to\! K$ transposition problem, one can, without loss of performance, restrict the search to channels satisfying the covariance condition
\begin{align}
\mathcal{C}'\!\left(U^{\otimes N} \rho (U^{\otimes N})^{\dagger}\right)
&= \overline{U}^{\otimes K}\, \mathcal{C}'(\rho)\, (U^\mathsf{T})^{\otimes K}, \quad \forall U\in \mathrm{SU}(d), \quad \forall \rho\in\mathcal{L}(\mathbb{C}^d)^{\otimes N}.
\end{align}
This condition can equivalently be expressed as a symmetry property of the Choi operator. Namely, we have
\begin{align}
J &= (\mathbb{1} \otimes \mathcal{C}') (\dketbra{\mathbb{1}}{\mathbb{1}}_{\I' \I})\\
&= (\mathbb{1} \otimes \mathcal{C}') 
\left(
\left(U^{\otimes N} \otimes \overline{U}^{\otimes N}\right)
\dketbra{\mathbb{1}}{\mathbb{1}}
\left(U^{\otimes N} \otimes \overline{U}^{\otimes N}\right)^{\dagger}
\right)\\
&= \left(U_{\I}^{\otimes N} \otimes U_{\O}^{\otimes K}\right)
(\mathbb{1} \otimes \mathcal{C}') 
\left(\dketbra{\mathbb{1}}{\mathbb{1}} \right)
\left(U_{\I}^{\otimes N} \otimes U_{\O}^{\otimes K}\right)^{\dagger},
\end{align}
where $\mathcal{L}(H_{\I}) \cong  \mathcal{L}(H_{\I'}) \cong \mathcal L(\mathcal H^{\otimes N})$, $\mathcal{L}(H_{\O}) \cong \mathcal L(\mathcal H^{\otimes K})$, and $J \in \mathcal{L}(H_{\I} \otimes H_{\O})$. 
In the above we used the identity $(U \otimes \overline U)\dket{\mathbb{1}}=\dket{\mathbb{1}}$ for the maximally entangled vector. Hence,
\begin{align}
[J,\,U^{\otimes N}\otimes U^{\otimes K}] = 0, \quad \forall U\in \mathrm{SU}(d).
\end{align}

Given any feasible Choi operator $J$, we define its unitary twirl
\begin{align}
\widetilde J
:= \int \mathrm{d}U\;
\bigl(U^{\otimes N}\otimes U^{\otimes K}\bigr)\,
J\,\bigl(U^{\otimes N}\otimes U^{\otimes K}\bigr)^\dagger .
\end{align}

Let us now show that if $J \geq 0$ and $\Tr_O(J)=\id_{\I}$, then $\widetilde J\ge 0$ and $\Tr_{\O}(\widetilde J)=\mathbb 1_{\I}$. Namely, since $\widetilde{J}$ is a convex combination of positive semidefinite operators, we have $\widetilde{J}\geq0$. Furthermore,
\begin{align}
\Tr_\O(\widetilde{J}) =& \Tr_\O\int \mathrm{d}U\;
\bigl(U^{\otimes N}\otimes U^{\otimes K}\bigr)\,
J\,\bigl(U^{\otimes N}\otimes U^{\otimes K}\bigr)^\dagger \\ 
&=\Tr_\O\int \mathrm{d}U\;
\bigl(U^{\otimes N}\otimes (U^\dagger U)^{\otimes K}\bigr)\,
J\,\bigl(U^{\otimes N}\otimes \id^{\otimes K} \bigr)^\dagger \\ 
&=\Tr_\O\int \mathrm{d}U\;
\bigl(U^{\otimes N}\otimes \id^{\otimes K}\bigr)\,
J\,\bigl(U^{\otimes N}\otimes \id^{\otimes K} \bigr)^\dagger \\ 
&= \int \mathrm{d}U\;
\bigl(U^{\otimes N}\bigr)\,
\Tr_\O(J)\,\bigl(U^{\otimes N}\bigr)^\dagger \\ 
&= \int \mathrm{d}U\;
\bigl(U^{\otimes N}\bigr)\,
\id_{\I} \,\bigl(U^{\otimes N}\bigr)^\dagger \\ 
&= \id_\I.
\end{align}
Hence, when $J$ is feasible, $\widetilde J$ is also feasible. Moreover, since $\Omega$ is invariant under $U^{\otimes (N+K)}$,
\begin{align}
\Tr(\widetilde J\,\Omega)=\Tr(J\,\Omega).
\end{align}

Additionally, without loss of performance, we can restrict the search to Choi operators satisfying the permutation covariance condition
\begin{align}
[J,\,V_{\pi,\textup{I}}\otimes V_{\sigma,\textup{O}}] = 0, \quad \forall \pi\in S_N, \; \sigma\in S_K,
\end{align}
where $V_{\pi,\textup{I}}$ and $V_{\sigma,\textup{O}}$ are the unitary representations of the permutation groups $S_N$ and $S_K$ acting on the input and output systems, respectively.

To obtain a channel satisfying both covariance properties simultaneously, we apply the permutation twirl to $\widetilde J$ and define
\begin{align}
J'
:= \frac{1}{N!K!}\sum_{\pi\in S_N,\sigma\in S_K}
\bigl(V_{\pi,\textup{I}}\otimes V_{\sigma,\textup{O}}\bigr)\,
\widetilde J\,\bigl(V_{\pi,\textup{I}}\otimes V_{\sigma,\textup{O}}\bigr)^\dagger .
\end{align}

Then $J'\ge 0$ and $\Tr_{\O}(J')=\mathbb 1_{\I}$, hence $J'$ is feasible. Again, since $\Omega$ is invariant under $V_{\pi,\textup{I}}\otimes V_{\sigma,\textup{O}}$, we have
\begin{align}
\Tr(J'\,\Omega)=\Tr(\widetilde J\,\Omega)=\Tr(J\,\Omega).
\end{align}
Moreover, because permutation operators commute with $U^{\otimes N}\otimes U^{\otimes K}$, the operator $J'$ satisfies both
\begin{align}
[J',\,U^{\otimes N}\otimes U^{\otimes K}] = 0, \quad \forall U\in \mathrm{SU}(d),
\end{align}
and
\begin{align}
[J',\,V_{\pi,\textup{I}}\otimes V_{\sigma,\textup{O}}] = 0, \quad \forall \pi\in S_N,\; \sigma\in S_K.
\end{align}
Hence the corresponding channel $\mathcal C'$ attains the same performance as $\mathcal C$ and satisfies Eqs.~\eqref{eq:cov1_SM} and~\eqref{eq:cov2_SM}.

\end{proof}

\subsection{Proof of optimality}
\TransNK*

\begin{proof}

Before presenting the proof, we note that the attainability of the bound follows almost directly from the fact that pure states can be estimated with average fidelity $\frac{d_S^N}{d_S^{N+1}}$ using the Hayashi measurement~\cite{Hayashi1998estimation,2013Harrow_church}. One can apply the Hayashi measurement to estimate the input state $\ket{\psi}$ and then prepare the state ${\ketbra{\psi}^\mathsf{T}}^{\otimes K}$ as the output. A straightforward calculation shows that this strategy achieves an average fidelity of $\frac{d_S^N}{d_S^{N+K}}$. That being said, for completeness and concreteness, we present the proof without explicitly relying on previous results on state estimation. Instead, we show that the optimal value can be attained by an estimation strategy through a direct analysis of the Choi operator of the optimal channel.

\paragraph{ Attainability of the optimal value} Let us now show that the optimal value of the optimisation problem is attainable. 
Since $\Omega$ is supported on the symmetric subspace, and under the covariance assumption Schur's lemma implies that $J$ decomposes as $J = \alpha\,\Pi^{(N+K)}_{\mathrm{sym}} + B$, where $B\ge 0$ and $B\Pi^{(N+K)}_{\mathrm{sym}}=\Pi^{(N+K)}_{\mathrm{sym}}B=0$, we have
\begin{align}\label{eq:alpha}
\Tr(J\Omega)
=\Tr\!\left(J \frac{\Pi^{(N+K)}_{\mathrm{sym}}}{d_S^{N+K}}\right)
=\frac{1}{d_S^{N+K}}\Tr\!\left(\alpha\,\Pi^{(N+K)}_{\mathrm{sym}}\right)
=\alpha.
\end{align}
Hence, maximising $\Tr(J\Omega)$ is equivalent to maximising the coefficient $\alpha$. 

This motivates the following ansatz for the Choi operator
\begin{align}\label{eq:choi_estm}
J
&= \frac{d^{N}_S\Pi^{(N+K)}_{\mathrm{sym,\; in,out}}}{d^{N+K}_S }+ (\mathbb{1}_{\I} -\Pi^{N}_{\mathrm{sym},\; \I} ) \otimes \sigma_{\O}.
\end{align}
where $\Tr(\sigma) = 1$ and $\sigma \geq 0$. 
Since $\Pi^{(N+K)}_{\mathrm{sym,\; in,out}}$ and $(\mathbb{1}_{\I} -\Pi^{N}_{\mathrm{sym},\; \I} ) \otimes \sigma_{\O}$ are positive semidefinite operators with orthogonal support, we conclude that $J \geq 0$. Using the identity 
$\Tr_O(\Pi^{(N+K)}_{\mathrm{sym}}) = \frac{d_S^{N+K}}{d_S^N}\,\Pi^{(N)}_{\mathrm{sym}}$, 
one can directly verify that the ansatz given in Eq.~\eqref{eq:choi_estm} satisfies 
$\Tr_{\O}(J)=\mathbb{1}_{\I}$, and therefore represents a valid feasible solution of Eq.~\eqref{eq:SDP_primal}.

We have then an explicit quantum channel, described by its Choi operator $J$, that satisfies the constraints of the optimisation problem and attains an average fidelity of $\Tr(J \Pi^{(N+K)}_{\mathrm{sym}}) \frac{1}{d_S^{N+K}} = \frac{d_S^N}{d_S^{N+K}}$. 
 \vspace*{3mm}
 
\paragraph{Attainability via estimation}
Let us now show that the ansatz given by Eq.~\eqref{eq:choi_estm} represents an estimation strategy. We start by recalling that, if $J$ is the Choi operator of a linear map $\mathcal{C}$, for any linear operator $\rho$ on the input space, we can write the action of $\mathcal{C}$ as $\mathcal C(\rho)
=
\Tr_{\I}
\!\left[
\left(
\rho_{\I}^\mathsf{T}
\otimes
\mathbb{1}_{\O}
\right)
J
\right].$

Hence, using the identity $\int \mathrm{d}\psi\; (\ketbra{\psi})^{\otimes (N+K)}
= \frac{\Pi^{(N+K)}_{\mathrm{sym}}}{d_S^{N+K}}$~\cite{2013Harrow_church} we have,
\begin{align}\label{eq:trans_map}
\mathcal{C}(\rho)
&= \Tr_{\I}
\!\left[
\left(
\rho_{\I}^\mathsf{T}
\otimes
\mathbb{1}_{\O}
\right)
J
\right] \\
&= \Tr_{\I}\left( 
\left( \rho_{\I}^\mathsf{T} \otimes \mathbb{1}_{\O} \right) 
\left( \frac{d^{N}_S\Pi^{(N+K)}_{\mathrm{sym,\; in,out}}}{d^{N+K}_S }+ (\mathbb{1}_{\I} -\Pi^{N}_{\mathrm{sym},\; \I} ) \otimes \sigma_{\O}\right)
\right) \\ 
&= \Tr_{\I}\left( 
\left( \rho_{\I}^\mathsf{T} \otimes \mathbb{1}_{\O} \right) 
\left( d^{N}_S \int \mathrm{d}\psi\; (\ketbra{\psi})^{\otimes (N+K)} + (\mathbb{1}_{\I} -\Pi^{N}_{\mathrm{sym},\; \I} ) \otimes \sigma_{\O}\right)
\right) \\ 
&= d_S^{N}
\int \mathrm{d}\psi\;
\Tr\!\big(\rho^\mathsf{T} \,\ketbra{\psi}^{\otimes N}\big)\,
(\ketbra{\psi})_{\mathrm O}^{\otimes K}+
\Tr\!\big(\rho^\mathsf{T}(\mathbb{1}-\Pi^{N}_{\mathrm{sym}})\big)\,
\sigma_{\mathrm O} \\
&= d_S^{N}
\int \mathrm{d}\psi\;
\Tr\!\big(\rho \,{\ketbra{\psi}^\mathsf{T}}^{\otimes N}\big)\,
(\ketbra{\psi})_{\mathrm O}^{\otimes K}+
\Tr\!\big(\rho^\mathsf{T}(\mathbb{1}-\Pi^{N}_{\mathrm{sym}})\big)\,
\sigma_{\mathrm O}.
\end{align}

Using the invariance of the Haar measure under complex conjugation, the change of variable yields
\begin{align}
\mathcal{C}(\rho)
=
d_S^{N}
\int \mathrm{d}\psi\;
\Tr\!\of[\big]{\rho\,\ketbra{\psi}^{\otimes N}}\,
{\ketbra{\psi}^\mathsf{T}}_{\mathrm O}^{\otimes K}+
\Tr\!\big(\rho^\mathsf{T}(\mathbb{1}-\Pi^{N}_{\mathrm{sym}})\big)\,
\sigma_{\mathrm O}.
\end{align}

When restricted to inputs supported on the symmetric subspace, the second term in Eq.~\eqref{eq:trans_map} vanishes because $\Tr\!\big(\rho^\mathsf{T}(\mathbb{1}-\Pi^{N}_{\mathrm{sym}})\big)=0$. Hence,
\begin{align}\label{eq:trans_main}
    \mathcal T_{\mathrm{CP}}(\rho)
=
d_S^{N}
\int \mathrm{d}\psi\;
\Tr\!\big(\rho\,\ketbra{\psi}^{\otimes N}\big)\,
{\ketbra{\psi}^\mathsf{T}}_{\mathrm O}^{\otimes K}.
\end{align}

We can now recognise that Eq.~\eqref{eq:trans_main} is an estimation map,
\begin{equation}\label{eq:estm}
\mathcal E(\rho)
=
\int \mathrm{d}\phi\;
\Tr\!\left(M_\phi\,\rho\right)\,
\sigma_\phi,
\end{equation}
where $\{M_\phi\}$ is a POVM on the symmetric subspace $\mathrm{sym}^N(\mathbb C^d)$ considered in \cite{Hayashi1998estimation, 2013Harrow_church}, defined by
\begin{equation}
M_\phi
=
d^{N}_S\,\ketbra{\phi}^{\otimes N} \mathrm{d} \phi
\end{equation}
and the output states 
\begin{equation}
\sigma_\phi
=
{\ketbra{\phi}^\mathsf{T}}^{\otimes K}
\end{equation}
corresponding to $K$ copies of the transpose of the estimated state.

\vspace*{3mm}

\paragraph{A tight upper bound from the dual problem}

In order to obtain an upper bound on the optimal value of the primal problem, we consider the dual of the SDP given by Eq.~\eqref{eq:SDP_primal}. As in any convex maximisation problem, any feasible solution to the dual problem provides an upper bound on the primal~\cite{Skrzypczyk2023SDP}. As presented in Lemma~\ref{ref:dual}, the dual problem of Eq.~\eqref{eq:SDP_primal} reads as
\begin{align}
\min_{X}\quad & \Tr(X) \label{eq:dual_obj}\\
\textup{s.t.}\quad & X_{\I}  \otimes \mathbb{1}_{\O} \geq \Omega_{\I\; \O},\label{eq:dual_inq}
\end{align}
where $\mathcal{L}(H_{\I}) = \mathcal L(\mathcal H^{\otimes N})$ and $\mathcal{L}(H_{\O}) = \mathcal L(\mathcal H^{\otimes K})$, and $\Omega = \frac{\Pi^{(N+K)}_{\mathrm{sym}}}{d_S^{N+K}}$.

Given the structure of the performance operator $\Omega$, it is natural to look for an ansatz for $X$ that has support on the symmetric subspace of the input system. Hence we consider the ansatz $X_{\I} = \alpha \Pi^{(N)}_{\mathrm{sym}}$, such that
\begin{equation}\label{eq:dual_ansatz}
\alpha \Pi^{(N)}_{\mathrm{sym}}\otimes \mathbb{1}_{\O}
\;\ge\;
\frac{\Pi^{(N+K)}_{\mathrm{sym}}}{d_S^{N+K}} .
\end{equation}

In order for this to be a valid ansatz, we need to show that the operator inequality indeed holds for some $\alpha$. Namely, let $\lambda = (N)$ be the one-row Young diagram. Since the multiplicity space $V_{(N)}$, carrying irrep of $\S_N$,  is one-dimensional, we have $\Pi_{\mathrm{sym}}^{(N)} = E_{11}^{(N)}$, where $E_{11}^{(N)}$ is the only matrix unit for irrep $\lambda = (N)$. By the Pieri rule for matrix units~\cite{fulton1991representation,ram1992matrix,Yoshida2024PBT}, we have
\begin{equation}
E_{11}^{(N)} \otimes \mathbb{1}_d
=
\sum_{\mu \in (N)+\square} E^{\mu}_{1_\mu 1_\mu},
\end{equation}
where $1_\mu$ are labels for some basis vectors in $V_{\mu}$.
Since $(N)+\square = \{(N+1),(N,1)\}$, this becomes
\begin{equation}
\Pi_{\mathrm{sym}}^{(N)} \otimes \mathbb{1}_d
=
\Pi_{\mathrm{sym}}^{(N+1)} + \Pi_{(N,1)} (\Pi_{\mathrm{sym}}^{(N)} \otimes \mathbb{1}_d)
\;\ge\;
\Pi_{\mathrm{sym}}^{(N+1)} .
\end{equation}

Iterating this argument and tensoring with $\mathbb{1}_d^{\otimes K}$ yields
\begin{equation}
\Pi_{\mathrm{sym}}^{(N)} \otimes \mathbb{1}_d^{\otimes K}
=
\Pi_{\mathrm{sym}}^{(N+K)}
+
\sum_{\substack{\tilde{\lambda} \in Y_{N,K}^{(d)} \\ \tilde{\lambda} \neq (N+K)}} \sum_{T \in \mathrm{Paths}((N),\tilde{\lambda})}
E^{\tilde{\lambda}}_{T,T},
\end{equation}
where $Y_{N,K}^{(d)}$ denotes the set of Young diagrams $\tilde{\lambda}$ with $N+K$ boxes that can be obtained from the one-row diagram $(N)$ by successively adding $K$ boxes in admissible positions, and $\mathrm{Paths}((N),\tilde{\lambda})$ is the set of these paths for every $\tilde{\lambda}$. Hence
\begin{equation}
\Pi_{\mathrm{sym}}^{(N)} \otimes \mathbb{1}_d^{\otimes K}
\;\ge\;
\Pi_{\mathrm{sym}}^{(N+K)},
\end{equation}
which implies that Eq.~\eqref{eq:dual_ansatz} holds whenever
\begin{equation}
\alpha \geq \frac{1}{d_S^{N+K}}.
\end{equation}
Hence, to minimise the objective function for this ansatz, we take
\begin{equation}
\alpha_{\textup{min}} = \frac{1}{d_S^{N+K}},
\end{equation}
which gives
\begin{equation}
X_{\textup{ans}} = \frac{\Pi_{\mathrm{sym}}^{N}}{d_S^{N+K}}.
\end{equation}
Substituting into Eq.~\eqref{eq:dual_obj}, we obtain
\begin{equation}
\Tr(X_{\textup{ans}}) = \frac{d_S^{N}}{d_S^{N+K}}.
\end{equation}

Since this value matches the value attained via the primal formulation, the fidelity $\frac{d_S^{N}}{d_S^{N+K}}$ is optimal for the $N\! \to\! K$ transposition problem.

\end{proof}

\subsection{Uniqueness properties}

We start this section with a useful lemma that will be used in the proof of the uniqueness of the optimal strategy.

\begin{lemma}\label{lemma:alpha0}
    Let $A\in\mathcal{L}(\mathbb{C}^d)^{\otimes N}$ be a linear operator such that $\Tr(A \ketbra{\psi}^{\otimes N})=0$ for all vectors $\ket{\psi}\in\mathbb{C}^d$. Then, if $\ket{\phi}\in \mathrm{sym}^N\left(\mathbb{C}^d\right) $ is a vector in the symmetric subspace, we have $\bra{\phi}A\ket{\phi}=0$.
\end{lemma}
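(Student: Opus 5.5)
The plan is to show that the vanishing of the quadratic form on product vectors $\ket{\psi}^{\otimes N}$ forces the compression $\Pi^{(N)}_{\mathrm{sym}} A \Pi^{(N)}_{\mathrm{sym}}$ to vanish; the claim $\bra{\phi}A\ket{\phi}=0$ then follows immediately. The two ingredients are that the vectors $\ket{\psi}^{\otimes N}$ span $\mathrm{sym}^N(\mathbb{C}^d)$, and a polarisation argument that upgrades vanishing on the diagonal to vanishing of all off-diagonal matrix elements between such vectors.

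\emph{Step 1 (polarisation).} Define the function $f(\psi,\overline{\psi}) := \bra{\psi}^{\otimes N} A \ket{\psi}^{\otimes N}$ on all of $\mathbb{C}^d$; by homogeneity the hypothesis on unit vectors extends to all vectors. Writing $\ket{\psi}=\sum_i x_i\ket{i}$, this is a polynomial in $x$ and $\overline{x}$, homogeneous of degree $N$ in each, and it vanishes identically. Treat $x$ and $\overline{x}$ as independent variables $x$ and $y$ (a standard fact: a polynomial $p(x,\overline{x})$ vanishing on $\mathbb{C}^d$ is the zero polynomial in $(x,y)$, since the real and imaginary parts of $x$ are free real variables). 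We then obtain
\[
\bra{\overline{y}}^{\otimes N} A \ket{x}^{\otimes N}=0 \quad \text{for all } x,y\in\mathbb{C}^d,
\]
that is, $\bra{\chi}^{\otimes N} A \ket{\psi}^{\otimes N}=0$ for all $\ket{\chi},\ket{\psi}$. An alternative is to expand $\ket{\psi+t\chi}^{\otimes N}$ with a complex parameter $t$ and read off coefficients of $t^a\overline{t}^b$. This step is the one requiring the most care, but it is routine.

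\emph{Step 2 (spanning).} Use the standard fact that $\{\ket{\psi}^{\otimes N}\}_\psi$ spans $\mathrm{sym}^N(\mathbb{C}^d)$. This can be cited from the identity $\int \mathrm{d}\psi\, \ketbra{\psi}^{\otimes N} = \Pi^{(N)}_{\mathrm{sym}}/d_S^N$ already used in the paper: if a vector in the symmetric subspace were orthogonal to every $\ket{\psi}^{\otimes N}$, it would be annihilated by $\Pi^{(N)}_{\mathrm{sym}}$, hence be zero. Write $\ket{\phi}=\sum_k c_k\ket{\psi_k}^{\otimes N}$. Expanding $\bra{\phi}A\ket{\phi}$ bilinearly gives a sum of terms $\overline{c_k}c_l\bra{\psi_k}^{\otimes N}A\ket{\psi_l}^{\otimes N}$, each of which vanishes by Step 1, so $\bra{\phi}A\ket{\phi}=0$. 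Equivalently, this shows $\Pi^{(N)}_{\mathrm{sym}} A\,\Pi^{(N)}_{\mathrm{sym}}=0$.
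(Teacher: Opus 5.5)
Your proof is correct and follows the same route as the paper: you polarise the vanishing quadratic form to get $\bra{\chi}^{\otimes N}A\ket{\psi}^{\otimes N}=0$ for all $\ket{\chi},\ket{\psi}$, then expand a symmetric vector as a combination of product vectors $\ket{\psi}^{\otimes N}$. The differences are cosmetic. The paper polarises along the two-parameter family $a\ket{\psi_0}+b\ket{\psi_1}$ and reads off the $\overline{a}^N b^N$ coefficient, rather than treating all coordinates $x,\overline{x}$ as independent, and it assumes the spanning property, whereas you justify it via the twirl identity.
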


\begin{proof}
Our goal is to show that, for any two vectors $\ket{\psi_0},\ket{\psi_1}\in\mathbb{C}^d$, we have
\begin{equation}
\bra{\psi_0}^{\otimes N}A\ket{\psi_1}^{\otimes N}=0.
\end{equation}
This is enough to complete the proof, since any vector $\ket{\phi}\in \mathrm{sym}^N(\mathbb{C}^d)$ can be written as a linear combination of vectors of the form $\ket{\psi}^{\otimes N}$, namely $\ket{\phi} = \sum_i a_i \ket{\psi_i}^{\otimes N},$
and therefore
\begin{align}
\bra{\phi}A\ket{\phi}
&= \sum_{i,j} a_i^* a_j \bra{\psi_i}^{\otimes N}A\ket{\psi_j}^{\otimes N} \\
&= 0.
\end{align}

Let us consider $\ket{\psi_0}, \ket{\psi_1} \in \mathbb{C}^d$ be arbitrary vectors, and define
\begin{equation}
\ket{\chi}:= a \ket{\psi_0} + b \ket{\psi_1}, \qquad a,b\in\mathbb{C}.
\end{equation}
Now consider
\begin{align}
f(a,b)
&:= \Tr\!\left(A \ketbra{\chi}^{\otimes N}\right) \\
&= \bra{\chi}^{\otimes N}A\ket{\chi}^{\otimes N} \\
&= \Big( \overline{a} \bra{\psi_0} + \overline{b} \bra{\psi_1}\Big)^{\otimes N}  A \Big( a \ket{\psi_0} + b \ket{\psi_1}\Big)^{\otimes N}. 
\end{align}
By assumption, $f(a,b)=0$ for all $a,b\in\mathbb{C}$. The function $f(a,b)$ is a homogenous polynomial of degree $2N$. Since it vanishes identically, all of its coefficients must be zero. In particular, the coefficient of the monomial $\bar a^N b^N$ is precisely
\begin{equation}
\bra{\psi_0}^{\otimes N}A\ket{\psi_1}^{\otimes N},
\end{equation}
because $\bar a^N$ arises only by choosing $\bra{\psi_0}$ from each factor in $\bra{\chi}^{\otimes N}$, and $b^N$ arises only by choosing $\ket{\psi_1}$ from each factor in $\ket{\chi}^{\otimes N}$.

Hence
\begin{equation}
\bra{\psi_0}^{\otimes N}A\ket{\psi_1}^{\otimes N}=0.
\end{equation}
Since $\ket{\psi_0}$ and $\ket{\psi_1}$ were arbitrary, the claim follows.

For the sake of concreteness, we now illustrate the function $f$ case where $N=2$. The general case can be treated in the same way, albeit with more cumbersome notation. For $N=2$, we have
\begin{align}
    \ket{\chi}^{\otimes 2}
    =& (a \ket{\psi_0} + b \ket{\psi_1})^{\otimes 2} \\
    =& a^2 \ket{\psi_0}^{\otimes 2} + b^2 \ket{\psi_1}^{\otimes 2} + ab \ket{\psi_0}\ket{\psi_1} + ab \ket{\psi_1}\ket{\psi_0}.
\end{align}
Then,
\begin{align}
f(a,b):=& \Tr(A \ketbra{\chi}^{\otimes 2}) \\
=& \overline{a}^2 a^2 \bra{\psi_0}^{\otimes 2}A\ket{\psi_0}^{\otimes 2} +
\overline{a}^2 b^2 \bra{\psi_0}^{\otimes 2}A\ket{\psi_1}^{\otimes 2} +
\overline{a}^2 ab \bra{\psi_0}^{\otimes 2}A\ket{\psi_0}\ket{\psi_1} +
\overline{a}^2 ab \bra{\psi_0}^{\otimes 2}A\ket{\psi_1}\ket{\psi_0}  \\
+& \overline{b}^2 a^2 \bra{\psi_1}^{\otimes 2}A\ket{\psi_0}^{\otimes 2} +
\overline{b}^2 b^2 \bra{\psi_1}^{\otimes 2}A\ket{\psi_1}^{\otimes 2} + 
\overline{b}^2 ab \bra{\psi_1}^{\otimes 2}A\ket{\psi_0}\ket{\psi_1} + 
\overline{b}^2 ab \bra{\psi_1}^{\otimes 2}A\ket{\psi_1}\ket{\psi_0} \\
+& \overline{ab} a^2 \bra{\psi_0}\bra{\psi_1}A\ket{\psi_0}^{\otimes 2} +
\overline{ab} b^2 \bra{\psi_0}\bra{\psi_1}A\ket{\psi_1}^{\otimes 2} +
\overline{ab} ab \bra{\psi_0}\bra{\psi_1}A\ket{\psi_0}\ket{\psi_1} +
\overline{ab} ab \bra{\psi_0}\bra{\psi_1}A\ket{\psi_1}\ket{\psi_0}  \\
+& \overline{ab} a^2 \bra{\psi_1}\bra{\psi_0}A\ket{\psi_0}^{\otimes 2} +
\overline{ab} b^2 \bra{\psi_1}\bra{\psi_0}A\ket{\psi_1}^{\otimes 2} +
\overline{ab} ab \bra{\psi_1}\bra{\psi_0}A\ket{\psi_0}\ket{\psi_1} +
\overline{ab} ab \bra{\psi_1}\bra{\psi_0}A\ket{\psi_1}\ket{\psi_0}.
\end{align}

Since $f(a,b) = 0$, all the coefficients, including $\bra{\psi_0}^{\otimes N}A\ket{\psi_1}^{\otimes N}$ vanish, hence follows the claim.
\end{proof}

\begin{lemma}\label{lemma:alpha}
    Let $B\in\mathcal{L}(\mathbb{C}^d)^{\otimes N}$ be a linear operator such that $\Tr(B \ketbra{\psi}^{\otimes N})=\alpha$ for all pure states $\ket{\psi}$. Then, if $\ket{\phi}\in \mathrm{sym}^N\left(\mathbb{C}^d\right) $ is a normalised vector in the symmetric subspace, we have $\bra{\phi}B\ket{\phi}=\alpha$.
\end{lemma}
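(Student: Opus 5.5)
The plan is to reduce this statement to Lemma~\ref{lemma:alpha0} by subtracting a suitable multiple of the identity. Set
\begin{equation}
A := B - \alpha\,\mathbb{1}_d^{\otimes N}.
\end{equation}
For every normalised pure state $\ket{\psi}\in\mathbb{C}^d$, the vector $\ket{\psi}^{\otimes N}$ is also normalised. Hence
\begin{equation}
\Tr\!\left(A\ketbra{\psi}^{\otimes N}\right)=\Tr\!\left(B\ketbra{\psi}^{\otimes N}\right)-\alpha\,\Tr\!\left(\ketbra{\psi}^{\otimes N}\right)=\alpha-\alpha=0 .
\end{equation}

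Lemma~\ref{lemma:alpha0} asks for this vanishing for \emph{all} vectors $\ket{\psi}\in\mathbb{C}^d$, not only normalised ones. This is the only small point to check. For $\ket{\psi}\neq 0$ we write $\ket{\psi}=c\ket{\hat\psi}$ with $\ket{\hat\psi}$ normalised and $c=\|\psi\|$. Then
\begin{equation}
\Tr\!\left(A\ketbra{\psi}^{\otimes N}\right)=|c|^{2N}\,\Tr\!\left(A\ketbra{\hat\psi}^{\otimes N}\right)=0 .
\end{equation}
The case $\ket{\psi}=0$ is trivial. So $A$ satisfies the hypothesis of Lemma~\ref{lemma:alpha0}, which lets us treat $f(a,b)$ as an identically vanishing polynomial in that lemma's proof.

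Now take a normalised $\ket{\phi}\in\mathrm{sym}^N(\mathbb{C}^d)$. Lemma~\ref{lemma:alpha0} gives $\bra{\phi}A\ket{\phi}=0$. Therefore
\begin{equation}
\bra{\phi}B\ket{\phi}=\bra{\phi}A\ket{\phi}+\alpha\braket{\phi}{\phi}=\alpha .
\end{equation}
This proves the lemma.

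I expect no real obstacle here. The only step needing care is the homogeneity rescaling that extends the hypothesis from pure states to arbitrary vectors; everything else is a direct application of Lemma~\ref{lemma:alpha0}.
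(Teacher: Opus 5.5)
Your proof is correct and follows the paper's argument: extend the hypothesis from normalised to arbitrary vectors by homogeneity, apply Lemma~\ref{lemma:alpha0} to the shifted operator, then add back the subtracted term. The only difference is cosmetic: the paper subtracts $\alpha\,\Pi^{(N)}_{\mathrm{sym}}$ rather than $\alpha\,\mathbb{1}_d^{\otimes N}$. This changes nothing, since both operators have expectation value $1$ on normalised $\ket{\psi}^{\otimes N}$ and on normalised symmetric vectors.
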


\begin{proof}
Let us define $A:=B-\alpha \Pi,$
where $\Pi$ is the projector onto the symmetric subspace $\mathrm{sym}^N(\mathbb{C}^d)$. We first show that, for every vector $\ket{\psi}\in\mathbb{C}^d$, one has
\begin{equation}
\Tr(A \ketbra{\psi}^{\otimes N})=0.
\end{equation}
Indeed, if $\ket{\psi}\neq 0$, then
\begin{align}
\frac{\Tr(A \ketbra{\psi}^{\otimes N})}{\norm{\ket{\psi}}^{2N}}
&=
\Tr\!\left(B \frac{\ketbra{\psi}^{\otimes N}}{\norm{\ket{\psi}}^{2N}}\right)
-\Tr\!\left(\alpha \Pi \frac{\ketbra{\psi}^{\otimes N}}{\norm{\ket{\psi}}^{2N}}\right) \\
&=
\alpha-\alpha \\
&=0,
\end{align}
and the conclusion is trivial if $\ket{\psi}=0$. Hence, for every $\ket{\psi}\in\mathbb{C}^d$, it holds that
\begin{equation}
\Tr(A \ketbra{\psi}^{\otimes N})=0.
\end{equation}
We can therefore invoke Lemma~\ref{lemma:alpha0}, which implies that
\begin{equation}
\bra{\phi}A\ket{\phi}=0
\end{equation}
for all $\ket{\phi}\in \mathrm{sym}^N\left(\mathbb{C}^d\right)$.

Now let $\ket{\phi}\in\mathrm{sym}^N\left(\mathbb{C}^d\right)$ be a normalized vector. Then
\begin{align}
\bra{\phi}B\ket{\phi}
&= \bra{\phi}A\ket{\phi} + \alpha \bra{\phi}\Pi\ket{\phi} \\
&= 0+\alpha \\
&= \alpha,
\end{align}
since $\Pi\ket{\phi}=\ket{\phi}$. This finishes the proof.
\end{proof}

\begin{theorem}\label{thm:unique_avg}
When restricted to channels acting on the symmetric subspace ${\map{T}_\textup{CP}: \mathcal{L}\left(\mathrm{sym}^N\left(\mathbb{C}^d\right)\right)\to\mathcal{L}(\mathbb{C}^{d})^{\otimes K}}$, the optimal solution to the worst-case fidelity problem
\begin{align}
\max_{\mathcal{C}\in\mathrm{CPTP}}
\min_{\ket{\psi}}
F\!\Big(
\mathcal{C}\!\left(\ketbra{\psi}^{\otimes N}\right),
{\ketbra{\psi}^\mathsf{T}}^{\otimes K}
\Big)=  \frac{d_S^{N}}{d_S^{N+K}}
\end{align}
is unique, even without imposing covariance. The unique solution is given by the estimation strategy
\begin{align} \label{eq:T_CP2}
    \mathcal{T}_\textup{CP}(\rho)
    :=& d_S^{N}
    \int \mathrm{d}\psi
    \Tr\!\left(\rho \ketbra{\psi}^{\otimes N}\right)\,{\ketbra{\psi}^\mathsf{T}}^{\otimes K} \\
    =&  \frac{d_S^{N}}{d_S^{N+K}} \Tr_{1\ldots N}\left( \Pi^{(N+K)}_\textup{sym} \, \rho^\mathsf{T} \otimes \id_d^{\otimes K} \right).
\end{align}
\end{theorem}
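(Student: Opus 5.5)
Let $\mathcal{C}:\mathcal{L}(\mathrm{sym}^N(\mathbb{C}^d))\to\mathcal{L}(\mathbb{C}^d)^{\otimes K}$ be any CPTP map, not assumed covariant, whose worst-case fidelity is at least $f^\star:=d_S^N/d_S^{N+K}$. The plan has three steps: show that the fidelity of $\mathcal{C}$ must equal $f^\star$ for every pure state, lift this to a statement on the whole symmetric subspace, and then show that this forces the Choi operator of $\mathcal{C}$ to be that of $\mathcal{T}_\textup{CP}$.

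\textbf{Step 1: the fidelity is constant.} The worst-case fidelity is at most the average fidelity, and the average is at most $f^\star$ by Theorem~\ref{thm:TransNK}. The map $\mathcal{T}_\textup{CP}$ attains $f^\star$ for every input (Remark 1), so the optimum of the max--min problem is exactly $f^\star$. Now let $g(\psi):=F(\mathcal{C}(\ketbra{\psi}^{\otimes N}),{\ketbra{\psi}^\mathsf{T}}^{\otimes K})$. We have $g\ge f^\star$ pointwise and $\int g\,\mathrm{d}\psi\le f^\star$. Since $g$ is continuous and the Haar measure has full support, this gives $g(\psi)=f^\star$ for every pure $\psi$.

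\textbf{Step 2: pass to the Choi operator on the symmetric subspace.} Let $J$ be the Choi operator of $\mathcal{C}$ on $\mathrm{sym}^N\otimes(\mathbb{C}^d)^{\otimes K}$, with $\Tr_\O J=\Pi^{(N)}_{\mathrm{sym}}$. Following Lemma~\ref{ref:dual}, $g(\psi)=\Tr\!\big(J\,(\ketbra{\psi}^\mathsf{T})^{\otimes(N+K)}\big)$. Relabelling $\bar\psi\to\psi$, we get $\Tr(J\ketbra{\psi}^{\otimes(N+K)})=f^\star$ for all unit $\psi$. Lemma~\ref{lemma:alpha}, applied with $N+K$ in place of $N$, then gives $\bra{\phi}J\ket{\phi}=f^\star$ for every unit vector $\phi\in\mathrm{sym}^{N+K}$. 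Hence the compression satisfies $\Pi^{(N+K)}_{\mathrm{sym}}J\,\Pi^{(N+K)}_{\mathrm{sym}}=f^\star\,\Pi^{(N+K)}_{\mathrm{sym}}$, using polarisation: a Hermitian operator with constant quadratic form on a subspace is a scalar there.

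\textbf{Step 3 (main obstacle): control the off-diagonal blocks and the complement.} Write $P:=\Pi^{(N+K)}_{\mathrm{sym}}$ and $Q:=\Pi^{(N)}_{\mathrm{sym}}\otimes\id^{\otimes K}-P$, which is a projector because $P\le\Pi^{(N)}_{\mathrm{sym}}\otimes\id$, as shown in the proof of Theorem~\ref{thm:TransNK}. Using $\Tr_\O P=(d_S^{N+K}/d_S^N)\Pi^{(N)}_{\mathrm{sym}}$, we get
\begin{equation}
\Tr J=d_S^N,\qquad \Tr(PJP)=f^\star d_S^{N+K}=d_S^N.
\end{equation}
It follows that $\Tr(QJQ)=0$. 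Since $J\ge0$, this gives $QJQ=0$, hence $JQ=0$ and $QJ=0$, so
\begin{equation}
J=PJP=f^\star P.
\end{equation}
This is exactly the Choi operator of $\mathcal{T}_\textup{CP}$, i.e.\ the first term of Eq.~\eqref{eq:choi_estm}, with the complement term absent because the domain is the symmetric subspace. The estimation form and the partial-trace form of $\mathcal{T}_\textup{CP}$ then follow from the computation in the attainability paragraph of the proof of Theorem~\ref{thm:TransNK}, which proves uniqueness.

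The delicate point is Step 3: one must use the trace-preservation constraint together with positivity. The overlap condition alone only fixes the block $PJP$, and it is the trace count that forces the remaining blocks to vanish. A secondary subtlety in Step 1 is continuity of $g$, which is immediate because $g$ is polynomial in $\psi$ and $\bar\psi$.
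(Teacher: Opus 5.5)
Your proof is correct and follows essentially the same route as the paper. You force the fidelity to be constant by combining the average-fidelity bound of Theorem~\ref{thm:TransNK} with continuity, which the paper does by contradiction through an unnecessary twirling detour. You then use Lemma~\ref{lemma:alpha} to fix $\Pi^{(N+K)}_{\mathrm{sym}} J\, \Pi^{(N+K)}_{\mathrm{sym}}$, and the trace count plus positivity to kill the remaining blocks; your projector $Q$ inside $\mathrm{sym}^N(\mathbb{C}^d)\otimes(\mathbb{C}^d)^{\otimes K}$ plays exactly the role of the paper's blocks $X$ and $Y$.
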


\begin{proof}
    We start by noticing that if we restrict our analysis to covariant channels, the same argument for the average used in the proof Thm.~\ref{thm:unique_avg} shows that there is a unique covariant channel that maximised the worst-case fidelity. We will now prove uniqueness without the covariance assumption. 

We will split our proof in two cases. The first case is where the fidelity 
$F \of*{
\mathcal{C}(\ketbra{\psi}^{\otimes N}),
{\ketbra{\psi}^\mathsf{T}}^{\otimes K}}$
is constant for every state $\ket{\psi}\in\mathbb{C}^d$ and the case where the fidelity is not constant.

\textbf{The case where the fidelity $F\!\Big(
\mathcal{C}\!\left(\ketbra{\psi}^{\otimes N}\right),
{\ketbra{\psi}^\mathsf{T}}^{\otimes K}
\Big)$ is constant for all $\ket{\psi}$: }

Let $J$ be the Choi operator of a channel that attains the optimal value of the worst-case fidelity, and assume that its performance is constant on every pure state $\ket{\psi}$. Then
\begin{equation}
\Tr\!\left(J \ketbra{\psi}^{\otimes (N+K)}\right)=\alpha,
\qquad
\alpha=\frac{d_S^N}{d_S^{N+K}},
\end{equation}
for every pure state $\ket{\psi}\in\mathbb C^d$. By Lemma~\ref{lemma:alpha}, it follows that for every normalised vector
$\ket{\phi}\in \mathrm{sym}^{N+K}(\mathbb C^d)$,
\begin{equation}
\bra{\phi}J\ket{\phi}=\alpha.
\end{equation}
Hence, the restriction of $J$ to the symmetric subspace is
\begin{equation}
\Pi_{\mathrm{sym}}^{(N+K)} J \Pi_{\mathrm{sym}}^{(N+K)}
=
\alpha \Pi_{\mathrm{sym}}^{(N+K)}.
\end{equation}

With respect to the orthogonal decomposition $(\mathbb C^d)^{\otimes (N+K)}
=
\mathrm{sym}^{N+K}(\mathbb C^d)\oplus \mathrm{sym}^{N+K}(\mathbb C^d)^\perp,$
the operator $J$ can be written as
\begin{equation}
J=
\begin{pmatrix}
\alpha \id_{\mathrm{sym}}^{(N+K)} & X \\
X^\dagger & Y
\end{pmatrix},
\end{equation}
where $\id_{\mathrm{sym}}^{(N+K)}\in\mathcal{L}(\mathrm{sym}^{N+K}(\mathbb C^d))$ is the identity operator on the symmetric subspace.

Since $J\ge 0$, we have in particular $Y\ge 0$.
Now, the trace-preserving condition implies
\begin{equation}
\Tr_{\O}(J)=\Pi_{\mathrm{sym}}^{(N)}.
\end{equation}
Moreover, using $\Tr_{\O}\!\left(\Pi_{\mathrm{sym}}^{(N+K)}\right)
=
\frac{d_S^{N+K}}{d_S^N}\Pi_{\mathrm{sym}}^{(N)},$
together with $\alpha=\frac{d_S^N}{d_S^{N+K}}$, we obtain $\Tr_{\O}(\alpha \Pi_{\mathrm{sym}}^{(N+K)})=\Pi_{\mathrm{sym}}^{(N)}.$
Hence $\Tr_{\O}(J-\alpha \Pi_{\mathrm{sym}}^{(N+K)})=0,$ which further implies 
\begin{equation}
    \Tr(J-\alpha \Pi_{\mathrm{sym}}^{(N+K)})=0.
\end{equation}

On the other hand,
\begin{equation}
J-\alpha \Pi_{\mathrm{sym}}^{(N+K)}=
\begin{pmatrix}
0 & X \\
X^\dagger & Y
\end{pmatrix},
\end{equation}
and therefore
\begin{equation}
\Tr(J-\alpha \Pi_{\mathrm{sym}}^{(N+K)})=\Tr(Y).
\end{equation}
Thus $\Tr(Y)=0.$
Since $Y\ge 0$, it follows that $Y=0.$
Finally, since $J\ge 0$, the off-diagonal block must also vanish when the lower-right block is zero, and therefore $X=0.$
We conclude that
\begin{equation}
J=\alpha \Pi_{\mathrm{sym}}^{(N+K)}.
\end{equation}
Therefore, the optimal channel is unique.

\textbf{The case where the fidelity $F\!\Big(
\mathcal{C}\!\left(\ketbra{\psi}^{\otimes N}\right),
{\ketbra{\psi}^\mathsf{T}}^{\otimes K}
\Big)$ is not constant for all $\ket{\psi}$: } 

Let $f_{\mathcal{C}}(\psi):=
F\!\Big(
\mathcal{C}\!\left(\ketbra{\psi}^{\otimes N}\right),
{\ketbra{\psi}^\mathsf{T}}^{\otimes K}
\Big),$
and denote the worst-case fidelity by $F_{\textup{wc}}(\mathcal{C}) := f_{\mathcal{C}}(\psi_0) = \min_{\ket{\psi}} f_{\mathcal{C}}(\psi).$
Let us assume that $\mathcal{C}^{*}$ is an optimal channel, so that
\begin{equation}\label{eq:optm_wc}
    F_{\textup{wc}}(\mathcal{C}^{*}) = \max_{\mathcal{C}} f_{\mathcal{C}}(\psi_0) = \alpha=\frac{d_S^N}{d_S^{N+K}}.
\end{equation}
We proceed by contradiction. Assume that the fidelity is not constant. Then there exists a state $\ket{\phi}$ such that
\begin{align}
F\!\Big(
\mathcal{C}\!\left(\ketbra{\psi_0}^{\otimes N}\right),
{{\ketbra{\psi_0}^\mathsf{T}}^{\otimes K}}
\Big)
<
F\!\Big(
\mathcal{C}\!\left(\ketbra{\phi}^{\otimes N}\right),
{{\ketbra{\phi}^\mathsf{T}}^{\otimes K}}
\Big).
\end{align}
Moreover, since $\ket{\psi_0}$ is a minimizer,
\begin{align}
F\!\Big(
\mathcal{C}\!\left(\ketbra{\varphi}^{\otimes N}\right),
{{\ketbra{\varphi}^\mathsf{T}}^{\otimes K}}
\Big)
\geq
F\!\Big(
\mathcal{C}\!\left(\ketbra{\psi_0}^{\otimes N}\right),
{{\ketbra{\psi_0}^\mathsf{T}}^{\otimes K}}
\Big)
\qquad \forall \ket{\varphi}.
\end{align}

As discussed in Lemma~\ref{lemma:covariance}, one can twirl the channel $\mathcal{C}$ to obtain a covariant channel $\widetilde{\mathcal{C}}$ s.t. fidelity is
\begin{align}
F\!\Big(
\widetilde{\mathcal{C}}\!\left(\ketbra{\psi}^{\otimes N}\right),
{{\ketbra{\psi}^\mathsf{T}}^{\otimes K}}
\Big)
=
\int \! \mathrm{d}\varphi\,
F\!\Big(
\mathcal{C}(\ketbra{\varphi}^{\otimes N}),
{{\ketbra{\varphi}^\mathsf{T}}^{\otimes K}}
\Big).
\end{align}

Since $\widetilde{\mathcal{C}}$ is covariant, the quantity
\begin{align}
F\!\Big(
\widetilde{\mathcal{C}}\!\left(\ketbra{\psi}^{\otimes N}\right),
{{\ketbra{\psi}^\mathsf{T}}^{\otimes K}}
\Big)
=: f_{\widetilde{\mathcal C}}(\psi) = \text{const.}
\end{align}
i.e. fidelity is independent of $\ket{\psi}$. Furthermore, by Thm.~\ref{thm:TransNK} we know that for optimal channel $\mathcal{C}^{*}$ the average fidelity attains $f_{\widetilde{\mathcal C}^{*}}(\psi) = \alpha=\frac{d_S^N}{d_S^{N+K}}$.

Since $f_{\mathcal{C}}(\varphi)\ge f_{\mathcal{C}}(\psi_0)$ for all $\varphi$, and $f_{\mathcal{C}}(\phi)>f_{\mathcal{C}}(\psi_0)$ for some $\phi$, continuity of $f_{\mathcal C}$ implies that the set of states for which
\begin{align}
F\!\Big(
\mathcal{C}\!\left(\ketbra{\varphi}^{\otimes N}\right),
{{\ketbra{\varphi}^\mathsf{T}}^{\otimes K}}
\Big)
>
F\!\Big(
\mathcal{C}\!\left(\ketbra{\psi_0}^{\otimes N}\right),
{{\ketbra{\psi_0}^\mathsf{T}}^{\otimes K}}
\Big)
\end{align}
has non-zero measure. Hence, the uniform average strictly satisfies
\begin{align}
\int \! \mathrm{d}\varphi\,
F\!\Big(
\mathcal{C}(\ketbra{\varphi}^{\otimes N}),
{{\ketbra{\varphi}^\mathsf{T}}^{\otimes K}}
\Big) 
&>
F\!\Big(
\mathcal{C}\!\left(\ketbra{\psi_0}^{\otimes N}\right),
{{\ketbra{\psi_0}^\mathsf{T}}^{\otimes K}}
\Big) 
\end{align}
For the optimal $\mathcal{C}^{*}$, we then have 
\begin{align}
    \alpha > F_{\textup{wc}}(\mathcal{C}^{*}) 
\end{align}
and thus contradicts the Eq~\eqref{eq:optm_wc} and the optimality of $\alpha$ in the worst-case fidelity problem.

We conclude that, in the worst-case fidelity problem, the optimal channel must have fidelity which is constant for all $\ket{\psi}$. By the previous case, this uniquely determines the optimal channel, which is therefore given by the covariant estimation strategy.
\end{proof}

\subsection{Single-site output state structure}
\label{app:single_site}

\begin{theorem}
 Let $\mathcal{C}:\mathcal{L}(\mathbb{C}^{d})^{\otimes N}
\to \mathcal{L}(\mathbb{C}^{d})^{\otimes K}$ be a CPTP map satisfying
the covariance relations given in Eq.~\eqref{eq:cov1} and
Eq.~\eqref{eq:cov2} (see Lemma~\ref{lemma:covariance}).
For every $i\in\{1,\dots,K\}$, the single-site output state transposition is given by
\begin{equation}\label{eq:sigma_psi}
\sigma_{\psi}:= \Tr_{\overline{O_i}}\!\big(
\mathcal{C}(\ketbra{\psi}^{\otimes N})
\big)
=
\eta\,\ketbra{\psi}^\mathsf{T}
+
(1-\eta)\frac{\mathbb{1}}{d},
\end{equation}
where the parameter $\eta$ is given by
\begin{equation}
\eta=\frac{dF-1}{d-1},
\end{equation}
and $F$ is the single-site fidelity
\begin{equation}
F := \Tr(\sigma_{\psi}\ketbra{\psi}^\mathsf{T}).
\end{equation}
\end{theorem}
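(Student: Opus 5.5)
The plan is to use the unitary covariance of Eq.~\eqref{eq:cov1} to pin down the single-site output state up to one real parameter. After that, this parameter is fixed by the single-site fidelity. Permutation covariance, Eq.~\eqref{eq:cov2}, will only be used to show that the construction does not depend on the choice of output site $i$.

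\emph{Covariance of the reduced state.} Fix $\ket{\psi}$ and let $G_\psi\subset \mathrm{SU}(d)$ be its stabiliser up to phase, i.e.\ the unitaries with $U\ket{\psi}=e^{i\theta}\ket{\psi}$. For $U\in G_\psi$ we have $U^{\otimes N}\ketbra{\psi}^{\otimes N}(U^{\otimes N})^\dagger=\ketbra{\psi}^{\otimes N}$. Eq.~\eqref{eq:cov1} then gives
\[
\mathcal{C}(\ketbra{\psi}^{\otimes N})=\overline{U}^{\otimes K}\,\mathcal{C}(\ketbra{\psi}^{\otimes N})\,(U^\mathsf{T})^{\otimes K}.
\]
The partial trace over $\overline{O_i}$ commutes with local unitaries on the traced-out factors. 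Hence $\sigma_\psi=\overline{U}\sigma_\psi U^\mathsf{T}$ for all $U\in G_\psi$, i.e.\ $\sigma_\psi$ commutes with $\overline{U}$. As $U$ ranges over $G_\psi$, $\overline{U}$ ranges over the stabiliser of $\ket{\overline{\psi}}$. On $\mathbb{C}^d$ this group acts as $e^{i\theta}\oplus U'$, with $U'$ an arbitrary unitary on $\ket{\overline\psi}^\perp$ (with the determinant adjusted through the phase to stay in $\mathrm{SU}(d)$).

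\emph{Applying Schur's lemma.} The representation $e^{i\theta}\oplus U'$ decomposes into the two inequivalent irreducible pieces $\mathrm{span}\{\ket{\overline\psi}\}$ and $\ket{\overline\psi}^\perp$. For $d\ge 3$ the inequivalence is immediate since the dimensions differ. For $d=2$ one needs the phases to be free, which they are: $\theta$ and the phase on the complement are independent subject to a product-one constraint, and the resulting characters differ. Schur's lemma then gives
\[
\sigma_\psi=a\ketbra{\overline\psi}+b(\mathbb{1}-\ketbra{\overline\psi}),
\]
and this is the step that needs the most care. Since $\ketbra{\psi}^\mathsf{T}=\ketbra{\overline\psi}$, unit trace gives $a+(d-1)b=1$. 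Writing $\eta=a-b$ and $b=(1-\eta)/d$ yields exactly the form $\eta\ketbra{\psi}^\mathsf{T}+(1-\eta)\mathbb{1}/d$.

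\emph{Fixing $\eta$ and site independence.} Take the overlap with $\ketbra{\psi}^\mathsf{T}$:
\[
F=\eta+\frac{1-\eta}{d},
\]
so $\eta=\frac{dF-1}{d-1}$. By the argument above, $F$ is independent of $\psi$. By Eq.~\eqref{eq:cov2}, conjugating with a transposition $V_\sigma$ that swaps sites $i$ and $j$ shows $\sigma_\psi$ is the same for every $i$. Hence $\eta$ is independent of both $\psi$ and $i$.
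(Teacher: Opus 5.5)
Your proposal is correct and follows essentially the paper's argument: invariance of the single-site marginal under the stabiliser subgroup, a commutant argument forcing $\sigma_\psi=a\ketbra{\overline{\psi}}+b(\mathbb{1}-\ketbra{\overline{\psi}})$, normalisation, the fidelity computation, and permutation covariance for site independence. The differences are minor: you work at $\ket{\psi}$ directly and justify the commutant via Schur's lemma (including the $d=2$ inequivalence check), while the paper fixes a reference state $\ket{0}$, asserts the commutant, and transports via $\sigma_{U\psi}=\overline{U}\sigma_\psi U^{\mathsf{T}}$; note that your closing claim that $F$ is independent of $\psi$ comes from this transport relation for arbitrary $U\in\mathrm{SU}(d)$ (same partial-trace computation you did), not from the stabiliser step, though the statement as written does not require it.
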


\begin{proof}
We begin by defining the reduced single-site output state. 
For every $i \in \{1,\dots,K\}$ let
\begin{equation}
\sigma_\psi^{(i)}
:=
\Tr_{\overline{O_i}}
\Big(
\mathcal{C}\big(\ketbra{\psi}^{\otimes N}\big)
\Big).
\end{equation}

Due to permutation covariance given by Eq.~\eqref{eq:cov2_SM}, all single-site marginals are identical. Therefore the reduced state does not depend on the index $i$, and we simply write $\sigma_\psi := \sigma_\psi^{(i)}$ for all $i$. Let us now define
\begin{equation}
\sigma_{U\psi}
:=
\Tr_{\overline{O_i}}
\Big(
\mathcal{C}\big((U \ketbra{\psi} U^\dagger)^{\otimes N}\big)
\Big).
\end{equation}

Using the $\mathrm{SU}(d)$ covariance relation given in Eq.~\eqref{eq:cov1}, we obtain that for all $U \in \mathrm{SU}(d)$
\begin{equation}
\sigma_{U\psi}
=
\overline{U}\,\sigma_\psi\,U^\mathsf{T}.
\end{equation}

Let us now fix a reference state $\ket{0}$. For every pure state $\ket{\psi}$ there exists $U \in \mathrm{SU}(d)$ such that $\ket{\psi}=U\ket{0}$. Therefore the corresponding reduced state satisfies
\begin{equation}
\sigma_\psi
=
\overline{U}\,\sigma_0\,U^\mathsf{T}.
\end{equation}

We now determine the structure of $\sigma_0$. Consider a unitary operator $W \in \mathrm{SU}(d)$ such that
\begin{equation}
W\ket{0}=e^{i\theta}\ket{0}.
\end{equation}

Such operators form the stabilizer subgroup of $\ket{0}$. Applying the covariance relation to this transformation gives
\begin{equation}
\sigma_0
=
\overline{W}\,\sigma_0\,W^\mathsf{T}.
\end{equation}

Hence $\sigma_0$ commutes with the representation $W \mapsto \overline{W}(\cdot)W^T$ of the stabilizer subgroup. The commutant of this representation is generated by the operators $\ketbra{0}{0}$ and $\mathbb{1}$. Therefore $\sigma_0$ must have the form
\begin{equation}
\sigma_0
=
\eta\,\ketbra{0}{0}
+
\gamma\,\mathbb{1}
\end{equation}
for some coefficients $\eta,\gamma$.

Since $\sigma_0$ is a density operator, it satisfies the normalisation condition $\Tr(\sigma_0)=1,$
which implies $\gamma=\frac{1-\eta}{d}.$ Substituting this expression and transporting the state using the covariance relation yields
\begin{equation}
\sigma_\psi
=
\eta\,\ketbra{\psi}^\mathsf{T}
+
(1-\eta)\frac{\mathbb{1}}{d},
\end{equation}
which proves Eq.~\eqref{eq:sigma_psi}.

Finally we compute the fidelity between the target state $\ketbra{\psi}^\mathsf{T}$ and the reduced state $\sigma_\psi$,
\begin{align}
F
&=
\Tr\big(\sigma_\psi \ketbra{\psi}^\mathsf{T}\big) \\
&=
\eta\,\Tr\big(\ketbra{\psi}^\mathsf{T}\ketbra{\psi}^\mathsf{T}\big)
+
(1-\eta)\frac{1}{d} \\
&=
\eta + \frac{1-\eta}{d}.
\end{align}

Solving for $\eta$ yields
\begin{equation}
\eta=\frac{dF-1}{d-1}.
\end{equation}

\end{proof} 

\section{Optimal $N\! \to\! K$ pure state transposition and $N\to N+K$ pure state cloning are complementary channels}
\label{appendix:C}

\ComplTC*

\begin{proof}
    Let us start with the Choi operator of the transposition map,
\begin{align}
C_{\mathcal T}
&= \frac{d^{N}_S\Pi^{(N+K)}_{\mathrm{sym,\; IO}}}{d^{N+K}_S }.
\end{align}
One possible Stinespring dilation isometry can be constructed by setting
\begin{align}
\dket{V}_{\O^{\prime}\I^{\prime}\mathrm{IO}} := \sqrt{\frac{d^N_S}{d^{N+K}_S}}(\mathbb{1}_{\O^{\prime}\I^{\prime}} \otimes \Pi^{(N+K)}_{\mathrm{sym}, \I\O}) \dket{1}_{\O^{\prime}\I^{\prime}\mathrm{IO}} 
\end{align}
where $\mathcal{L}(H_{\I}) \cong  \mathcal{L}(\textup{sym}^N(\mathbb{C}^d)) $, $\mathcal{L}(H_{\I'}) \cong \mathcal L((\mathbb{C}^d)^{\otimes N})$, $\mathcal{L}(H_{\O}) \cong  \mathcal{L}(H_{\O'}) \cong \mathcal L((\mathbb{C}^d)^{\otimes K})$.
To verify this, one can readily see that indeed, $\Tr_{\mathrm{I'O'}}\!\bigl(\dketbra{V}{V}_{\mathrm{I'O'OI}}\bigr)
=
\frac{d_S^N}{d_S^{N+K}}
\Pi_{\mathrm{sym},\,\mathrm{IO}}^{(N+K)}
=
C_{\mathcal T}.$
Hence, this defines an operator $V:H_{\mathrm I}\to H_{\mathrm I'}\otimes H_{\mathrm O'}\otimes H_{\mathrm O}$ that is a Stinespring isometry for \(\mathcal T_{\mathrm{CP}}\). 

Now, using this isometry, we can obtain Werner's cloning,
\begin{equation}
\begin{aligned}
    &\mathcal{W}_{\mathrm{CP}}(\rho) = \Tr_{\I\O}(\dketbra{V}{V}_{\I\O\I'\O'} (\rho^\mathsf{T}_{\I}\otimes \mathbb{1}_{\O\I'\O'}))\\
     &= \frac{d^{N}_S}{d^{N+K}_S}\Tr_{\I\O}((\mathbb{1}_{\O^{\prime}\I^{\prime}} \otimes \Pi^{(N+K)}_{\mathrm{sym}, \I\O})\dket{1}\dbra{1}_{\I\O\I'\O'} (\mathbb{1}_{\O^{\prime}\I^{\prime}} \otimes \Pi^{(N+K)}_{\mathrm{sym}, \I\O})(\rho^\mathsf{T}_{\I}\otimes \mathbb{1}_{\O\I'\O'}))\\
     &= \frac{d^{N}_S}{d^{N+K}_S}\Tr_{\I\O}((\Pi^{(N+K)}_{\mathrm{sym}, \I^{\prime}\O^{\prime}} \otimes \mathbb{1}_{\O\I})\dket{1}\dbra{1}_{\I\O\I'\O'} (\Pi^{(N+K)}_{\mathrm{sym}, \I^{\prime}\O^{\prime}} \otimes \mathbb{1}_{\O\I})(\rho^\mathsf{T}_{\I}\otimes \mathbb{1}_{\O\I'\O'}))\\
     &= \frac{d^{N}_S}{d^{N+K}_S}\Pi^{(N+K)}_{\mathrm{sym}, \I^{\prime}\O^{\prime}}\Tr_{\I\O}(\dket{1}\dbra{1}_{\I\O\I'\O'}(\rho^\mathsf{T}_{\I}\otimes \mathbb{1}_{\I\O\I'\O'}))\Pi^{(N+K)}_{\mathrm{sym}, \I^{\prime}\O^{\prime}}
\end{aligned}
\end{equation}
Now we use $\dket{1}\dbra{1}_{\I\O\I'\O'} = \dket{1}\dbra{1}_{\I\I'}\dket{1}\dbra{1}_{\O\O'}$ and $(\rho^\mathsf{T}_{\I}\otimes \mathbb{1}_{\I'})\dket{1}\dbra{1}_{\I\I'}= (\mathbb{1}_{\I'}\otimes \rho_{\I'})\dket{1}\dbra{1}_{\I\I'}$
\begin{align}
    \mathcal{W}_{\mathrm{CP}}(\rho) = \frac{d^{N}_S}{d^{N+K}_S}\Pi^{(N+K)}_{\mathrm{sym}, \I^{\prime}\O^{\prime}}(\rho_{\I'} \otimes \mathbb{1}_{\O'})\Pi^{(N+K)}_{\mathrm{sym}, \I^{\prime}\O^{\prime}}.
\end{align}
This is exactly Werner's optimal universal symmetric \(N \!\to\! N+K\) pure-state cloning channel~\cite{Werner1998_cloning}. Hence \(\mathcal W_{\mathrm{CP}}\) is the complementary channel to \(\mathcal T_{\mathrm{CP}}\).
\end{proof}

\section{Optimal $N\to 1$ mixed state transposition}
\label{appendix:D}

\begin{lemma}\label{lem:eig_sw}
The minimal and maximal eigenvalues of the operator 
\begin{align}
    T_N:= \sum^{N}_{i = 1} F_{0i}\otimes \mathbb{1}_{\overline{i}}
\end{align}  
are respectively given by
\begin{align}
    \lambda_\textup{min} &=
    \begin{cases}
        - N &\textup{ if } N \leq d-1 \\
        - d + 1 &\textup{ if } N \geq d -1 
    \end{cases},
    \quad \quad  \\ 
    \lambda_\textup{max} & = N .
\end{align}
\end{lemma}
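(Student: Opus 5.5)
The plan is to identify $T_N$ with a Jucys--Murphy element and read off its spectrum from Young diagram contents. I read $F_{0i}$ as the flip (swap) operator exchanging the auxiliary system $0$ with system $i$ on $(\mathbb{C}^d)^{\otimes (N+1)}$. Under the Schur--Weyl representation $\pi\mapsto V_\pi$ of $S_{N+1}$, with the label $0$ placed as the last tensor factor, $T_N=\sum_{i=1}^N V_{(i\,0)}$ is exactly the image of the Jucys--Murphy element $X_{N+1}=\sum_{i=1}^{N}(i,N+1)\in\mathbb{C}[S_{N+1}]$ \cite{jucysSymmetricPolynomialsCenter1974,murphyNewConstructionYoungs1981}. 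In particular $T_N$ is Hermitian, so its extremal eigenvalues are well defined.

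\textbf{Spectrum via Young's seminormal basis.} In each irrep $\lambda\vdash N+1$ of $S_{N+1}$, the Gelfand--Tsetlin (Young) basis vectors are labelled by standard tableaux of shape $\lambda$. The operator $X_{N+1}$ acts on such a vector as the scalar $c(b)=j-i$, where $b=(i,j)$ is the box containing $N+1$, i.e. the box removed when passing from $\lambda$ to $\mu=\lambda\setminus b\vdash N$. By Schur--Weyl duality, $(\mathbb{C}^d)^{\otimes(N+1)}=\bigoplus_{\lambda\vdash N+1,\ \ell(\lambda)\le d} \mathcal{U}_\lambda\otimes V_\lambda$. Hence the spectrum of $T_N$ is precisely the set of contents $c(b)$ of removable boxes $b$ of diagrams $\lambda\vdash N+1$ with at most $d$ rows. (Every such diagram and removable box actually occurs with nonzero multiplicity, because $\dim\mathcal{U}_\lambda>0$ whenever $\ell(\lambda)\le d$.)

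\textbf{Optimising the content.} For the maximum, a box in row $i$, column $j$ of $\lambda\vdash N+1$ satisfies $j-i\le j-1\le N$. Equality is attained by $\lambda=(N+1)$ with its last box, which is allowed for every $d$; so $\lambda_{\max}=N$, realised on the symmetric subspace, where indeed $T_N=N\,\mathbb{1}$. For the minimum, $j-i\ge 1-i\ge 1-\ell(\lambda)$ and $\ell(\lambda)\le\min(N+1,d)$, so $c(b)\ge-\min(N,d-1)$. This bound is attained by the hook $\lambda=(N+2-r,1^{r-1})$ with $r=\min(N+1,d)$, removing the bottom box of the first column, whose content is $1-r$. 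This box is removable and the hook has at most $d$ rows. That gives $-N$ for $N\le d-1$ and $-(d-1)$ for $N\ge d-1$; the two expressions agree at $N=d-1$.

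The only genuinely nontrivial ingredient is the identification of the eigenvalues of the Jucys--Murphy element with contents. I would cite this as the standard result (Jucys, Murphy, or Okounkov--Vershik) rather than reprove it. The remaining care is checking that the attaining diagrams survive the row truncation $\ell(\lambda)\le d$ imposed by the $U(d)$ side, which the hook construction handles.
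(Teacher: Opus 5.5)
Your proposal is correct and follows the paper's proof. Like the paper, you write $T_N$ as the image of the Jucys--Murphy element $J_{N+1}$ under the tensor permutation representation and then optimise box contents over diagrams $\lambda\vdash N+1$ with $\ell(\lambda)\le d$. You are slightly more careful than the paper: you restrict to removable boxes (the box holding $N+1$) and give the hook $(N+2-r,1^{r-1})$ as an explicit witness for the minimum. Removability of its bottom box needs $r\ge 2$, i.e.\ $d\ge 2$; this is implicitly assumed throughout, since for $d=1$ one has $T_N=N\mathbb{1}$ and the stated minimum fails.
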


\begin{proof}
Using Lemma \ref{lem:JM-spectrum-Sn} from the Appendix, we know that the spectrum of the Jucys--Murphy element $J_{N+1}$ is given by all possible contents of boxes appearing in Young diagrams $\mu \vdash N+1$. Let us observe that
\begin{align}
    T_N = R(J_{N+1}),
\end{align}
where $R$ is the permutation representation defined by
\begin{equation}
    R(\pi) := \sum_{x \in [d]^{N+1}}
    \ket*{x_{\pi^{-1}(1)},x_{\pi^{-1}(2)},\dotsc,x_{\pi^{-1}(N+1)}}
    \bra{x_1,x_2,\dotsc,x_{N+1}}.
\end{equation}

By Schur--Weyl duality, only irreducible representations corresponding to Young diagrams with at most $d$ rows appear in this representation. Therefore the spectrum of $T_N$ is determined by all possible contents of boxes appearing in Young diagrams $\mu \vdash N+1$ with $\ell(\mu)\le d$. Recall that the content of a box in row $r$ and column $c$ is $c-r$. The maximal content is achieved at the last box of the first row, which occurs for the diagram $(N+1)$, yielding
\begin{align}
    \lambda_\textup{max} = N.
\end{align}

The minimal content is achieved at the bottom box of the first column. If the diagram has $r$ rows, this content equals $1-r$. Since admissible diagrams have at most $d$ rows and at most $N+1$ rows, the maximal possible value of $r$ is $\min(d,N+1)$. Hence
\begin{align}
    \lambda_\textup{min} = 1 - \min(d,N+1),
\end{align}
which gives
\begin{align}
    \lambda_\textup{min} =
    \begin{cases}
        -N & \textup{if } N \le d-1,\\
        -d+1 & \textup{if } N \ge d-1.
    \end{cases}
\end{align}
\end{proof}

\TransMix*

\begin{proof}
To determine the admissible range of $\eta$, we characterize when the map $\rho^{\otimes N}\mapsto \eta \rho^\mathsf{T} +(1-\eta)\frac{\mathbb{1}}{d}$ is completely positive. Complete positivity is equivalent to positivity of the corresponding Choi operator.

Since the map is implemented using $N$ copies of the input, we use the notion of CP $N$-copy implementability discussed in \cite{Dong2019positive}. By their result, the Choi operator of the corresponding $N$-copy CP extension is
\begin{align}
     \Lambda^{\textup{CP}}_{N} = \frac{1}{N} \sum^{N}_{i = 1} \Lambda^{\textup{P}}_{0i}\otimes \mathbb{1}_{\overline{i}},
\end{align}
where $\Lambda^{\textup{P}} := \sum_{ij}\ketbra{i}{j} \otimes \mathcal{P}(\ketbra{i}{j})$ denotes the Choi operator of the underlying positive map, and $\mathbb{1}_{\overline{i}}$ is the identity acting on all subsystems except the $i$th one.

We now apply this construction to the noisy transposition map. Its Choi operator is
\begin{align}
    T_{0i,\eta} = \eta F_{0i} + (1-\eta)\frac{\mathbb{1}_{0i}}{d},
\end{align}
where
\begin{align}
    F_{0i} = \sum_{k,l=0}^{d-1} \ketbra{l}{k}_{0}\otimes \ketbra{k}{l}_{i}
\end{align}
is the swap operator between systems $0$ and $i$.
Hence the corresponding $N$-copy Choi operator is
\begin{align}
    T^{\textup{CP}}_{\eta, N}
    = \frac{1}{N} \sum^{N}_{i = 1} T_{0i, \eta} \otimes \mathbb{1}_{\overline{i}}
    = \frac{1}{N}\left(
    \eta \sum^{N}_{i = 1} F_{0i}\otimes \mathbb{1}_{\overline{i}}
    + (1-\eta)\frac{N}{d}\,\mathbb{1}
    \right).
\end{align}

Therefore, if $\lambda$ is an eigenvalue of
\begin{align}
    T_N:=\sum^{N}_{i = 1} F_{0i}\otimes \mathbb{1}_{\overline{i}},
\end{align}
then the corresponding eigenvalue of $T^{\textup{CP}}_{\eta,N}$ is
\begin{align}
    \frac{1}{N}\left(\eta \lambda +(1-\eta)\frac{N}{d}\right).
\end{align}
Thus positivity of $T^{\textup{CP}}_{\eta,N}$ is equivalent to
\begin{align}
    \eta \lambda +(1-\eta)\frac{N}{d}\ge 0
\end{align}
for all eigenvalues $\lambda$ of $T_N$, or equivalently,
\begin{align}
    \eta(d\lambda-N)+N\ge 0
\end{align}
for all $\lambda\in \operatorname{spec}(T_N)$.

Now apply Lemma~\ref{lem:eig_sw}, which states that
\begin{align}
    \lambda_{\max}=N,
\end{align}
and
\begin{align}
    \lambda_{\min}=
    \begin{cases}
        -N & \textup{if } N\le d-1,\\
        -d+1 & \textup{if } N\ge d-1.
    \end{cases}
\end{align}

First, using $\lambda=\lambda_{\max}=N$, we obtain
\begin{align}
    \eta N(d-1)+N\ge 0,
\end{align}
which gives
\begin{align}
    \eta \ge -\frac{1}{d-1}.
\end{align}

Next, using $\lambda=\lambda_{\min}$, we obtain
\begin{align}
    \eta(d\lambda_{\min}-N)+N \ge 0.
\end{align}
Since $d\lambda_{\min}-N<0$, this yields
\begin{align}
    \eta \le \frac{N}{N-d\lambda_{\min}}.
\end{align}
Substituting the two cases for $\lambda_{\min}$ gives
\begin{align}
    -\frac{1}{d-1}\le \eta \le \eta_{\max},
\end{align}
where
\begin{align}
\eta_{\max}=
\begin{cases}
\frac{1}{d+1} & \textup{if } N\le d-1,\\[4pt]
\frac{N}{d(d-1)+N} & \textup{if } N\ge d-1.
\end{cases}
\end{align}
\end{proof}

\section{Representation theory background}
\label{app:rep-theory}

In this section, we review basic facts from the representation theory of the symmetric group \cite{ceccheriniSymmetric}, Schur--Weyl duality \cite{2013Harrow_church}, and Jucys--Murphy elements \cite{jucysSymmetricPolynomialsCenter1974,murphyNewConstructionYoungs1981,okounkovNewApproachRepresentation1996,ceccheriniSymmetric}.

\subsubsection{Symmetric group $S_n$}
Let $S_n$ be the symmetric group on $\{1,\dots,n\}$. A \emph{transposition} is a swap of two letters,
\begin{equation}
(i,j)\in S_n, \qquad 1\le i<j\le n,
\end{equation}
and transpositions generate $S_n$. A particularly convenient generating set is given by the \emph{adjacent transpositions}
\begin{equation}
\sigma_i \;:=\; (i,i+1), \qquad i=1,\dots,n-1,
\end{equation}
which satisfy the Coxeter relations of type $A_{n-1}$:
\begin{equation}
\sigma_i^2 = 1,
\qquad
\sigma_i \sigma_{i+1} \sigma_i = \sigma_{i+1} \sigma_i \sigma_{i+1},
\qquad
\sigma_i \sigma_j = \sigma_j \sigma_i \ \ (|i-j|\ge 2).
\end{equation}
It is often convenient to equip a group with a linear structure and consider the corresponding \emph{group algebra}, which for the symmetric group is denoted by $\mathbb{C}[S_n]$.
Equivalently, $\mathbb{C}[S_n]$ is the associative algebra generated by $\{\sigma_i\}_{i=1}^{n-1}$ subject to the above Coxeter relations.

The complex irreducible representations of $S_n$ are in bijection with partitions $\lambda\vdash n$, equivalently Young diagrams with $n$ boxes.
We write $V_\lambda$ for the irrep corresponding to the shape $\lambda$.
The group algebra decomposes as
\begin{equation}
\mathbb{C}[S_n] \;\cong\; \bigoplus_{\lambda \vdash n} \mathcal{L}(V_\lambda).
\end{equation}

\subsubsection{Schur--Weyl duality}
Fix $d\in\mathbb{N}$ and consider $(\mathbb{C}^d)^{\otimes n}$. 
There are commuting actions of $U(d)$ and $S_n$, namely the diagonal action $U^{\otimes n}$ and the tensor representation $R : S_n \rightarrow \mathcal{L}((\mathbb{C}^d)^{\otimes n})$ which maps permutations $\pi \in  S_n$ to linear operators permutation tensor factors:
\begin{equation}
U^{\otimes n}\,R(\pi) = R(\pi)\,U^{\otimes n}
\qquad \forall\, U\in U(d),\ \pi\in S_n,
\end{equation}
where matrix $R(\pi) \in \mathcal{L}((\mathbb{C}^d)^{\otimes n})$ permutes the $n$ tensor legs:
\begin{equation}
    R(\pi) := \sum_{x \in [d]^n} \ket*{x_{\pi^{-1}(1)},x_{\pi^{-1}(2)},\dotsc,x_{\pi^{-1}(n)}}\bra{x_1,x_2,\dotsc,x_n}
\end{equation}
Schur--Weyl duality yields the multiplicity-free $U(d) \times S_n$-bimodule decomposition
\begin{equation}
\label{eq:sw-duality}
(\mathbb{C}^d)^{\otimes n}
\;\cong\;
\bigoplus_{\lambda \vdash n,\ \ell(\lambda)\le d} W_\lambda \otimes V_\lambda,
\end{equation}
where $V_\lambda$ is the $S_n$-irrep, $W_\lambda$ is the corresponding $U(d)$-irrep (also can be seen as highest-weight $\mathfrak{gl}_d$-module), and $\ell(\lambda)$ is the number of rows of $\lambda$. 
One may use the projectors $\Pi_\lambda$ onto $\lambda$-isotypic components:
\begin{equation}
(\mathbb{C}^d)^{\otimes n} = \bigoplus_{\lambda} \mathrm{Im}(\Pi_\lambda),
\qquad
\mathrm{Im}(\Pi_\lambda) = W_\lambda \otimes V_\lambda.
\end{equation}
The isomorphism (\ref{eq:sw-duality}) is achieved by a unitary called Schur transform $U_{\mathrm{Sch}}$. Schur transform can be constructed from Clebsch--Gordan transforms for the unitary group, see \cite{Bacon2006a,Harrow2005,hdkrovi25burchardt}, and, alternatively, from $S_n$-Quantum Fourier Transform, see \cite{Krovi2019,hdkrovi25burchardt}.

Clebsch--Gordan transforms $\mathrm{CG}$ are defined as unitary transformations which decompose tensor products of two $U(d)$ irreducible representations $\lambda$ and $\mu$ into direct sum of $\nu$ irreps. In other words, $\mathrm{CG}$ transforms implement the following isomorphisms:
\begin{equation}
    W_\lambda \otimes W_\mu \;\cong\; \bigoplus_{\nu}  W_\nu \otimes \C^{c^\nu_{\lambda,\mu}},
\end{equation}
where multiplicity spaces have dimensions $c^\nu_{\lambda,\mu}$, known as Littlewood--Richardson coefficients.

\subsubsection{Young--Yamanouchi (orthogonal) basis}

Let $T$ be an SYT of shape $\lambda$, and let $\ket{T}$ denote the corresponding Young--Yamanouchi (orthonormal) basis vector, labelled by standard Young tableau. 
If $(r_T(i),c_T(i))$ is the row and column of the box containing $i$ in $T$, we define the content
\begin{equation}
\mathrm{cont}_T(i) \;:=\; c_T(i) - r_T(i),
\end{equation}
and we define the axial distance between boxes containing $i$ and $i+1$ as
\begin{equation}
a_T(i) \;:=\; \mathrm{cont}_T(i+1) - \mathrm{cont}_T(i).
\end{equation}
Let $T^{(i)}$ be the tableau obtained by swapping the entries $i$ and $i+1$ in $T$ (if this swap does not yield a standard tableau, we set $\ket*{T^{(i)}} := 0$ by convention). 
Then the adjacent transposition $\sigma_i=(i,i+1)$ acts by
\begin{equation}
\sigma_i \ket{T}
=
\frac{1}{a_T(i)}\,\ket{T}
+
\sqrt{1-\frac{1}{a_T(i)^2}}\,\ket*{T^{(i)}},
\end{equation}
and, whenever $T^{(i)}$ is standard (so that $\ket*{T^{(i)}}\neq 0$),
\begin{equation}
\sigma_i \ket*{T^{(i)}}
=
\sqrt{1-\frac{1}{a_T(i)^2}}\,\ket{T}
-
\frac{1}{a_T(i)}\,\ket*{T^{(i)}}.
\end{equation}
If $T^{(i)}$ is not standard then necessarily $a_T(i)=\pm 1$, hence $\sqrt{1-\frac{1}{a_T(i)^2}} = 0$,
and the above reduces to $\sigma_i\ket{T} = \pm \ket{T}$, consistent with $\sigma_i^2=1$. 

\subsubsection{Jucys--Murphy elements}
We define the Jucys--Murphy (JM) elements $J_k\in \mathbb{C}[S_n]$ by the transposition sums
\begin{equation}
J_1 \;:=\; 0,
\qquad
J_k \;:=\; \sum_{r=1}^{k-1} (r,k) \quad (k\ge 2).
\end{equation}
It is easy to see, that they commute pairwise:
\begin{equation}
[J_k,J_\ell] = 0 \qquad \forall\, k,\ell,
\end{equation}
and moreover they generate maximal commutative subalgebra of $\mathbb{C}[S_n]$. 
Their sum is central (commutes with every element of $\mathbb{C}[S_n]$) and equals the sum of all transpositions:
\begin{equation}
\sum_{k=1}^n J_k = \sum_{1\le i<j\le n} (i,j).
\end{equation}

Now fix $\lambda\vdash n$. 
The Young--Yamanouchi basis of $V_\lambda$ is indexed by SYT $T$ of shape $\lambda$, with orthonormal vectors $\ket{T}$. 
Then the JM elements are diagonal and thier spectrum is known (see, for example, \cite[Lemma B.7]{Grinko2023}):
\begin{lemma}[Jucys--Murphy spectrum for $S_n$ in the Young--Yamanouchi basis]
\label{lem:JM-spectrum-Sn}
    Let $\lambda\vdash n$ and let $\{\ket{T}\}$ be the Young--Yamanouchi (orthonormal) basis of the irrep $V_\lambda$,
    indexed by standard Young tableaux (SYT) $T$ of shape $\lambda$.
    Then each JM element $J_k$ is diagonal in the basis $\{\ket{T}\}$ and
    \begin{equation}
    J_k \ket{T} = \mathrm{cont}_T(k)\,\ket{T}
    \qquad (k=1,\dots,n).
    \end{equation}
\end{lemma}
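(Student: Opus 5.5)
The statement at the very end is Lemma~\ref{lem:JM-spectrum-Sn}: each Jucys--Murphy element $J_k$ acts diagonally on the Young--Yamanouchi basis with eigenvalue $\mathrm{cont}_T(k)$. I will prove it by induction on $k$, using only the explicit action of the adjacent transpositions $\sigma_i$ on $\ket{T}$ recalled above.

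\textbf{Base case and recursion.} For $k=1$ we have $J_1=0$ and $\mathrm{cont}_T(1)=0$, since the box containing $1$ sits at position $(1,1)$. For the inductive step I will use the standard identity
\begin{equation}
J_{i+1}=\sigma_i J_i \sigma_i + \sigma_i ,
\end{equation}
which follows from $\sigma_i (r,i)\sigma_i=(r,i+1)$ for $r<i$. Hence it suffices to know that $J_i$ is diagonal with eigenvalue $c_i:=\mathrm{cont}_T(i)$ on $\ket{T}$, and correspondingly $\mathrm{cont}_{T^{(i)}}(i)=c_{i+1}$ on $\ket*{T^{(i)}}$, since swapping $i$ and $i+1$ exchanges their boxes.

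\textbf{Main computation.} I restrict to the (at most) two-dimensional subspace spanned by $\ket{T}$ and $\ket*{T^{(i)}}$. In this basis, with $a=a_T(i)=c_{i+1}-c_i$ and $s=\sqrt{1-1/a^2}$:
\begin{equation}
\sigma_i=\begin{pmatrix}1/a & s\\ s & -1/a\end{pmatrix},
\qquad
J_i=\mathrm{diag}(c_i,c_{i+1}).
\end{equation}
I will then compute $\sigma_i J_i\sigma_i+\sigma_i$ directly. The $(1,1)$ entry is
\begin{equation}
c_i/a^2+c_{i+1}s^2+1/a=c_{i+1}+(c_i-c_{i+1})/a^2+1/a=c_{i+1}.
\end{equation}
The off-diagonal entry is $s\,(c_i-c_{i+1})/a+s=0$. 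The $(2,2)$ entry is, analogously, $c_i$, which is the content of $i+1$ in $T^{(i)}$. If $T^{(i)}$ is not standard, then $a=\pm1$ and $\sigma_i\ket{T}=\pm\ket{T}$, so $J_{i+1}\ket{T}=(c_i\pm1)\ket{T}=c_{i+1}\ket{T}$, because the two boxes are adjacent in a row or in a column.

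\textbf{Obstacle.} The main subtlety is the induction hypothesis. One must know that $J_i$ is diagonal on \emph{every} Young--Yamanouchi basis vector of shape $\lambda$, including $T^{(i)}$, simultaneously. So the induction should be on $k$ for all SYT at once, which is consistent. Everything else is the $2\times 2$ algebra above, which relies on the given formula for $\sigma_i$ being valid, that is, on the Young--Yamanouchi construction taken as input.
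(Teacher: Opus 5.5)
Your proposal is correct and follows essentially the same route as the paper: induction on $k$ for all standard tableaux at once, via the recursion $J_{k+1}=\sigma_k J_k\sigma_k+\sigma_k$ and a $2\times 2$ computation in the Young orthogonal form on $\mathrm{span}\{\ket{T},\ket*{T^{(k)}}\}$. Your version is slightly more complete than the paper's. It checks the off-diagonal entry, the $(2,2)$ entry, and the non-standard case $a=\pm1$ explicitly, whereas the paper only asserts that ``a direct multiplication gives'' the result.
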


\begin{proof}
    Set $\sigma_k:=(k,k+1)$ and note the standard recursion
    \begin{equation}
    J_{k+1} = \sigma_k J_k \sigma_k + \sigma_k,
    \qquad k=1,\dots,n-1.
    \end{equation}
    Fix an SYT $T$ and write $T^{(k)}:=\sigma_k T$ (if $\sigma_k T$ is not standard, set $\ket*{T^{(k)}}:=0$).
    Let
    \begin{equation}
    a := \mathrm{cont}_T(k+1)-\mathrm{cont}_T(k).
    \end{equation}
    In the Young--Yamanouchi orthogonal form, $\sigma_k$ acts on $\mathrm{span}\{\ket{T},\ket*{T^{(k)}}\}$ as
    \begin{equation}
    \sigma_k \ket{T}
    =
    \frac{1}{a}\ket{T}
    +
    \sqrt{1-\frac{1}{a^2}}\ket*{T^{(k)}},
    \qquad
    \sigma_k \ket*{T^{(k)}}
    =
    \sqrt{1-\frac{1}{a^2}}\ket{T}
    -
    \frac{1}{a}\ket*{T^{(k)}}.
    \end{equation}
    (When $T^{(k)}$ is not standard one has $a=\pm 1$, so the square-root term vanishes automatically.)
    
    We prove the claim by induction on $k$. The base $k=1$ is immediate since $J_1=0$ and $\mathrm{cont}_T(1)=0$.
    Assume $J_k\ket{T}=\mathrm{cont}_T(k)\ket{T}$ for all $T$. On $\mathrm{span}\{\ket{T},\ket*{T^{(k)}}\}$ we have
    \begin{equation}
    J_k = \mathrm{diag}\bigl(\mathrm{cont}_T(k),\,\mathrm{cont}_T(k+1)\bigr),
    \end{equation}
    because swapping $k$ and $k+1$ exchanges the corresponding contents. 
    Writing the representation matrix $S$ of $\sigma_k$ in the ordered basis
    $\{\ket{T},\ket*{T^{(k)}}\}$ as
    \begin{equation}
    S =
    \begin{pmatrix}
    \frac{1}{a} & \sqrt{1-\frac{1}{a^2}} \\
    \sqrt{1-\frac{1}{a^2}} & -\frac{1}{a}
    \end{pmatrix},
    \end{equation}
    the recursion reads $J_{k+1}=S\,J_k\,S+S$ on this subspace. A direct multiplication gives
    \begin{equation}
    (S\,J_k\,S+S)\ket{T} = \mathrm{cont}_T(k+1)\,\ket{T},
    \end{equation}
    Hence $J_{k+1}\ket{T}=\mathrm{cont}_T(k+1)\ket{T}$, completing the induction.
\end{proof}

\addtocontents{toc}{\protect\setcounter{tocdepth}{0}}


\begin{thebibliography}{68}%
\makeatletter
\providecommand \@ifxundefined [1]{%
 \@ifx{#1\undefined}
}%
\providecommand \@ifnum [1]{%
 \ifnum #1\expandafter \@firstoftwo
 \else \expandafter \@secondoftwo
 \fi
}%
\providecommand \@ifx [1]{%
 \ifx #1\expandafter \@firstoftwo
 \else \expandafter \@secondoftwo
 \fi
}%
\providecommand \natexlab [1]{#1}%
\providecommand \enquote  [1]{``#1''}%
\providecommand \bibnamefont  [1]{#1}%
\providecommand \bibfnamefont [1]{#1}%
\providecommand \citenamefont [1]{#1}%
\providecommand \href@noop [0]{\@secondoftwo}%
\providecommand \href [0]{\begingroup \@sanitize@url \@href}%
\providecommand \@href[1]{\@@startlink{#1}\@@href}%
\providecommand \@@href[1]{\endgroup#1\@@endlink}%
\providecommand \@sanitize@url [0]{\catcode `\\12\catcode `\$12\catcode `\&12\catcode `\#12\catcode `\^12\catcode `\_12\catcode `\%12\relax}%
\providecommand \@@startlink[1]{}%
\providecommand \@@endlink[0]{}%
\providecommand \url  [0]{\begingroup\@sanitize@url \@url }%
\providecommand \@url [1]{\endgroup\@href {#1}{\urlprefix }}%
\providecommand \urlprefix  [0]{URL }%
\providecommand \Eprint [0]{\href }%
\providecommand \doibase [0]{https://doi.org/}%
\providecommand \selectlanguage [0]{\@gobble}%
\providecommand \bibinfo  [0]{\@secondoftwo}%
\providecommand \bibfield  [0]{\@secondoftwo}%
\providecommand \translation [1]{[#1]}%
\providecommand \BibitemOpen [0]{}%
\providecommand \bibitemStop [0]{}%
\providecommand \bibitemNoStop [0]{.\EOS\space}%
\providecommand \EOS [0]{\spacefactor3000\relax}%
\providecommand \BibitemShut  [1]{\csname bibitem#1\endcsname}%
\let\auto@bib@innerbib\@empty
\bibitem [{\citenamefont {Wootters}\ and\ \citenamefont {Zurek}(1982)}]{1982Wotters_NCl}%
  \BibitemOpen
  \bibfield  {author} {\bibinfo {author} {\bibfnamefont {W.~K.}\ \bibnamefont {Wootters}}\ and\ \bibinfo {author} {\bibfnamefont {W.~H.}\ \bibnamefont {Zurek}},\ }\bibfield  {title} {\bibinfo {title} {A single quantum cannot be cloned},\ }\href {https://doi.org/10.1038/299802a0} {\bibfield  {journal} {\bibinfo  {journal} {Nature}\ }\textbf {\bibinfo {volume} {299}},\ \bibinfo {pages} {802} (\bibinfo {year} {1982})}\BibitemShut {NoStop}%
\bibitem [{\citenamefont {Dieks}(1982)}]{1982Dieks}%
  \BibitemOpen
  \bibfield  {author} {\bibinfo {author} {\bibfnamefont {D.}~\bibnamefont {Dieks}},\ }\bibfield  {title} {\bibinfo {title} {Communication by {EPR} devices},\ }\href {https://doi.org/10.1016/0375-9601(82)90084-6} {\bibfield  {journal} {\bibinfo  {journal} {Physics Letters A}\ }\textbf {\bibinfo {volume} {92}},\ \bibinfo {pages} {271} (\bibinfo {year} {1982})}\BibitemShut {NoStop}%
\bibitem [{\citenamefont {Bennett}\ and\ \citenamefont {Brassard}(2014)}]{Bennett_2014}%
  \BibitemOpen
  \bibfield  {author} {\bibinfo {author} {\bibfnamefont {C.~H.}\ \bibnamefont {Bennett}}\ and\ \bibinfo {author} {\bibfnamefont {G.}~\bibnamefont {Brassard}},\ }\bibfield  {title} {\bibinfo {title} {Quantum cryptography: Public key distribution and coin tossing},\ }\href {https://doi.org/10.1016/j.tcs.2014.05.025} {\bibfield  {journal} {\bibinfo  {journal} {Theoretical Computer Science}\ }\textbf {\bibinfo {volume} {560}},\ \bibinfo {pages} {7} (\bibinfo {year} {2014})},\ \Eprint {https://arxiv.org/abs/2003.06557} {arXiv:2003.06557 [quant-ph]} \BibitemShut {NoStop}%
\bibitem [{\citenamefont {{Horodecki}}\ \emph {et~al.}(1996)\citenamefont {{Horodecki}}, \citenamefont {{Horodecki}},\ and\ \citenamefont {{Horodecki}}}]{Horodecki1996PPT}%
  \BibitemOpen
  \bibfield  {author} {\bibinfo {author} {\bibfnamefont {M.}~\bibnamefont {{Horodecki}}}, \bibinfo {author} {\bibfnamefont {P.}~\bibnamefont {{Horodecki}}},\ and\ \bibinfo {author} {\bibfnamefont {R.}~\bibnamefont {{Horodecki}}},\ }\bibfield  {title} {\bibinfo {title} {{Separability of mixed states: necessary and sufficient conditions}},\ }\href {https://doi.org/10.1016/S0375-9601(96)00706-2} {\bibfield  {journal} {\bibinfo  {journal} {Physics Letters A}\ }\textbf {\bibinfo {volume} {223}},\ \bibinfo {pages} {1} (\bibinfo {year} {1996})},\ \Eprint {https://arxiv.org/abs/quant-ph/9605038} {arXiv:quant-ph/9605038 [quant-ph]} \BibitemShut {NoStop}%
\bibitem [{\citenamefont {{Horodecki}}\ and\ \citenamefont {{Ekert}}(2002)}]{Horodecki2000direct}%
  \BibitemOpen
  \bibfield  {author} {\bibinfo {author} {\bibfnamefont {P.}~\bibnamefont {{Horodecki}}}\ and\ \bibinfo {author} {\bibfnamefont {A.}~\bibnamefont {{Ekert}}},\ }\bibfield  {title} {\bibinfo {title} {{Method for Direct Detection of Quantum Entanglement}},\ }\href {https://doi.org/10.1103/PhysRevLett.89.127902} {\bibfield  {journal} {\bibinfo  {journal} {Phys. Rev. Lett.}\ }\textbf {\bibinfo {volume} {89}},\ \bibinfo {eid} {127902} (\bibinfo {year} {2002})},\ \Eprint {https://arxiv.org/abs/quant-ph/0111064} {arXiv:quant-ph/0111064 [quant-ph]} \BibitemShut {NoStop}%
\bibitem [{\citenamefont {Gisin}\ and\ \citenamefont {Popescu}(1999)}]{Gisin1999Antiparallel}%
  \BibitemOpen
  \bibfield  {author} {\bibinfo {author} {\bibfnamefont {N.}~\bibnamefont {Gisin}}\ and\ \bibinfo {author} {\bibfnamefont {S.}~\bibnamefont {Popescu}},\ }\bibfield  {title} {\bibinfo {title} {Spin flips and quantum information for antiparallel spins},\ }\href {https://doi.org/10.1103/PhysRevLett.83.432} {\bibfield  {journal} {\bibinfo  {journal} {Phys. Rev. Lett.}\ }\textbf {\bibinfo {volume} {83}},\ \bibinfo {pages} {432} (\bibinfo {year} {1999})},\ \Eprint {https://arxiv.org/abs/quant-ph/9901072} {arXiv:quant-ph/9901072} \BibitemShut {NoStop}%
\bibitem [{\citenamefont {Demkowicz-Dobrza{\'n}ski}(2005)}]{Demkowicz2005Estm}%
  \BibitemOpen
  \bibfield  {author} {\bibinfo {author} {\bibfnamefont {R.}~\bibnamefont {Demkowicz-Dobrza{\'n}ski}},\ }\bibfield  {title} {\bibinfo {title} {State estimation on correlated copies},\ }\href {https://doi.org/10.1103/PhysRevA.71.062321} {\bibfield  {journal} {\bibinfo  {journal} {Physical Review A}\ }\textbf {\bibinfo {volume} {71}},\ \bibinfo {pages} {062321} (\bibinfo {year} {2005})},\ \Eprint {https://arxiv.org/abs/quant-ph/0412155} {arXiv:quant-ph/0412155} \BibitemShut {NoStop}%
\bibitem [{\citenamefont {Miyazaki}(2020)}]{Miyazaki_2020}%
  \BibitemOpen
  \bibfield  {author} {\bibinfo {author} {\bibfnamefont {J.}~\bibnamefont {Miyazaki}},\ }\bibfield  {title} {\bibinfo {title} {Strongly non-quantitative classical information in quantum carriers},\ }\href@noop {} {\bibfield  {journal} {\bibinfo  {journal} {arXiv e-prints}\ } (\bibinfo {year} {2020})},\ \Eprint {https://arxiv.org/abs/2005.06685} {arXiv:2005.06685 [quant-ph]} \BibitemShut {NoStop}%
\bibitem [{\citenamefont {{Peres}}(1996)}]{1996Peres_Sep}%
  \BibitemOpen
  \bibfield  {author} {\bibinfo {author} {\bibfnamefont {A.}~\bibnamefont {{Peres}}},\ }\bibfield  {title} {\bibinfo {title} {{Separability Criterion for Density Matrices}},\ }\href {https://doi.org/10.1103/PhysRevLett.77.1413} {\bibfield  {journal} {\bibinfo  {journal} {Phys.Rev.Lett.}\ }\textbf {\bibinfo {volume} {77}},\ \bibinfo {pages} {1413} (\bibinfo {year} {1996})},\ \Eprint {https://arxiv.org/abs/quant-ph/9604005} {arXiv:quant-ph/9604005 [quant-ph]} \BibitemShut {NoStop}%
\bibitem [{\citenamefont {Bu{\v{z}}ek}\ and\ \citenamefont {Hillery}(1996)}]{1996Buzek_qubit12_cl}%
  \BibitemOpen
  \bibfield  {author} {\bibinfo {author} {\bibfnamefont {V.}~\bibnamefont {Bu{\v{z}}ek}}\ and\ \bibinfo {author} {\bibfnamefont {M.}~\bibnamefont {Hillery}},\ }\bibfield  {title} {\bibinfo {title} {Quantum copying: Beyond the no-cloning theorem},\ }\href {https://doi.org/10.1103/PhysRevA.54.1844} {\bibfield  {journal} {\bibinfo  {journal} {Phys. Rev. A}\ }\textbf {\bibinfo {volume} {54}},\ \bibinfo {pages} {1844} (\bibinfo {year} {1996})},\ \Eprint {https://arxiv.org/abs/quant-ph/9607018} {arXiv:quant-ph/9607018 [quant-ph]} \BibitemShut {NoStop}%
\bibitem [{\citenamefont {Gisin}\ and\ \citenamefont {Massar}(1997)}]{1997Gisin_qubitNM_cl}%
  \BibitemOpen
  \bibfield  {author} {\bibinfo {author} {\bibfnamefont {N.}~\bibnamefont {Gisin}}\ and\ \bibinfo {author} {\bibfnamefont {S.}~\bibnamefont {Massar}},\ }\bibfield  {title} {\bibinfo {title} {Optimal quantum cloning machines},\ }\href {https://doi.org/10.1103/PhysRevLett.79.2153} {\bibfield  {journal} {\bibinfo  {journal} {Phys. Rev. Lett.}\ }\textbf {\bibinfo {volume} {79}},\ \bibinfo {pages} {2153} (\bibinfo {year} {1997})},\ \Eprint {https://arxiv.org/abs/quant-ph/9705046} {arXiv:quant-ph/9705046 [quant-ph]} \BibitemShut {NoStop}%
\bibitem [{\citenamefont {Werner}(1998)}]{Werner1998_cloning}%
  \BibitemOpen
  \bibfield  {author} {\bibinfo {author} {\bibfnamefont {R.~F.}\ \bibnamefont {Werner}},\ }\bibfield  {title} {\bibinfo {title} {Optimal cloning of pure states},\ }\href {https://doi.org/10.1103/PhysRevA.58.1827} {\bibfield  {journal} {\bibinfo  {journal} {Physical Review A}\ }\textbf {\bibinfo {volume} {58}},\ \bibinfo {pages} {1827} (\bibinfo {year} {1998})},\ \Eprint {https://arxiv.org/abs/quant-ph/9804001} {arXiv:quant-ph/9804001 [quant-ph]} \BibitemShut {NoStop}%
\bibitem [{\citenamefont {Keyl}\ and\ \citenamefont {Werner}(1999)}]{1999Keyl_cl}%
  \BibitemOpen
  \bibfield  {author} {\bibinfo {author} {\bibfnamefont {M.}~\bibnamefont {Keyl}}\ and\ \bibinfo {author} {\bibfnamefont {R.~F.}\ \bibnamefont {Werner}},\ }\bibfield  {title} {\bibinfo {title} {Optimal cloning of pure states, testing single clones},\ }\href {https://doi.org/10.1063/1.532887} {\bibfield  {journal} {\bibinfo  {journal} {J. Math. Phys.}\ }\textbf {\bibinfo {volume} {40}},\ \bibinfo {pages} {3283} (\bibinfo {year} {1999})},\ \Eprint {https://arxiv.org/abs/quant-ph/9807010} {arXiv:quant-ph/9807010} \BibitemShut {NoStop}%
\bibitem [{\citenamefont {Scarani}\ \emph {et~al.}(2005)\citenamefont {Scarani}, \citenamefont {Iblisdir}, \citenamefont {Gisin},\ and\ \citenamefont {Ac{\'\i}n}}]{Scarani2005Clon}%
  \BibitemOpen
  \bibfield  {author} {\bibinfo {author} {\bibfnamefont {V.}~\bibnamefont {Scarani}}, \bibinfo {author} {\bibfnamefont {S.}~\bibnamefont {Iblisdir}}, \bibinfo {author} {\bibfnamefont {N.}~\bibnamefont {Gisin}},\ and\ \bibinfo {author} {\bibfnamefont {A.}~\bibnamefont {Ac{\'\i}n}},\ }\bibfield  {title} {\bibinfo {title} {Quantum cloning},\ }\href {https://doi.org/10.1103/RevModPhys.77.1225} {\bibfield  {journal} {\bibinfo  {journal} {Rev. Mod. Phys.}\ }\textbf {\bibinfo {volume} {77}},\ \bibinfo {pages} {1225} (\bibinfo {year} {2005})},\ \Eprint {https://arxiv.org/abs/quant-ph/0511088} {arXiv:quant-ph/0511088} \BibitemShut {NoStop}%
\bibitem [{\citenamefont {Dong}\ \emph {et~al.}(2019)\citenamefont {Dong}, \citenamefont {Quintino}, \citenamefont {Soeda},\ and\ \citenamefont {Murao}}]{Dong2019positive}%
  \BibitemOpen
  \bibfield  {author} {\bibinfo {author} {\bibfnamefont {Q.}~\bibnamefont {Dong}}, \bibinfo {author} {\bibfnamefont {M.~T.}\ \bibnamefont {Quintino}}, \bibinfo {author} {\bibfnamefont {A.}~\bibnamefont {Soeda}},\ and\ \bibinfo {author} {\bibfnamefont {M.}~\bibnamefont {Murao}},\ }\bibfield  {title} {\bibinfo {title} {Implementing positive maps with multiple copies of an input state},\ }\href {https://doi.org/10.1103/PhysRevA.99.052352} {\bibfield  {journal} {\bibinfo  {journal} {Phys. Rev. A}\ }\textbf {\bibinfo {volume} {99}},\ \bibinfo {pages} {052352} (\bibinfo {year} {2019})},\ \Eprint {https://arxiv.org/abs/1808.05788} {arXiv:1808.05788} \BibitemShut {NoStop}%
\bibitem [{\citenamefont {Fan}\ \emph {et~al.}(2014)\citenamefont {Fan}, \citenamefont {Wang}, \citenamefont {Jing}, \citenamefont {Yue}, \citenamefont {Shi}, \citenamefont {Zhang},\ and\ \citenamefont {Mu}}]{Fan2014Cloning}%
  \BibitemOpen
  \bibfield  {author} {\bibinfo {author} {\bibfnamefont {H.}~\bibnamefont {Fan}}, \bibinfo {author} {\bibfnamefont {Y.-N.}\ \bibnamefont {Wang}}, \bibinfo {author} {\bibfnamefont {L.}~\bibnamefont {Jing}}, \bibinfo {author} {\bibfnamefont {J.-D.}\ \bibnamefont {Yue}}, \bibinfo {author} {\bibfnamefont {H.-D.}\ \bibnamefont {Shi}}, \bibinfo {author} {\bibfnamefont {Y.-L.}\ \bibnamefont {Zhang}},\ and\ \bibinfo {author} {\bibfnamefont {L.-Z.}\ \bibnamefont {Mu}},\ }\bibfield  {title} {\bibinfo {title} {Quantum cloning machines and the applications},\ }\href {https://doi.org/10.1016/j.physrep.2014.06.004} {\bibfield  {journal} {\bibinfo  {journal} {Phys. Rep.}\ }\textbf {\bibinfo {volume} {544}},\ \bibinfo {pages} {241} (\bibinfo {year} {2014})},\ \Eprint {https://arxiv.org/abs/1301.2956} {arXiv:1301.2956 [quant-ph]} \BibitemShut {NoStop}%
\bibitem [{\citenamefont {Horodecki}(2003)}]{Horodecki2003multicopy}%
  \BibitemOpen
  \bibfield  {author} {\bibinfo {author} {\bibfnamefont {P.}~\bibnamefont {Horodecki}},\ }\bibfield  {title} {\bibinfo {title} {From limits of quantum operations to multicopy entanglement witnesses and state-spectrum estimation},\ }\href {https://doi.org/10.1103/PhysRevA.68.052101} {\bibfield  {journal} {\bibinfo  {journal} {Phys. Rev. A}\ }\textbf {\bibinfo {volume} {68}},\ \bibinfo {pages} {052101} (\bibinfo {year} {2003})},\ \Eprint {https://arxiv.org/abs/quant-ph/0111036} {arXiv:quant-ph/0111036} \BibitemShut {NoStop}%
\bibitem [{\citenamefont {Bae}(2017)}]{Bae2017}%
  \BibitemOpen
  \bibfield  {author} {\bibinfo {author} {\bibfnamefont {J.}~\bibnamefont {Bae}},\ }\bibfield  {title} {\bibinfo {title} {Designing quantum information processing via structural physical approximation},\ }\href {https://doi.org/10.1088/1361-6633/aa7d45} {\bibfield  {journal} {\bibinfo  {journal} {Reports on Progress in Physics}\ }\textbf {\bibinfo {volume} {80}},\ \bibinfo {pages} {052001} (\bibinfo {year} {2017})},\ \Eprint {https://arxiv.org/abs/1707.02583} {arXiv:1707.02583 [quant-ph]} \BibitemShut {NoStop}%
\bibitem [{\citenamefont {Bu{\v{z}}ek}\ \emph {et~al.}(1999)\citenamefont {Bu{\v{z}}ek}, \citenamefont {Hillery},\ and\ \citenamefont {Werner}}]{Buzek1999UniNOT}%
  \BibitemOpen
  \bibfield  {author} {\bibinfo {author} {\bibfnamefont {V.}~\bibnamefont {Bu{\v{z}}ek}}, \bibinfo {author} {\bibfnamefont {M.}~\bibnamefont {Hillery}},\ and\ \bibinfo {author} {\bibfnamefont {R.}~\bibnamefont {Werner}},\ }\bibfield  {title} {\bibinfo {title} {Optimal manipulations with qubits: Universal not gate},\ }\bibfield  {journal} {\bibinfo  {journal} {arXiv preprint}\ }\href {https://doi.org/10.48550/arXiv.quant-ph/9901053} {10.48550/arXiv.quant-ph/9901053} (\bibinfo {year} {1999}),\ \Eprint {https://arxiv.org/abs/quant-ph/9901053} {arXiv:quant-ph/9901053} \BibitemShut {NoStop}%
\bibitem [{\citenamefont {{Rungta}}\ \emph {et~al.}(2001)\citenamefont {{Rungta}}, \citenamefont {{Bu{\v{z}}ek}}, \citenamefont {{Caves}}, \citenamefont {{Hillery}},\ and\ \citenamefont {{Milburn}}}]{Rungta2001NOT}%
  \BibitemOpen
  \bibfield  {author} {\bibinfo {author} {\bibfnamefont {P.}~\bibnamefont {{Rungta}}}, \bibinfo {author} {\bibfnamefont {V.}~\bibnamefont {{Bu{\v{z}}ek}}}, \bibinfo {author} {\bibfnamefont {C.~M.}\ \bibnamefont {{Caves}}}, \bibinfo {author} {\bibfnamefont {M.}~\bibnamefont {{Hillery}}},\ and\ \bibinfo {author} {\bibfnamefont {G.~J.}\ \bibnamefont {{Milburn}}},\ }\bibfield  {title} {\bibinfo {title} {{Universal state inversion and concurrence in arbitrary dimensions}},\ }\href {https://doi.org/10.1103/PhysRevA.64.042315} {\bibfield  {journal} {\bibinfo  {journal} {Phys. Rev. A}\ }\textbf {\bibinfo {volume} {64}},\ \bibinfo {eid} {042315} (\bibinfo {year} {2001})},\ \Eprint {https://arxiv.org/abs/quant-ph/0102040} {arXiv:quant-ph/0102040 [quant-ph]} \BibitemShut {NoStop}%
\bibitem [{\citenamefont {Bruss}\ \emph {et~al.}(1998)\citenamefont {Bruss}, \citenamefont {Ekert},\ and\ \citenamefont {Macchiavello}}]{1998Bruss}%
  \BibitemOpen
  \bibfield  {author} {\bibinfo {author} {\bibfnamefont {D.}~\bibnamefont {Bruss}}, \bibinfo {author} {\bibfnamefont {A.}~\bibnamefont {Ekert}},\ and\ \bibinfo {author} {\bibfnamefont {C.}~\bibnamefont {Macchiavello}},\ }\bibfield  {title} {\bibinfo {title} {Optimal universal quantum cloning and state estimation},\ }\href {https://doi.org/10.1103/PhysRevLett.81.2598} {\bibfield  {journal} {\bibinfo  {journal} {Physical Review Letters}\ }\textbf {\bibinfo {volume} {81}},\ \bibinfo {pages} {2598} (\bibinfo {year} {1998})},\ \Eprint {https://arxiv.org/abs/quant-ph/9712019} {arXiv:quant-ph/9712019 [quant-ph]} \BibitemShut {NoStop}%
\bibitem [{\citenamefont {{Bruss}}\ \emph {et~al.}(1998)\citenamefont {{Bruss}}, \citenamefont {{Divincenzo}}, \citenamefont {{Ekert}}, \citenamefont {{Fuchs}}, \citenamefont {{Macchiavello}},\ and\ \citenamefont {{Smolin}}}]{Bruss1998state_dependent}%
  \BibitemOpen
  \bibfield  {author} {\bibinfo {author} {\bibfnamefont {D.}~\bibnamefont {{Bruss}}}, \bibinfo {author} {\bibfnamefont {D.~P.}\ \bibnamefont {{Divincenzo}}}, \bibinfo {author} {\bibfnamefont {A.}~\bibnamefont {{Ekert}}}, \bibinfo {author} {\bibfnamefont {C.~A.}\ \bibnamefont {{Fuchs}}}, \bibinfo {author} {\bibfnamefont {C.}~\bibnamefont {{Macchiavello}}},\ and\ \bibinfo {author} {\bibfnamefont {J.~A.}\ \bibnamefont {{Smolin}}},\ }\bibfield  {title} {\bibinfo {title} {{Optimal universal and state-dependent quantum cloning}},\ }\href {https://doi.org/10.1103/PhysRevA.57.2368} {\bibfield  {journal} {\bibinfo  {journal} {Phys. Rev. A}\ }\textbf {\bibinfo {volume} {57}},\ \bibinfo {pages} {2368} (\bibinfo {year} {1998})},\ \Eprint {https://arxiv.org/abs/quant-ph/9705038} {arXiv:quant-ph/9705038 [quant-ph]} \BibitemShut {NoStop}%
\bibitem [{\citenamefont {Barnum}\ \emph {et~al.}(1996)\citenamefont {Barnum}, \citenamefont {Caves}, \citenamefont {Fuchs}, \citenamefont {Jozsa},\ and\ \citenamefont {Schumacher}}]{Barnum1996Broadcast}%
  \BibitemOpen
  \bibfield  {author} {\bibinfo {author} {\bibfnamefont {H.}~\bibnamefont {Barnum}}, \bibinfo {author} {\bibfnamefont {C.~M.}\ \bibnamefont {Caves}}, \bibinfo {author} {\bibfnamefont {C.~A.}\ \bibnamefont {Fuchs}}, \bibinfo {author} {\bibfnamefont {R.}~\bibnamefont {Jozsa}},\ and\ \bibinfo {author} {\bibfnamefont {B.}~\bibnamefont {Schumacher}},\ }\bibfield  {title} {\bibinfo {title} {Noncommuting mixed states cannot be broadcast},\ }\href {https://doi.org/10.1103/PhysRevLett.76.2818} {\bibfield  {journal} {\bibinfo  {journal} {Phys. Rev. Lett.}\ }\textbf {\bibinfo {volume} {76}},\ \bibinfo {pages} {2818} (\bibinfo {year} {1996})},\ \Eprint {https://arxiv.org/abs/quant-ph/9511010} {arXiv:quant-ph/9511010 [quant-ph]} \BibitemShut {NoStop}%
\bibitem [{\citenamefont {{D'Ariano}}\ \emph {et~al.}(2005)\citenamefont {{D'Ariano}}, \citenamefont {{Macchiavello}},\ and\ \citenamefont {{Perinotti}}}]{DAriano2005Superbroadcasting}%
  \BibitemOpen
  \bibfield  {author} {\bibinfo {author} {\bibfnamefont {G.~M.}\ \bibnamefont {{D'Ariano}}}, \bibinfo {author} {\bibfnamefont {C.}~\bibnamefont {{Macchiavello}}},\ and\ \bibinfo {author} {\bibfnamefont {P.}~\bibnamefont {{Perinotti}}},\ }\bibfield  {title} {\bibinfo {title} {Superbroadcasting of mixed states},\ }\href {https://doi.org/10.1103/PhysRevLett.95.060503} {\bibfield  {journal} {\bibinfo  {journal} {Phys. Rev. Lett.}\ }\textbf {\bibinfo {volume} {95}},\ \bibinfo {pages} {060503} (\bibinfo {year} {2005})},\ \Eprint {https://arxiv.org/abs/quant-ph/0506251} {arXiv:quant-ph/0506251 [quant-ph]} \BibitemShut {NoStop}%
\bibitem [{\citenamefont {Buscemi}\ \emph {et~al.}(2006)\citenamefont {Buscemi}, \citenamefont {{D'Ariano}}, \citenamefont {Macchiavello},\ and\ \citenamefont {Perinotti}}]{Buscemi2006Superbroadcasting}%
  \BibitemOpen
  \bibfield  {author} {\bibinfo {author} {\bibfnamefont {F.}~\bibnamefont {Buscemi}}, \bibinfo {author} {\bibfnamefont {G.~M.}\ \bibnamefont {{D'Ariano}}}, \bibinfo {author} {\bibfnamefont {C.}~\bibnamefont {Macchiavello}},\ and\ \bibinfo {author} {\bibfnamefont {P.}~\bibnamefont {Perinotti}},\ }\bibfield  {title} {\bibinfo {title} {Universal and phase-covariant superbroadcasting for mixed qubit states},\ }\href {https://doi.org/10.1103/PhysRevA.74.042309} {\bibfield  {journal} {\bibinfo  {journal} {Phys. Rev. A}\ }\textbf {\bibinfo {volume} {74}},\ \bibinfo {pages} {042309} (\bibinfo {year} {2006})},\ \Eprint {https://arxiv.org/abs/quant-ph/0602125} {arXiv:quant-ph/0602125 [quant-ph]} \BibitemShut {NoStop}%
\bibitem [{\citenamefont {Chiribella}\ \emph {et~al.}(2007)\citenamefont {Chiribella}, \citenamefont {D'Ariano}, \citenamefont {Macchiavello}, \citenamefont {Perinotti},\ and\ \citenamefont {Buscemi}}]{Chiribella_2007Superbroad}%
  \BibitemOpen
  \bibfield  {author} {\bibinfo {author} {\bibfnamefont {G.}~\bibnamefont {Chiribella}}, \bibinfo {author} {\bibfnamefont {G.~M.}\ \bibnamefont {D'Ariano}}, \bibinfo {author} {\bibfnamefont {C.}~\bibnamefont {Macchiavello}}, \bibinfo {author} {\bibfnamefont {P.}~\bibnamefont {Perinotti}},\ and\ \bibinfo {author} {\bibfnamefont {F.}~\bibnamefont {Buscemi}},\ }\bibfield  {title} {\bibinfo {title} {Superbroadcasting and classical information},\ }\href {https://doi.org/10.1103/PhysRevA.75.012315} {\bibfield  {journal} {\bibinfo  {journal} {Physical Review A}\ }\textbf {\bibinfo {volume} {75}},\ \bibinfo {pages} {012315} (\bibinfo {year} {2007})},\ \Eprint {https://arxiv.org/abs/quant-ph/0608153} {arXiv:quant-ph/0608153} \BibitemShut {NoStop}%
\bibitem [{\citenamefont {Buscemi}\ \emph {et~al.}(2003)\citenamefont {Buscemi}, \citenamefont {D'Ariano}, \citenamefont {Perinotti},\ and\ \citenamefont {Sacchi}}]{2003Buscemi_T}%
  \BibitemOpen
  \bibfield  {author} {\bibinfo {author} {\bibfnamefont {F.}~\bibnamefont {Buscemi}}, \bibinfo {author} {\bibfnamefont {G.~M.}\ \bibnamefont {D'Ariano}}, \bibinfo {author} {\bibfnamefont {P.}~\bibnamefont {Perinotti}},\ and\ \bibinfo {author} {\bibfnamefont {M.~F.}\ \bibnamefont {Sacchi}},\ }\bibfield  {title} {\bibinfo {title} {Optimal realization of the transposition maps},\ }\href {https://doi.org/10.1016/S0375-9601(03)00954-X} {\bibfield  {journal} {\bibinfo  {journal} {Physics Letters A}\ }\textbf {\bibinfo {volume} {314}},\ \bibinfo {pages} {374} (\bibinfo {year} {2003})},\ \Eprint {https://arxiv.org/abs/quant-ph/0304175} {arXiv:quant-ph/0304175 [quant-ph]} \BibitemShut {NoStop}%
\bibitem [{\citenamefont {Mančinska}\ and\ \citenamefont {Theil}(2024)}]{Mancinska2025classification}%
  \BibitemOpen
  \bibfield  {author} {\bibinfo {author} {\bibfnamefont {L.}~\bibnamefont {Mančinska}}\ and\ \bibinfo {author} {\bibfnamefont {E.}~\bibnamefont {Theil}},\ }\bibfield  {title} {\bibinfo {title} {Classification and implementation of unitary-equivariant and permutation-invariant quantum channels},\ }\href@noop {} {\bibfield  {journal} {\bibinfo  {journal} {arXiv e-prints}\ } (\bibinfo {year} {2024})},\ \Eprint {https://arxiv.org/abs/2510.08154} {arXiv:2510.08154 [quant-ph]} \BibitemShut {NoStop}%
\bibitem [{\citenamefont {Jucys}(1974)}]{jucysSymmetricPolynomialsCenter1974}%
  \BibitemOpen
  \bibfield  {author} {\bibinfo {author} {\bibfnamefont {A.~A.}\ \bibnamefont {Jucys}},\ }\bibfield  {title} {\bibinfo {title} {Symmetric polynomials and the center of the symmetric group ring},\ }\href {https://doi.org/10.1016/0034-4877(74)90019-6} {\bibfield  {journal} {\bibinfo  {journal} {Reports on Mathematical Physics}\ }\textbf {\bibinfo {volume} {5}},\ \bibinfo {pages} {107} (\bibinfo {year} {1974})}\BibitemShut {NoStop}%
\bibitem [{\citenamefont {Murphy}(1981)}]{murphyNewConstructionYoungs1981}%
  \BibitemOpen
  \bibfield  {author} {\bibinfo {author} {\bibfnamefont {G.~E.}\ \bibnamefont {Murphy}},\ }\bibfield  {title} {\bibinfo {title} {A new construction of young's seminormal representation of the symmetric groups},\ }\href {https://doi.org/10.1016/0021-8693(81)90205-2} {\bibfield  {journal} {\bibinfo  {journal} {Journal of Algebra}\ }\textbf {\bibinfo {volume} {69}},\ \bibinfo {pages} {287} (\bibinfo {year} {1981})}\BibitemShut {NoStop}%
\bibitem [{\citenamefont {Horodecki}\ \emph {et~al.}(2003)\citenamefont {Horodecki}, \citenamefont {Shor},\ and\ \citenamefont {Ruskai}}]{Horodecki_2003}%
  \BibitemOpen
  \bibfield  {author} {\bibinfo {author} {\bibfnamefont {M.}~\bibnamefont {Horodecki}}, \bibinfo {author} {\bibfnamefont {P.~W.}\ \bibnamefont {Shor}},\ and\ \bibinfo {author} {\bibfnamefont {M.~B.}\ \bibnamefont {Ruskai}},\ }\bibfield  {title} {\bibinfo {title} {Entanglement breaking channels},\ }\href {https://doi.org/10.1142/S0129055X03001709} {\bibfield  {journal} {\bibinfo  {journal} {Reviews in Mathematical Physics}\ }\textbf {\bibinfo {volume} {15}},\ \bibinfo {pages} {629} (\bibinfo {year} {2003})},\ \Eprint {https://arxiv.org/abs/quant-ph/0302031} {arXiv:quant-ph/0302031} \BibitemShut {NoStop}%
\bibitem [{\citenamefont {Holevo}(2011)}]{holevoBook}%
  \BibitemOpen
  \bibfield  {author} {\bibinfo {author} {\bibfnamefont {A.~S.}\ \bibnamefont {Holevo}},\ }\href {https://doi.org/10.1007/978-88-7642-378-9} {\emph {\bibinfo {title} {Probabilistic and statistical aspects of quantum theory}}},\ Vol.~\bibinfo {volume} {1}\ (\bibinfo  {publisher} {Springer Science \& Business Media},\ \bibinfo {year} {2011})\BibitemShut {NoStop}%
\bibitem [{\citenamefont {{Chiribella}}\ \emph {et~al.}(2005)\citenamefont {{Chiribella}}, \citenamefont {{D'Ariano}},\ and\ \citenamefont {{Sacchi}}}]{chiribella2005estimation}%
  \BibitemOpen
  \bibfield  {author} {\bibinfo {author} {\bibfnamefont {G.}~\bibnamefont {{Chiribella}}}, \bibinfo {author} {\bibfnamefont {G.~M.}\ \bibnamefont {{D'Ariano}}},\ and\ \bibinfo {author} {\bibfnamefont {M.~F.}\ \bibnamefont {{Sacchi}}},\ }\bibfield  {title} {\bibinfo {title} {{Optimal estimation of group transformations using entanglement}},\ }\href {https://doi.org/10.1103/PhysRevA.72.042338} {\bibfield  {journal} {\bibinfo  {journal} {Phys. Rev. A}\ }\textbf {\bibinfo {volume} {72}},\ \bibinfo {eid} {042338} (\bibinfo {year} {2005})},\ \Eprint {https://arxiv.org/abs/quant-ph/0506267} {arXiv:quant-ph/0506267 [quant-ph]} \BibitemShut {NoStop}%
\bibitem [{\citenamefont {{Grinko}}\ and\ \citenamefont {{Ozols}}(2024)}]{grinko2024linear}%
  \BibitemOpen
  \bibfield  {author} {\bibinfo {author} {\bibfnamefont {D.}~\bibnamefont {{Grinko}}}\ and\ \bibinfo {author} {\bibfnamefont {M.}~\bibnamefont {{Ozols}}},\ }\bibfield  {title} {\bibinfo {title} {{Linear Programming with Unitary-Equivariant Constraints}},\ }\href {https://doi.org/10.1007/s00220-024-05108-1} {\bibfield  {journal} {\bibinfo  {journal} {Communications in Mathematical Physics}\ }\textbf {\bibinfo {volume} {405}},\ \bibinfo {eid} {278} (\bibinfo {year} {2024})},\ \Eprint {https://arxiv.org/abs/2207.05713} {arXiv:2207.05713 [quant-ph]} \BibitemShut {NoStop}%
\bibitem [{\citenamefont {Grinko}\ \emph {et~al.}(2023)\citenamefont {Grinko}, \citenamefont {Burchardt},\ and\ \citenamefont {Ozols}}]{Grinko2023}%
  \BibitemOpen
  \bibfield  {author} {\bibinfo {author} {\bibfnamefont {D.}~\bibnamefont {Grinko}}, \bibinfo {author} {\bibfnamefont {A.}~\bibnamefont {Burchardt}},\ and\ \bibinfo {author} {\bibfnamefont {M.}~\bibnamefont {Ozols}},\ }\bibfield  {title} {\bibinfo {title} {Gelfand--tsetlin basis for partially transposed permutations, with applications to quantum information},\ }\href@noop {} {\bibfield  {journal} {\bibinfo  {journal} {arXiv e-prints}\ } (\bibinfo {year} {2023})},\ \Eprint {https://arxiv.org/abs/2310.02252} {arXiv:2310.02252 [quant-ph]} \BibitemShut {NoStop}%
\bibitem [{\citenamefont {{Quintino}}\ and\ \citenamefont {{Ebler}}(2022)}]{Quintino2022deterministic}%
  \BibitemOpen
  \bibfield  {author} {\bibinfo {author} {\bibfnamefont {M.~T.}\ \bibnamefont {{Quintino}}}\ and\ \bibinfo {author} {\bibfnamefont {D.}~\bibnamefont {{Ebler}}},\ }\bibfield  {title} {\bibinfo {title} {{Deterministic transformations between unitary operations: Exponential advantage with adaptive quantum circuits and the power of indefinite causality}},\ }\href {https://doi.org/10.22331/q-2022-03-31-679} {\bibfield  {journal} {\bibinfo  {journal} {Quantum}\ }\textbf {\bibinfo {volume} {6}},\ \bibinfo {pages} {679} (\bibinfo {year} {2022})},\ \Eprint {https://arxiv.org/abs/2109.08202} {arXiv:2109.08202 [quant-ph]} \BibitemShut {NoStop}%
\bibitem [{\citenamefont {{Brzi{\'c}}}\ \emph {et~al.}(2025)\citenamefont {{Brzi{\'c}}}, \citenamefont {{Yoshida}}, \citenamefont {{Murao}},\ and\ \citenamefont {{T. Quintino}}}]{brzic2025HOQC_KN}%
  \BibitemOpen
  \bibfield  {author} {\bibinfo {author} {\bibfnamefont {V.}~\bibnamefont {{Brzi{\'c}}}}, \bibinfo {author} {\bibfnamefont {S.}~\bibnamefont {{Yoshida}}}, \bibinfo {author} {\bibfnamefont {M.}~\bibnamefont {{Murao}}},\ and\ \bibinfo {author} {\bibfnamefont {M.}~\bibnamefont {{T. Quintino}}},\ }\bibfield  {title} {\bibinfo {title} {{Higher-order quantum computing with known input states}},\ }\href@noop {} {\bibfield  {journal} {\bibinfo  {journal} {arXiv e-prints}\ } (\bibinfo {year} {2025})},\ \Eprint {https://arxiv.org/abs/2510.20530} {arXiv:2510.20530 [quant-ph]} \BibitemShut {NoStop}%
\bibitem [{\citenamefont {{Ishizaka}}\ and\ \citenamefont {{Hiroshima}}(2009)}]{ishizaka2009quantum}%
  \BibitemOpen
  \bibfield  {author} {\bibinfo {author} {\bibfnamefont {S.}~\bibnamefont {{Ishizaka}}}\ and\ \bibinfo {author} {\bibfnamefont {T.}~\bibnamefont {{Hiroshima}}},\ }\bibfield  {title} {\bibinfo {title} {{Quantum teleportation scheme by selecting one of multiple output ports}},\ }\href {https://doi.org/10.1103/PhysRevA.79.042306} {\bibfield  {journal} {\bibinfo  {journal} {Phys. Rev. A}\ }\textbf {\bibinfo {volume} {79}},\ \bibinfo {eid} {042306} (\bibinfo {year} {2009})},\ \Eprint {https://arxiv.org/abs/0901.2975} {arXiv:0901.2975 [quant-ph]} \BibitemShut {NoStop}%
\bibitem [{\citenamefont {Ishizaka}\ and\ \citenamefont {Hiroshima}(2008)}]{IshizakaHiroshima}%
  \BibitemOpen
  \bibfield  {author} {\bibinfo {author} {\bibfnamefont {S.}~\bibnamefont {Ishizaka}}\ and\ \bibinfo {author} {\bibfnamefont {T.}~\bibnamefont {Hiroshima}},\ }\bibfield  {title} {\bibinfo {title} {Asymptotic teleportation scheme as a universal programmable quantum processor},\ }\href {https://doi.org/10.1103/PhysRevLett.101.240501} {\bibfield  {journal} {\bibinfo  {journal} {Phys. Rev. Lett.}\ }\textbf {\bibinfo {volume} {101}},\ \bibinfo {pages} {240501} (\bibinfo {year} {2008})},\ \Eprint {https://arxiv.org/abs/0807.4568} {arXiv:0807.4568} \BibitemShut {NoStop}%
\bibitem [{\citenamefont {Studzi{\'n}ski}\ \emph {et~al.}(2017)\citenamefont {Studzi{\'n}ski}, \citenamefont {Strelchuk}, \citenamefont {Mozrzymas},\ and\ \citenamefont {Horodecki}}]{studzinski2017port}%
  \BibitemOpen
  \bibfield  {author} {\bibinfo {author} {\bibfnamefont {M.}~\bibnamefont {Studzi{\'n}ski}}, \bibinfo {author} {\bibfnamefont {S.}~\bibnamefont {Strelchuk}}, \bibinfo {author} {\bibfnamefont {M.}~\bibnamefont {Mozrzymas}},\ and\ \bibinfo {author} {\bibfnamefont {M.}~\bibnamefont {Horodecki}},\ }\bibfield  {title} {\bibinfo {title} {Port-based teleportation in arbitrary dimension},\ }\href {https://doi.org/10.1038/s41598-017-10051-4} {\bibfield  {journal} {\bibinfo  {journal} {Scientific reports}\ }\textbf {\bibinfo {volume} {7}},\ \bibinfo {pages} {10871} (\bibinfo {year} {2017})},\ \Eprint {https://arxiv.org/abs/1612.09260} {arXiv:1612.09260} \BibitemShut {NoStop}%
\bibitem [{\citenamefont {Mozrzymas}\ \emph {et~al.}(2018)\citenamefont {Mozrzymas}, \citenamefont {Studzi{\'n}ski}, \citenamefont {Strelchuk},\ and\ \citenamefont {Horodecki}}]{mozrzymas2018optimal}%
  \BibitemOpen
  \bibfield  {author} {\bibinfo {author} {\bibfnamefont {M.}~\bibnamefont {Mozrzymas}}, \bibinfo {author} {\bibfnamefont {M.}~\bibnamefont {Studzi{\'n}ski}}, \bibinfo {author} {\bibfnamefont {S.}~\bibnamefont {Strelchuk}},\ and\ \bibinfo {author} {\bibfnamefont {M.}~\bibnamefont {Horodecki}},\ }\bibfield  {title} {\bibinfo {title} {Optimal port-based teleportation},\ }\href {https://doi.org/10.1088/1367-2630/aab8e7} {\bibfield  {journal} {\bibinfo  {journal} {New Journal of Physics}\ }\textbf {\bibinfo {volume} {20}},\ \bibinfo {pages} {053006} (\bibinfo {year} {2018})},\ \Eprint {https://arxiv.org/abs/1707.08456} {arXiv:1707.08456} \BibitemShut {NoStop}%
\bibitem [{\citenamefont {{Studzi{\'n}ski}}\ \emph {et~al.}(2022)\citenamefont {{Studzi{\'n}ski}}, \citenamefont {{Mozrzymas}},\ and\ \citenamefont {{Kopszak}}}]{studzinski2021degradation}%
  \BibitemOpen
  \bibfield  {author} {\bibinfo {author} {\bibfnamefont {M.}~\bibnamefont {{Studzi{\'n}ski}}}, \bibinfo {author} {\bibfnamefont {M.}~\bibnamefont {{Mozrzymas}}},\ and\ \bibinfo {author} {\bibfnamefont {P.}~\bibnamefont {{Kopszak}}},\ }\bibfield  {title} {\bibinfo {title} {{Square-root measurements and degradation of the resource state in port-based teleportation scheme}},\ }\href {https://doi.org/10.1088/1751-8121/ac8530} {\bibfield  {journal} {\bibinfo  {journal} {Journal of Physics A Mathematical General}\ }\textbf {\bibinfo {volume} {55}},\ \bibinfo {eid} {375302} (\bibinfo {year} {2022})},\ \Eprint {https://arxiv.org/abs/2105.14886} {arXiv:2105.14886 [quant-ph]} \BibitemShut {NoStop}%
\bibitem [{\citenamefont {Studzi{\'n}ski}\ \emph {et~al.}(2022)\citenamefont {Studzi{\'n}ski}, \citenamefont {Mozrzymas}, \citenamefont {Kopszak},\ and\ \citenamefont {Horodecki}}]{studzinski2020efficient}%
  \BibitemOpen
  \bibfield  {author} {\bibinfo {author} {\bibfnamefont {M.}~\bibnamefont {Studzi{\'n}ski}}, \bibinfo {author} {\bibfnamefont {M.}~\bibnamefont {Mozrzymas}}, \bibinfo {author} {\bibfnamefont {P.}~\bibnamefont {Kopszak}},\ and\ \bibinfo {author} {\bibfnamefont {M.}~\bibnamefont {Horodecki}},\ }\bibfield  {title} {\bibinfo {title} {Efficient multi port-based teleportation schemes},\ }\href {https://doi.org/10.1109/TIT.2022.3187852} {\bibfield  {journal} {\bibinfo  {journal} {IEEE Transactions on Information Theory}\ } (\bibinfo {year} {2022})},\ \Eprint {https://arxiv.org/abs/2008.00984} {arXiv:2008.00984} \BibitemShut {NoStop}%
\bibitem [{\citenamefont {Kopszak}\ \emph {et~al.}(2021)\citenamefont {Kopszak}, \citenamefont {Mozrzymas}, \citenamefont {Studzi{\'{n}}ski},\ and\ \citenamefont {Horodecki}}]{kopszak2020multiport}%
  \BibitemOpen
  \bibfield  {author} {\bibinfo {author} {\bibfnamefont {P.}~\bibnamefont {Kopszak}}, \bibinfo {author} {\bibfnamefont {M.}~\bibnamefont {Mozrzymas}}, \bibinfo {author} {\bibfnamefont {M.}~\bibnamefont {Studzi{\'{n}}ski}},\ and\ \bibinfo {author} {\bibfnamefont {M.}~\bibnamefont {Horodecki}},\ }\bibfield  {title} {\bibinfo {title} {Multiport based teleportation – transmission of a large amount of quantum information},\ }\href {https://doi.org/10.22331/q-2021-11-11-576} {\bibfield  {journal} {\bibinfo  {journal} {Quantum}\ }\textbf {\bibinfo {volume} {5}},\ \bibinfo {pages} {576} (\bibinfo {year} {2021})},\ \Eprint {https://arxiv.org/abs/2008.00856} {arXiv:2008.00856} \BibitemShut {NoStop}%
\bibitem [{\citenamefont {Mozrzymas}\ \emph {et~al.}(2021)\citenamefont {Mozrzymas}, \citenamefont {Studzi{\'n}ski},\ and\ \citenamefont {Kopszak}}]{mozrzymas2021optimal}%
  \BibitemOpen
  \bibfield  {author} {\bibinfo {author} {\bibfnamefont {M.}~\bibnamefont {Mozrzymas}}, \bibinfo {author} {\bibfnamefont {M.}~\bibnamefont {Studzi{\'n}ski}},\ and\ \bibinfo {author} {\bibfnamefont {P.}~\bibnamefont {Kopszak}},\ }\bibfield  {title} {\bibinfo {title} {Optimal multi-port-based teleportation schemes},\ }\href {https://doi.org/10.22331/q-2021-06-17-477} {\bibfield  {journal} {\bibinfo  {journal} {Quantum}\ }\textbf {\bibinfo {volume} {5}},\ \bibinfo {pages} {477} (\bibinfo {year} {2021})},\ \Eprint {https://arxiv.org/abs/2011.09256} {arXiv:2011.09256} \BibitemShut {NoStop}%
\bibitem [{\citenamefont {{Hayashi}}(1998)}]{Hayashi1998estimation}%
  \BibitemOpen
  \bibfield  {author} {\bibinfo {author} {\bibfnamefont {M.}~\bibnamefont {{Hayashi}}},\ }\bibfield  {title} {\bibinfo {title} {{Asymptotic estimation theory for a finite-dimensional pure state model}},\ }\href {https://doi.org/10.1088/0305-4470/31/20/006} {\bibfield  {journal} {\bibinfo  {journal} {Journal of Physics A Mathematical General}\ }\textbf {\bibinfo {volume} {31}},\ \bibinfo {pages} {4633} (\bibinfo {year} {1998})},\ \Eprint {https://arxiv.org/abs/quant-ph/9704041} {arXiv:quant-ph/9704041 [quant-ph]} \BibitemShut {NoStop}%
\bibitem [{\citenamefont {Harrow}(2013)}]{2013Harrow_church}%
  \BibitemOpen
  \bibfield  {author} {\bibinfo {author} {\bibfnamefont {A.~W.}\ \bibnamefont {Harrow}},\ }\bibfield  {title} {\bibinfo {title} {The church of the symmetric subspace},\ }\href@noop {} {\bibfield  {journal} {\bibinfo  {journal} {arXiv e-prints}\ } (\bibinfo {year} {2013})},\ \Eprint {https://arxiv.org/abs/1308.6595} {arXiv:1308.6595} \BibitemShut {NoStop}%
\bibitem [{\citenamefont {Bu{\v{z}}ek}\ and\ \citenamefont {Hillery}(1998)}]{Buzek1998cl}%
  \BibitemOpen
  \bibfield  {author} {\bibinfo {author} {\bibfnamefont {V.}~\bibnamefont {Bu{\v{z}}ek}}\ and\ \bibinfo {author} {\bibfnamefont {M.}~\bibnamefont {Hillery}},\ }\bibfield  {title} {\bibinfo {title} {Universal optimal cloning of arbitrary quantum states: From qubits to quantum registers},\ }\href {https://doi.org/10.1103/PhysRevLett.81.5003} {\bibfield  {journal} {\bibinfo  {journal} {Phys. Rev. Lett.}\ }\textbf {\bibinfo {volume} {81}},\ \bibinfo {pages} {5003} (\bibinfo {year} {1998})},\ \Eprint {https://arxiv.org/abs/quant-ph/9801009} {arXiv:quant-ph/9801009} \BibitemShut {NoStop}%
\bibitem [{\citenamefont {Holevo}(2007)}]{Holevo2005ComplChannels}%
  \BibitemOpen
  \bibfield  {author} {\bibinfo {author} {\bibfnamefont {A.~S.}\ \bibnamefont {Holevo}},\ }\bibfield  {title} {\bibinfo {title} {Complementary channels and the additivity problem},\ }\href {https://doi.org/10.1137/S0040585X97982244} {\bibfield  {journal} {\bibinfo  {journal} {Theory of Probability \& Its Applications}\ }\textbf {\bibinfo {volume} {51}},\ \bibinfo {pages} {92} (\bibinfo {year} {2007})},\ \Eprint {https://arxiv.org/abs/quant-ph/0509101} {arXiv:quant-ph/0509101 [quant-ph]} \BibitemShut {NoStop}%
\bibitem [{\citenamefont {Devetak}\ and\ \citenamefont {Shor}(2005)}]{DevetakShor2005Capacity}%
  \BibitemOpen
  \bibfield  {author} {\bibinfo {author} {\bibfnamefont {I.}~\bibnamefont {Devetak}}\ and\ \bibinfo {author} {\bibfnamefont {P.~W.}\ \bibnamefont {Shor}},\ }\bibfield  {title} {\bibinfo {title} {The capacity of a quantum channel for simultaneous transmission of classical and quantum information},\ }\href {https://doi.org/10.1007/s00220-005-1317-6} {\bibfield  {journal} {\bibinfo  {journal} {Commun. Math. Phys.}\ }\textbf {\bibinfo {volume} {256}},\ \bibinfo {pages} {287} (\bibinfo {year} {2005})},\ \Eprint {https://arxiv.org/abs/quant-ph/0311131} {arXiv:quant-ph/0311131} \BibitemShut {NoStop}%
\bibitem [{\citenamefont {{Murao}}\ \emph {et~al.}(1999)\citenamefont {{Murao}}, \citenamefont {{Jonathan}}, \citenamefont {{Plenio}},\ and\ \citenamefont {{Vedral}}}]{Murao1999telecloning}%
  \BibitemOpen
  \bibfield  {author} {\bibinfo {author} {\bibfnamefont {M.}~\bibnamefont {{Murao}}}, \bibinfo {author} {\bibfnamefont {D.}~\bibnamefont {{Jonathan}}}, \bibinfo {author} {\bibfnamefont {M.~B.}\ \bibnamefont {{Plenio}}},\ and\ \bibinfo {author} {\bibfnamefont {V.}~\bibnamefont {{Vedral}}},\ }\bibfield  {title} {\bibinfo {title} {{Quantum telecloning and multiparticle entanglement}},\ }\href {https://doi.org/10.1103/PhysRevA.59.156} {\bibfield  {journal} {\bibinfo  {journal} {Phys. Rev. A}\ }\textbf {\bibinfo {volume} {59}},\ \bibinfo {pages} {156} (\bibinfo {year} {1999})},\ \Eprint {https://arxiv.org/abs/quant-ph/9806082} {arXiv:quant-ph/9806082 [quant-ph]} \BibitemShut {NoStop}%
\bibitem [{\citenamefont {Dür}\ and\ \citenamefont {Cirac}(2000)}]{Dur2000telecloning}%
  \BibitemOpen
  \bibfield  {author} {\bibinfo {author} {\bibfnamefont {W.}~\bibnamefont {Dür}}\ and\ \bibinfo {author} {\bibfnamefont {J.~I.}\ \bibnamefont {Cirac}},\ }\bibfield  {title} {\bibinfo {title} {Multiparty teleportation},\ }\href {https://doi.org/10.1080/09500340008244039} {\bibfield  {journal} {\bibinfo  {journal} {Journal of Modern Optics}\ }\textbf {\bibinfo {volume} {47}},\ \bibinfo {pages} {247} (\bibinfo {year} {2000})}\BibitemShut {NoStop}%
\bibitem [{\citenamefont {{Burchardt}}\ \emph {et~al.}(2025)\citenamefont {{Burchardt}}, \citenamefont {{Fei}}, \citenamefont {{Grinko}}, \citenamefont {{Larocca}}, \citenamefont {{Ozols}}, \citenamefont {{Timmerman}},\ and\ \citenamefont {{Visnevskyi}}}]{hdkrovi25burchardt}%
  \BibitemOpen
  \bibfield  {author} {\bibinfo {author} {\bibfnamefont {A.}~\bibnamefont {{Burchardt}}}, \bibinfo {author} {\bibfnamefont {J.}~\bibnamefont {{Fei}}}, \bibinfo {author} {\bibfnamefont {D.}~\bibnamefont {{Grinko}}}, \bibinfo {author} {\bibfnamefont {M.}~\bibnamefont {{Larocca}}}, \bibinfo {author} {\bibfnamefont {M.}~\bibnamefont {{Ozols}}}, \bibinfo {author} {\bibfnamefont {S.}~\bibnamefont {{Timmerman}}},\ and\ \bibinfo {author} {\bibfnamefont {V.}~\bibnamefont {{Visnevskyi}}},\ }\bibfield  {title} {\bibinfo {title} {{High-dimensional quantum Schur transforms}},\ }\href@noop {} {\bibfield  {journal} {\bibinfo  {journal} {arXiv e-prints}\ } (\bibinfo {year} {2025})},\ \Eprint {https://arxiv.org/abs/2509.22640} {arXiv:2509.22640 [quant-ph]} \BibitemShut {NoStop}%
\bibitem [{\citenamefont {{Zyczkowski}}\ and\ \citenamefont {{Slomczynski}}(2001)}]{2001Zyczkowski_geometry}%
  \BibitemOpen
  \bibfield  {author} {\bibinfo {author} {\bibfnamefont {K.}~\bibnamefont {{Zyczkowski}}}\ and\ \bibinfo {author} {\bibfnamefont {W.}~\bibnamefont {{Slomczynski}}},\ }\bibfield  {title} {\bibinfo {title} {{The Monge metric on the sphere and geometry of quantum states}},\ }\href {https://doi.org/10.1088/0305-4470/34/34/311} {\bibfield  {journal} {\bibinfo  {journal} {Journal of Physics A Mathematical General}\ ,\ \bibinfo {pages} {6689}} (\bibinfo {year} {2001})},\ \Eprint {https://arxiv.org/abs/quant-ph/0008016} {arXiv:quant-ph/0008016 [quant-ph]} \BibitemShut {NoStop}%
\bibitem [{\citenamefont {{Zyczkowski}}\ and\ \citenamefont {{Sommers}}(2001)}]{2001Zyczkowski_meas}%
  \BibitemOpen
  \bibfield  {author} {\bibinfo {author} {\bibfnamefont {K.}~\bibnamefont {{Zyczkowski}}}\ and\ \bibinfo {author} {\bibfnamefont {H.-J.}\ \bibnamefont {{Sommers}}},\ }\bibfield  {title} {\bibinfo {title} {{Induced measures in the space of mixed quantum states}},\ }\href {https://doi.org/10.1088/0305-4470/34/35/335} {\bibfield  {journal} {\bibinfo  {journal} {Journal of Physics A Mathematical General}\ }\textbf {\bibinfo {volume} {34}},\ \bibinfo {pages} {7111} (\bibinfo {year} {2001})},\ \Eprint {https://arxiv.org/abs/quant-ph/0012101} {arXiv:quant-ph/0012101 [quant-ph]} \BibitemShut {NoStop}%
\bibitem [{\citenamefont {Osipov}\ \emph {et~al.}(2010)\citenamefont {Osipov}, \citenamefont {Sommers},\ and\ \citenamefont {{\.Z}yczkowski}}]{Osipov2010}%
  \BibitemOpen
  \bibfield  {author} {\bibinfo {author} {\bibfnamefont {V.~A.}\ \bibnamefont {Osipov}}, \bibinfo {author} {\bibfnamefont {H.-J.}\ \bibnamefont {Sommers}},\ and\ \bibinfo {author} {\bibfnamefont {K.}~\bibnamefont {{\.Z}yczkowski}},\ }\bibfield  {title} {\bibinfo {title} {Random bures mixed states and the distribution of their purity},\ }\href {https://doi.org/10.1088/1751-8113/43/5/055302} {\bibfield  {journal} {\bibinfo  {journal} {Journal of Physics A: Mathematical and Theoretical}\ }\textbf {\bibinfo {volume} {43}},\ \bibinfo {pages} {055302} (\bibinfo {year} {2010})},\ \Eprint {https://arxiv.org/abs/0909.5094} {arXiv:0909.5094 [cond-mat.stat-mech]} \BibitemShut {NoStop}%
\bibitem [{\citenamefont {{Chiribella}}\ \emph {et~al.}(2008)\citenamefont {{Chiribella}}, \citenamefont {{D'Ariano}},\ and\ \citenamefont {{Perinotti}}}]{Chiribella2008cloning}%
  \BibitemOpen
  \bibfield  {author} {\bibinfo {author} {\bibfnamefont {G.}~\bibnamefont {{Chiribella}}}, \bibinfo {author} {\bibfnamefont {G.~M.}\ \bibnamefont {{D'Ariano}}},\ and\ \bibinfo {author} {\bibfnamefont {P.}~\bibnamefont {{Perinotti}}},\ }\bibfield  {title} {\bibinfo {title} {{Optimal Cloning of Unitary Transformation}},\ }\href {https://doi.org/10.1103/PhysRevLett.101.180504} {\bibfield  {journal} {\bibinfo  {journal} {Phys. Rev. Lett.}\ }\textbf {\bibinfo {volume} {101}},\ \bibinfo {eid} {180504} (\bibinfo {year} {2008})},\ \Eprint {https://arxiv.org/abs/0804.0129} {arXiv:0804.0129 [quant-ph]} \BibitemShut {NoStop}%
\bibitem [{\citenamefont {{Sekatski}}\ \emph {et~al.}(2025)\citenamefont {{Sekatski}}, \citenamefont {{Guryanova}}, \citenamefont {{Bhavya Teja Kothakonda}},\ and\ \citenamefont {{Skotiniotis}}}]{Sekatski2025cloning}%
  \BibitemOpen
  \bibfield  {author} {\bibinfo {author} {\bibfnamefont {P.}~\bibnamefont {{Sekatski}}}, \bibinfo {author} {\bibfnamefont {Y.}~\bibnamefont {{Guryanova}}}, \bibinfo {author} {\bibfnamefont {N.}~\bibnamefont {{Bhavya Teja Kothakonda}}},\ and\ \bibinfo {author} {\bibfnamefont {M.}~\bibnamefont {{Skotiniotis}}},\ }\bibfield  {title} {\bibinfo {title} {{Cloning Quantum Channels}},\ }\href@noop {} {\bibfield  {journal} {\bibinfo  {journal} {arXiv e-prints}\ } (\bibinfo {year} {2025})},\ \Eprint {https://arxiv.org/abs/2509.08059} {arXiv:2509.08059 [quant-ph]} \BibitemShut {NoStop}%
\bibitem [{\citenamefont {Ebler}\ \emph {et~al.}(2023)\citenamefont {Ebler}, \citenamefont {Horodecki}, \citenamefont {Marciniak}, \citenamefont {Młynik}, \citenamefont {Quintino},\ and\ \citenamefont {Studziński}}]{Ebler2022conjugation}%
  \BibitemOpen
  \bibfield  {author} {\bibinfo {author} {\bibfnamefont {D.}~\bibnamefont {Ebler}}, \bibinfo {author} {\bibfnamefont {M.}~\bibnamefont {Horodecki}}, \bibinfo {author} {\bibfnamefont {M.}~\bibnamefont {Marciniak}}, \bibinfo {author} {\bibfnamefont {T.}~\bibnamefont {Młynik}}, \bibinfo {author} {\bibfnamefont {M.~T.}\ \bibnamefont {Quintino}},\ and\ \bibinfo {author} {\bibfnamefont {M.}~\bibnamefont {Studziński}},\ }\bibfield  {title} {\bibinfo {title} {Optimal universal quantum circuits for unitary complex conjugation},\ }\href {https://doi.org/https://doi.org/10.1109/TIT.2023.3263771} {\bibfield  {journal} {\bibinfo  {journal} {IEEE Transactions on Information Theory}\ }\textbf {\bibinfo {volume} {69}},\ \bibinfo {pages} {5069} (\bibinfo {year} {2023})},\ \Eprint {https://arxiv.org/abs/2206.00107} {arXiv:2206.00107 [quant-ph]} \BibitemShut {NoStop}%
\bibitem [{\citenamefont {Skrzypczyk}\ and\ \citenamefont {Cavalcanti}(2023)}]{Skrzypczyk2023SDP}%
  \BibitemOpen
  \bibfield  {author} {\bibinfo {author} {\bibfnamefont {P.}~\bibnamefont {Skrzypczyk}}\ and\ \bibinfo {author} {\bibfnamefont {D.}~\bibnamefont {Cavalcanti}},\ }\href {https://doi.org/10.1088/978-0-7503-3343-6} {\emph {\bibinfo {title} {Semidefinite Programming in Quantum Information Science}}},\ 2053-2563\ (\bibinfo  {publisher} {IOP Publishing},\ \bibinfo {year} {2023})\BibitemShut {NoStop}%
\bibitem [{\citenamefont {Fulton}\ and\ \citenamefont {Harris}(1991)}]{fulton1991representation}%
  \BibitemOpen
  \bibfield  {author} {\bibinfo {author} {\bibfnamefont {W.}~\bibnamefont {Fulton}}\ and\ \bibinfo {author} {\bibfnamefont {J.}~\bibnamefont {Harris}},\ }\href {https://books.google.fr/books?id=6GUH8ARxhp8C} {\emph {\bibinfo {title} {Representation Theory: A First Course}}},\ Graduate Texts in Mathematics\ (\bibinfo  {publisher} {Springer New York},\ \bibinfo {year} {1991})\BibitemShut {NoStop}%
\bibitem [{\citenamefont {Ram}\ and\ \citenamefont {Wenzl}(1992)}]{ram1992matrix}%
  \BibitemOpen
  \bibfield  {author} {\bibinfo {author} {\bibfnamefont {A.}~\bibnamefont {Ram}}\ and\ \bibinfo {author} {\bibfnamefont {H.}~\bibnamefont {Wenzl}},\ }\bibfield  {title} {\bibinfo {title} {Matrix units for centralizer algebras},\ }\href@noop {} {\bibfield  {journal} {\bibinfo  {journal} {Journal of Algebra}\ }\textbf {\bibinfo {volume} {145}},\ \bibinfo {pages} {378} (\bibinfo {year} {1992})}\BibitemShut {NoStop}%
\bibitem [{\citenamefont {Yoshida}\ \emph {et~al.}(2024)\citenamefont {Yoshida}, \citenamefont {Koizumi}, \citenamefont {Studzi{\'n}ski}, \citenamefont {T.~Quintino},\ and\ \citenamefont {Murao}}]{Yoshida2024PBT}%
  \BibitemOpen
  \bibfield  {author} {\bibinfo {author} {\bibfnamefont {S.}~\bibnamefont {Yoshida}}, \bibinfo {author} {\bibfnamefont {Y.}~\bibnamefont {Koizumi}}, \bibinfo {author} {\bibfnamefont {M.}~\bibnamefont {Studzi{\'n}ski}}, \bibinfo {author} {\bibfnamefont {M.}~\bibnamefont {T.~Quintino}},\ and\ \bibinfo {author} {\bibfnamefont {M.}~\bibnamefont {Murao}},\ }\bibfield  {title} {\bibinfo {title} {One-to-one correspondence between deterministic port-based teleportation and unitary estimation},\ }\href@noop {} {\bibfield  {journal} {\bibinfo  {journal} {arXiv e-prints}\ } (\bibinfo {year} {2024})},\ \Eprint {https://arxiv.org/abs/2408.11902} {arXiv:2408.11902 [quant-ph]} \BibitemShut {NoStop}%
\bibitem [{\citenamefont {Ceccherini-Silberstein}\ \emph {et~al.}(2010)\citenamefont {Ceccherini-Silberstein}, \citenamefont {Scarabotti},\ and\ \citenamefont {Tolli}}]{ceccheriniSymmetric}%
  \BibitemOpen
  \bibfield  {author} {\bibinfo {author} {\bibfnamefont {T.}~\bibnamefont {Ceccherini-Silberstein}}, \bibinfo {author} {\bibfnamefont {F.}~\bibnamefont {Scarabotti}},\ and\ \bibinfo {author} {\bibfnamefont {F.}~\bibnamefont {Tolli}},\ }\href@noop {} {\emph {\bibinfo {title} {Representation Theory of the Symmetric Groups: The Okounkov-Vershik Approach, Character Formulas, and Partition Algebras}}},\ Cambridge Studies in Advanced Mathematics\ (\bibinfo  {publisher} {Cambridge University Press},\ \bibinfo {year} {2010})\BibitemShut {NoStop}%
\bibitem [{\citenamefont {Okounkov}\ and\ \citenamefont {Vershik}(1996)}]{okounkovNewApproachRepresentation1996}%
  \BibitemOpen
  \bibfield  {author} {\bibinfo {author} {\bibfnamefont {A.}~\bibnamefont {Okounkov}}\ and\ \bibinfo {author} {\bibfnamefont {A.}~\bibnamefont {Vershik}},\ }\bibfield  {title} {\bibinfo {title} {A new approach to representation theory of symmetric groups},\ }\href {https://doi.org/10.1007/BF02433451} {\bibfield  {journal} {\bibinfo  {journal} {Selecta Mathematica}\ }\textbf {\bibinfo {volume} {2}},\ \bibinfo {pages} {581} (\bibinfo {year} {1996})}\BibitemShut {NoStop}%
\bibitem [{\citenamefont {Bacon}\ \emph {et~al.}(2006)\citenamefont {Bacon}, \citenamefont {Chuang},\ and\ \citenamefont {Harrow}}]{Bacon2006a}%
  \BibitemOpen
  \bibfield  {author} {\bibinfo {author} {\bibfnamefont {D.}~\bibnamefont {Bacon}}, \bibinfo {author} {\bibfnamefont {I.~L.}\ \bibnamefont {Chuang}},\ and\ \bibinfo {author} {\bibfnamefont {A.~W.}\ \bibnamefont {Harrow}},\ }\bibfield  {title} {\bibinfo {title} {Efficient quantum circuits for schur and clebsch-gordan transforms},\ }\href {https://doi.org/10.1103/physrevlett.97.170502} {\bibfield  {journal} {\bibinfo  {journal} {Physical Review Letters}\ }\textbf {\bibinfo {volume} {97}},\ \bibinfo {pages} {170502} (\bibinfo {year} {2006})},\ \Eprint {https://arxiv.org/abs/quant-ph/0407082} {arXiv:quant-ph/0407082} \BibitemShut {NoStop}%
\bibitem [{\citenamefont {Harrow}(2005)}]{Harrow2005}%
  \BibitemOpen
  \bibfield  {author} {\bibinfo {author} {\bibfnamefont {A.~W.}\ \bibnamefont {Harrow}},\ }\bibfield  {title} {\bibinfo {title} {Applications of coherent classical communication and the schur transform to quantum information theory},\ }\href@noop {} {\bibfield  {journal} {\bibinfo  {journal} {Ph.D thesis, Massachusetts Institute of Technology, Cambridge, MA, 2005}\ } (\bibinfo {year} {2005})},\ \Eprint {https://arxiv.org/abs/quant-ph/0512255} {arXiv:quant-ph/0512255 [quant-ph]} \BibitemShut {NoStop}%
\bibitem [{\citenamefont {{Krovi}}(2019)}]{Krovi2019}%
  \BibitemOpen
  \bibfield  {author} {\bibinfo {author} {\bibfnamefont {H.}~\bibnamefont {{Krovi}}},\ }\bibfield  {title} {\bibinfo {title} {{An efficient high dimensional quantum Schur transform}},\ }\href {https://doi.org/10.22331/q-2019-02-14-122} {\bibfield  {journal} {\bibinfo  {journal} {Quantum}\ }\textbf {\bibinfo {volume} {3}},\ \bibinfo {pages} {122} (\bibinfo {year} {2019})},\ \Eprint {https://arxiv.org/abs/1804.00055} {arXiv:1804.00055 [quant-ph]} \BibitemShut {NoStop}%
\end{thebibliography}

%


\end{document}